\numberwithin{equation}{section}
\setlist[enumerate]{leftmargin=*,label=(\roman*)}
\theoremstyle{plain}
\newtheorem{theorem}{Theorem}[section]
\newtheorem{lemma}[theorem]{Lemma}
\newtheorem{corollary}[theorem]{Corollary}
\theoremstyle{remark}
\newtheorem{definition}[theorem]{Definition}
\newtheorem{example}[theorem]{Example}
\newtheorem{assumption}[theorem]{Assumption}
\newtheorem{remark}[theorem]{Remark}
\newcommand{\eps}{\varepsilon}
\renewcommand{\phi}{\varphi}
\newcommand{\N}{\mathbb{N}}
\newcommand{\Q}{\mathbb{Q}}
\newcommand{\R}{\mathbb{R}}
\renewcommand{\P}{\mathbb{P}}
\newcommand{\E}{\mathbb{E}}
\newcommand{\cB}{\mathcal{B}}
\newcommand{\cE}{\mathcal{E}}
\newcommand{\cH}{\mathcal{H}}
\newcommand{\cL}{\mathcal{L}}
\newcommand{\cN}{\mathcal{N}}
\newcommand{\cQ}{\mathcal{Q}}
\newcommand{\cT}{\mathcal{T}}
\newcommand{\cW}{\mathcal{W}}
\newcommand{\cX}{\mathcal{X}}
\renewcommand{\d}{{\rm d}}
\newcommand{\Lip}{{\rm Lip}}
\newcommand{\Rd}{{\R^d}}
\def\BUC{\rm BUC}
\def\Cb{{\rm C}_{\rm b}}
\def\Cbi{{\rm C}_{\rm b}^\infty}
\def\Lipb{{\rm Lip}_{\rm b}}
\def\one{\mathds{1}}
\def\LI{\cL_I}
\def\LS{\cL_S}
\DeclareMathOperator{\id}{id}
\DeclareMathOperator{\Tr}{Tr}
\begin{document}

\title[Risk-based prices]{Discrete approximation of risk-based prices under volatility uncertainty}

\author{Jonas Blessing}
\address{ETH Zurich, Department of Mathematics, R\"amistra\ss{}e 101, 8092 Zurich, Switzerland}
\email{jonas.blessing@math.ethz.ch}

\author{Michael Kupper}
\address{Department of Mathematics and Statistics, University of Konstanz.}
\email{kupper@uni-konstanz.de}

\author{Alessandro Sgarabottolo}
\address{Center for Mathematical Economics, Bielefeld University}
\email{alessandro.sgarabottolo@uni-bielefeld.de}

\date{\today}
\thanks{Financial support through the Deutsche Forschungsgemeinschaft (DFG, German Research Foundation) -- SFB 1283/2 2021 -- 317210226 is gratefully acknowledged by the third named author.}

\begin{abstract}
We discuss the asymptotic behaviour of risk-based indifference prices of European contingent 
claims in discrete-time financial markets under volatility uncertainty as the number of 
intermediate trading periods tends to infinity.\ The asymptotic risk-based prices form a 
strongly continuous convex monotone semigroup which is uniquely determined by its 
infinitesimal generator and therefore only depends on the covariance of the random factors
but not on the particular choice of the model.\ We further compare the risk-based prices
with the worst-case prices given by the $G$-expectation and investigate their asymptotic 
behaviour as the risk aversion of the agent tends to infinity.\ The theoretical results are 
illustrated with several examples and numerical simulations showing, in particular, that 
the risk-based prices lead to a significant reduction of the bid-ask spread compared to 
the worst-case prices. 

\smallskip
\noindent \emph{Key words:} risk-based pricing, indifference pricing, volatility uncertainty, 
nonlinear semigroup, Chernoff approximation.
\smallskip
		
\noindent \emph{MSC 2020 Classification:}\ 
Primary
91G20,
47H20;
Secondary
91G70,
91G60,
47J25.
\end{abstract}

\maketitle

\section{Introduction}

Computing the prices of financial derivatives strongly depends on the choice of the underlying
model and the associated probability distributions.\ Since these distributions are, in general, 
not precisely known, robust finance takes into account model uncertainty by considering sets of 
possible transition probabilities.\ In this article, we start with a simple asset model in 
discrete time for which derivative prices can easily be computed by backward recursion and analyze 
the asymptotic behaviour of derivative prices as the number of intermediate trading periods tends 
to infinity.\ A classical example of this is the convergence of derivative prices in the binomial 
model to the Bachelier or Black--Scholes prices, see e.g.~\cite[Section~5.7]{foellmer2016finance}.\ 
Discrete financial models are generally straightforward from a modeling perspective and arise
naturally since trading typically occurs at discrete time points. Nonetheless, continuous-time 
models are very popular since they allow for the use of stochastic calculus and PDE methods.\ 
Furthermore, it has recently been shown in~\cite{Criens2024} that superhedging prices of 
discrete-time models with uncertain Markovian transition kernels converge to superhedging prices 
of continuous-time models with drift and volatility uncertainty. Superhedging prices correspond 
to intervals of plausible prices which do not generate arbitrage opportunities, see for example \cite{delbaen2006arbitrage}. 
Although such intervals can naturally be associated with an arbitrage-free bid-ask spread, 
these bounds are, in general, too wide to be informative about the prices of a contingent claim 
in incomplete markets.\ In fact, each agent operating in the market assigns a different subjective
value to the same contingent claim which can violate the bounds prescribed by the superhedging 
or subhedging prices.
    
A classical approach to reduce the bid-ask spread observable in incomplete markets consists of taking 
the preferences of an agent into account by associating a utility function or a risk 
measure to the agent.\ A strand of literature has developed in this direction introducing 
so-called good deals bounds.\ Good deals bounds aim to reduce the no-arbitrage bounds by ruling 
out prices that can be hedged by strategies leading to a high expected utility, i.e., prices 
that represent a deal which is too good.\ Good deals have been measured by means of Sharpe 
ratios~\cite{Cochrane2000Beyond}, gain-loss ratios~\cite{Bernardo2000Gain} or utility 
functions~\cite{Carr2001incomplete,Bjork2006General,Cerny2002good}. Moreover, the setting 
in~\cite{Foellmer1999Quantile} allows for imperfect hedges as long as their level of risk is 
acceptable.\ Closely related to good deal prices are indifference prices which make the agent 
indifferent regarding her utility or risk between selling or keeping the derivative, 
see~\cite{Xu2006Risk,Schied2006Risk,Kloppel2007Indifference,Biagini2011Indifference,BionNadal2020FullyDynamic}.\ 
A connection between the two concepts was first established in~\cite{Jaschke2001Coherent}, 
where it is shown that indifference prices based on coherent risk measures are equivalent 
to the good deal prices in~\cite{Cerny2002good}.\ Furthermore, every convex risk measure 
representing good deal prices is given by an indifference price, see~\cite{Arai2014gooddeal}.\ 
An extensive collection of the literature on good deal and indifference prices can be found 
in~\cite{Carmona2009Indifference}.\ So far, explicit solutions have only been given in dominated 
settings, where the asset process is given with respect to a physical measure $\P$ and the absence 
of good deals translates into restrictions on the set of equivalent local martingale measures 
$\{\Q\}_{\Q \sim \P}$.\ However, since incompleteness in a market is naturally connected with model 
uncertainty and the inability to precisely estimate the distribution of the assets, a more general 
framework taking model uncertainty into account seems to be necessary.\ In addition, as pointed out 
in~\cite{Staum2004Fundamental}, classical indifference pricing often has the flavour of a one-period model.\

In this article, we work in a non-dominated setting, where the uncertain distribution of
the increments of the asset process is determined by a sublinear expectation $\cE[\,\cdot\,]$.
Moreover, we consider an agent who measures the risk associated to a random loss $Y$ by means 
of a robust entropic risk measure 
\[ \rho[Y]:=\frac{1}{\alpha}\log\big(\cE\big[e^{\alpha Y}\big]\big), \] 
where $\alpha>0$ is a risk aversion parameter.\ Since the entropic risk measure is the certainty 
equivalent of the exponential utility, our agent can also be seen as a robust exponential utility 
maximizer whose preferences belong to the class of multiple priors preferences which have been 
introduced and characterized in~\cite{Gilboa1989Maxmin}. The term \textit{robust} refers to the 
consideration of several plausible models which can be derived from the dual representation 
\[ \cE[Y]=\sup_{\Q\in\cQ}\E_\Q[Y] \]
of the sublinear expectation.\ For a brief introduction to sublinear expectations, we refer to 
Subsection~\ref{sec:expectation} and the references therein.\ Following the previously mentioned 
work~\cite{Gilboa1989Maxmin} and its extension~\cite{Maccheroni2006Ambiguity}, the problem of 
robust pricing has gained a great deal of attention. In particular, we refer 
to~\cite{beiglboeck2013model,Dolinsky2014Martingale,Bouchard2015Arbitrage,acciaio2016modelfree,
Beiglboeck2017DualityMOT,Burzoni2019Pointwise} for arbitrage theory and superhedging dualities 
under model uncertainty and to~\cite{Dolinsky2014Robust,Cheridito2017Duality,Dolinsky2017Convex} 
for similar results in the presence of transaction costs and trading constraints.\ The multiple 
priors problem has also been tackled in the context of utility maximization in dominated settings,
see~\cite{Quenez2004Optimal,Gundel2005Robust,Owari2011Robust,Backhoff2016Robust} and in
non-dominated settings, see~\cite{Denis2013Optimal,Nutz2016Utility,Neufeld2018Robust,Neufeld2019Nonconcave}.\
In the specific context of exponential utility, the authors of~\cite{Frittelli2000Minimal,Delbaen2002Exponential} 
prove duality in a continuous-time non-robust setting.\ Furthermore, in~\cite{Rouge2000Pricing,Hu2005Utility} 
the value function of the utility maximization problem is characterized via a quadratic BSDE 
and several properties of the pricing functional such as monotonicity and the asymptotic behaviour 
w.r.t. the risk aversion parameter are derived.\ These results have been extended in~\cite{Mania2005Dynamic} 
and, for a non-dominated setting, in~\cite{Matoussi2015Robust}. For unbounded claims, duality 
and the existence of maximizers have been established in~\cite{Bartl2019Exponential} and, 
under the presence of transaction costs, in~\cite{Deng2020Utility}. 
    
Starting from a $d$-dimensional discrete-time market, we assume that the asset process~$X$ 
has independent increments which are determined by the equation
\[ X_{(k+1)h}-X_{kh}=h\mu+\sqrt{h}\zeta \quad\mbox{for all } k\in\N, \]
where $h>0$ is a fixed step-size, $\mu\in\Rd$ is a deterministic drift and $\zeta$ is a 
$d$-dimensional random vector with mean zero. Given a claim with payoff function $f$ and 
maturity $T=Nh$, the indifference ask price $a_T(f)\in\R$ is uniquely determined by the relation
\[ \inf_{\theta=(\theta_1,\dots,\theta_N)}\rho\big[f(X_T)-a_T(f)-(\theta\cdot X)_T\big]
    =\inf_{\theta=(\theta_1,\dots,\theta_N)}\rho[-(\theta\cdot X)_T], \]
where the random variables $\theta_1,\dots,\theta_N\colon\Omega\to\Theta$ take values in a set 
of available strategies $\Theta\subset\Rd$ such that $\theta_k$ is $X_{(k-1)h}$-measurable for 
all $k=1,\ldots,N$ and 
\[ (\theta\cdot X)_T:=\sum_{k=1}^N\theta_k(X_{kh}-X_{(k-1)h}). \]
Under suitable conditions, one can show that $a_T\colon\Cb\to\Cb$, where $\Cb$ consists of all 
bounded continuous functions $f\colon\Rd\to\R$. Hence, if we require the indifference ask prices
to be time consistent, they are completely determined by the one-step pricing operator $I(h)f:=a_h(f)$
and the equation
\begin{equation} \label{eq:intro1}
 a_T(f)=I(h)^N f \quad\mbox{for all } T=Nh \mbox{ and } f\in\Cb.
\end{equation}
Dynamic consistency has previously been imposed in~\cite{Cochrane2000Beyond} to solve the 
multi-period pricing problem by iterating the solution of the one-period model. Furthermore, in a 
continuous-time setting, the authors of~\cite{Kloppel2007Dynamic} show that \textit{local} 
conditions on the pricing kernels guarantee \textit{nice} global properties of the pricing operator, 
included time-consistency. In Subsection~\ref{sec:dyn.consist}, dynamically consistent pricing 
operators are discussed in more detail.\ We are now interested in the limit behaviour of the multi-step
prices as the number of intermediate trading periods tends to infinity. For that purpose, 
let $t\geq 0$ be a maturity and $h_n:=t/n$ be a sequence of decreasing step-sizes. Then, if the limit 
\begin{equation} \label{eq:intro2}
 a_t^\infty(f):=\lim_{n\to\infty}a_t^{(n)}(f)
\end{equation}
of the multi-step indifference prices exists, it defines the asymptotic risk-based price of a claim 
with payoff function $f$ and maturity $t$. In order to prove that the previous limit exists
and to uniquely characterize the global dynamics of the asymptotic risk-based prices by means 
of their infinitesimal bevahiour, we reformulate equation~\eqref{eq:intro2} as an approximation 
result of a nonlinear semigroup.\ This view point is motivated by the fact that the time consistency 
of the multi-period prices transfers to the limit, i.e.,
\begin{equation} \label{eq:intro3}
 a_{s+t}^\infty(f)=a_s^\infty(a_t^\infty(f)) \quad\mbox{for all } s,t\geq 0.
\end{equation}
Equation~\eqref{eq:intro1} guarantees that the right-hand side of equation~\eqref{eq:intro2}
is given by
\[ a_t^{(n)}(f)=I\big(\tfrac{t}{n}\big)^n f 
    \quad\mbox{for all } t\geq 0, f\in\Cb \mbox{ and } n\in\N, \]
where $I(h)\colon\Cb\to\Cb$ denotes the pricing operator for one period with step size $h\geq 0$. 
Furthermore, these operators have desirable properties such as convexity and monotonicity. 
The question whether a sequence of iterated
operators $(I(\tfrac{t}{n})^n)_{n\in\N}$ on $\Cb$ converges to a limit operator $S(t)\colon\Cb\to\Cb$ 
such that $(S(t))_{t\geq 0}$ is a strongly continuous convex monotone semigroup has systematically
been addressed in a series of recent articles, see~\cite{BDKN22,BK22,BK23,BKN23}.
Applying these results to the present setting shows that the limit
\[ S(t)f:=\lim_{n\to\infty}I\big(\tfrac{t}{n}\big)^n f \]
exists for all $t\geq 0$ and $f\in\Cb$.\ Moreover, the family $(S(t))_{t\geq 0}$ is a strongly 
continuous convex monotone semigroup which is uniquely determined by its infinitesimal generator 
\[ Af:=\lim_{h\downarrow 0}\frac{S(h)f-f}{h}. \]
The asymptotic risk-based prices are then defined by 
\[ a_t^\infty(f):=S(t)f \quad\mbox{for all } t\geq 0 \mbox{ and } f\in\Cb. \]
Note that, by equation~\eqref{eq:intro2} and~\eqref{eq:intro3}, these prices are time consistent 
and given as the limit of multi-period prices in a discrete model.\ In addition, the prices
have desirable properties such as convexity and monotonicity w.r.t. the payoff function.\
So far, we did not address the question whether the asymptotic risk-based prices depend on
the particular choice of the discrete model. In case of the previously mentioned approximation 
of the Bachelier prices, the central limit theorem guarantees that the Bachelier prices only
depend on the covariance of the discrete model but not on the particular choice of its
distribution.\ This observation is a particular case of the more general statement that 
strongly continuous semigroups are uniquely determined by their generators.\ For linear 
semigroups the uniqueness is classical result and for convex monotone semigroups it has 
been proven in~\cite{BDKN22,BK23}.\ In the present article, the generator can explicitly 
be computed as
\[ Af=\inf_{\theta\in\Theta}\Big(G\big(D^2f+\alpha(Df-\theta)(Df-\theta)^T\big)+(Df-\theta)^T\mu\Big)
    -\inf_{\theta\in\Theta}\Big(G\big(\alpha\theta\theta^T\big)-\theta^T\mu\Big), \]
for sufficiently smooth functions $f$,
where $\Theta\subset\Rd$ is a set of available trading strategies and the function
\[ G\colon\R^{d\times d}\to\R,\; a\mapsto\frac{1}{2}\cE\big[\zeta^T a\zeta\big] \]
describes the covariance of the random factors.\ In particular, the asymptotic risk-based 
prices only depend on the covariance of the discrete model but not on the particular choice 
of its distribution.\ In Subsection~\ref{sec:prel.chernoff}, we explain the semigroup approach 
in more detail and fix the precise terminology used throughout the rest of this article.\ 
Furthermore, at the beginning of Section~\ref{sec:proof}, we recall the precise statements 
from~\cite{BDKN22} on which the proofs of the main results in this article are based. 

In the following, the main results of this article are described in more detail.\ In order to 
guarantee the well-posedness of the one-step pricing operators and to exlcude
doubling strategies as the number of intermediate trading periods tends to infinity, 
we first impose a volume constraint on the set of available trading strategies, see Theorem~\ref{thm:main}. 
Since the asymptotic risk-based prices are uniquely determined by the covariance 
of the random factors, a non-degeneracy condition guarantees that the volume constraint 
can be arbitrarily large, see Theorem~\ref{thm:unbounded}.\ This way, we can define 
asymptotic risk-based prices involving unbounded sets of trading strategies as limits
of volume constrained prices. Modeling the set $\Theta$ allows to impose constraints on 
the available trading strategies such as volume or short-selling constraints, non-tradable 
assets, etc.\ However, in the absence of trading constraints, the generator does not depended 
on the gradient and is given by
\[ Af=\inf_{\theta\in\Rd}\Big(G\big(D^2f+\alpha\theta\theta^T\big)-\theta^T\mu\Big) 
    -\inf_{\theta\in\Rd}\Big(G\big(\alpha\theta\theta^T\big)-\theta^T\mu\Big). \]
In particular, the risk-based prices taking into account the attitude of the agent towards 
risk are always dominated by the worst-case prices associated to the $G$-expectation, 
see Corollary~\ref{cor:Rd}.\ So far, in the absence of trading constraints, we did not 
define the asymptotic risk-based prices as the limit of multi-step prices, but as the limit
of volume constrained asymptotic prices. However, under 
additional conditions on the distribution, the asymptotic prices can be obtained as the 
limit of the unconstrained multi-step prices, see Theorem~\ref{thm:unbounded2}.\ Finally, we are interested 
in the asymptotic behaviour of the prices as the risk aversion of the agent tends to infinity.
For a dominated continuous-time framework, it has been shown in~\cite{Rouge2000Pricing,Mania2005Dynamic} 
that the value of the utility maximization problem converges 
to the superhedging price as the risk aversion tends to infinity.\ In a non-dominated 
discrete-time setting, the same result has later been obtained in~\cite{Bartl2019Exponential,Deng2020Utility}, see also~\cite{Carassus2007Optimal,Blanchard2021Convergence} 
for similar results.
Moreover, for finite state spaces and for downside-sensitive preferences, it has been 
shown in~\cite{Cerny2002good} that the risk-based price bounds approach the no-arbitrage 
ones as the set of desirable claims gets smaller.\
In this article, we impose a uniform ellipticity condition on the covariance of the random factors
to show that the asymptotic risk-based prices converge to the worst-case prices associated to the 
$G$-expectation as the risk aversion tends to infinity, see Theorem~\ref{thm:alpha}.

Apart from the novel theoretical insights regarding the converge of the multi-step prices
to the asymptotic risk-based prices, Chernoff-type approximations also provide a tool for 
numerical approximations which is illustrated in Section~\ref{sec:numerics}.\ For instance, 
since the asymptotic risk-based prices only depend on the covariance structure of the model
but not on the particular choice of the distribution, it is sufficient to recursively compute
the indifference prices in the binomial model for every trading period. This also works well 
in the presence of model uncertainty, see Subsection~\ref{sec:numerics.bin} and Subsection~\ref{sec:num.unc.bin}.\ 
Furthermore, the numerical simulations in Subsection~\ref{sec:num.linear} confirm that the
asymptotic prices only depends on the covariance of the increments although the 
precise bounds for the difference between the multi-step prices and the asymptotic prices might 
depend on the choice of the model.\ Convergence rates for Chernoff-type approximations have been 
investigated in~\cite{BJKL23} but addressing this question in the present context is beyond the 
scope of this article.\ So far, we focused on ask prices derived from the view point of the seller.\ 
Similarly, from the perspective of the buyer, one can derive the corresponding bid prices 
$b_t^\infty(f)$ which are related to the ask prices via the equation $b_t^\infty(f)=-a_t^\infty(-f)$.\
In particular, the bid-ask spread for the risk-based prices is always smaller than the the bid-ask
spread for the worst-case prices associated to the $G$-expectation, see Subsection~\ref{sec:bid-ask}.

The rest of this article is organized as follows:\ in Section~\ref{sec:setup}, we introduce the 
market model, the asset distribution given by a sublinear expectation, dynamically consistent 
pricing operators and the necessary terminology regarding strongly continuous convex monotone 
semigroups. Section~\ref{sec:indifference} first introduces the agent's preferences and 
indifference pricing relations before stating the main results.\ Section~\ref{sec:numerics} 
contains several examples in order to illustrate the abstract results including numerical simulations.\
The proofs of the main results are given in Section~\ref{sec:proof}.\ Finally, Appendix~\ref{app:convex}
contains a basic convexity estimates and elementary properties of sublinear expectations 
while Appendix~\ref{app:mgf} contains some exponential moment estimates.

\section{Market model and pricing operators} 
\label{sec:setup}

We consider $d\in\N$ financial assets $X=(X^1,\dots,X^d)^T$ which are traded at discounted 
prices $X_{kh}$ at discrete time-points $\mathcal{T}(h):=\{kh\colon k\in\N_0\}$ for some 
trading period $h>0$.\footnote{Here, the superscript $T$ denotes the transpose of a vector or a matrix.}
All prices/payoffs are discounted by expressing them in terms of a num\'eraire $S^0$ which 
is strictly positive in all possible states at all considered trading times, i.e., the 
discounted value of a random payoff $Z$ at time $t\in\mathcal{T}(h)$ is given by $Z/S^0_t$.
Furthermore, the asset prices start at $X_0^x:=x\in\R^d$ and follow the dynamics 
\begin{equation}\label{ass:dynamics}
 X_{(k+1)h}^x:=X_{kh}^x+h\mu+\sqrt{h}\zeta_{k+1}\quad\mbox{for all }k\in\N,
\end{equation}
where $\mu\in\R^d$ is a deterministic drift and $(\zeta_k)_{k\in\N}$ are i.i.d. random vectors 
$\zeta_k\colon \Omega\to\R^d$ on a sublinear expectation space $(\Omega,\cH,\cE)$ which will
be specified in the following subsection.

\subsection{Asset distribution and sublinear expectations}
\label{sec:expectation}

A sublinear expectation space $(\Omega,\cH,\cE)$ consists of a set $\Omega$, a pointwise ordered 
linear space $\cH$ of random variables $Y\colon\Omega\to\R$ with $c\one_\Omega\in\cH$ and $|Y|\in\cH$ 
for all $c\in\R$ and $Y\in\cH$ and a sublinear expectation $\mathcal{E}\colon\cH\to\R$ satisfying
 \begin{enumerate}
 \item $\cE[c\one_\Omega]=c$ for all $c\in\R$, 
 \item $\cE[Y]\leq\cE[Z]$ for all $Y,Z\in\cH$ with $Y\leq Z$, 
 \item $\cE[Y+Z]\leq \cE[Y]+\cE[Z]$ for all 
  $Y,Z\in\cH$,
  \item $\cE[\lambda Y]=\lambda\cE[Y]$ for all $Y\in\cH$ and $\lambda\ge 0$.
 \end{enumerate}
The sublinear expectation is called continuous from above if $\cE[X_n]\downarrow 0$ for all 
sequences $(X_n)_{n\in\N}$ with $X_n\downarrow 0$. Note that the properties~(i) an~(iii) 
imply cash invariance, i.e., it holds $\cE[Y+c]=\cE[Y]+c$ for all $Y\in\cH$ and $c\in\R$. 
Sublinear expectations were introduced by Peng to incorporate model uncertainty of the asset 
distribution, see~\cite{Peng19} for a detailed discussion. Indeed, the formula 
\begin{equation} \label{eq:dual}
 \cE[Y]:=\sup_{\Q\in\cQ}\E_\Q[Y] \quad\text{for all } Y\in\cH 
\end{equation}
defines a sublinear expectation, where the supremum is taken over an uncertainty set~$\cQ$ 
of probability measures on $(\Omega,\sigma(\cH))$.\ On the other hand, every sublinear expectation 
which is continuous from above admits such a representation, see~\cite[Theorem~1.2.2]{Peng19}. 
Sublinear expectations are also closely related to several other concepts such as coherent risk 
measures in mathematical finance~\cite{ADEH}, upper expectations in robust statistics~\cite{Huber} 
and upper coherent previsions in the theory of imprecise probabilities~\cite{Walley}. 
In a dynamic setting sublinear expectations are linked to BSDEs \cite{elkaroui1997BSDEs,gianin2006gexpectations,soner2012wellposedness}.
Instead of specifying the space $\cH$, we rather assume that it is rich enough to guarantee that
$(\zeta_k)_{k\in\N}\subset\cH^d$ and that all the terms appearing in the following are again elements of $\cH$.
In particular, we assume that $f(\zeta_1,\ldots,\zeta_n)\in\cH$ for all $n\in\N$ and $f\in\Lipb((\R^d)^n)$, 
where $\Lipb((\R^d)^n)$ denotes the space of all bounded Lipschitz-continuous functions $f\colon (\R^d)^n\to\R$.
The random vectors $(\zeta_k)_{k\in\N}$ are supposed to be independent and identically distributed (i.i.d.) 
meaning that 
\[ \cE[f(\zeta_m)]=\cE[f(\zeta_n)] \quad\mbox{for all } f\in\Lipb(\R^d) \mbox{ and } m,n\in\N \] 
and that $\zeta_{n+1}$ is independent of $\zeta_1,\dots,\zeta_n$ for all $n\in\N$, i.e.,  
\[ \cE\left[f(\zeta_1,\dots,\zeta_n,\zeta_{n+1})\right]
    =\cE\left[\cE[f(z,\zeta_{n+1})]\big|_{z=(\zeta_1,\dots,\zeta_n)}\right]
   	\quad\mbox{for all } f\in\Lipb((\R^d)^n\times\R^d). \] 
Furthermore, the random vectors have no mean uncertainty, i.e., 
\begin{equation}\label{eq:nomeanunceratinty}
\cE[a^T\zeta_1]=0\quad\mbox{for all }a\in\R^d.
\end{equation} 
For the sake of illustration, we provide several examples for the uncertain asset distribution
given by the functional $\Cb\to\R,\; f\mapsto\cE[f(\zeta_1)]$, where the space $\Cb:=\Cb(\Rd)$
consists of all bounded continuous functions $f\colon\Rd\to\R$.\ Since we have already assumed that
$f(\zeta_1)\in\cH$ for all $f\in\Lipb:=\Lipb(\Rd)$, the term $\cE[f(\zeta_1)]$ is also well-defined 
for all $f\in\Cb$ if the sublinear expectation $\cE$ is continuous from above. The latter is valid
if the supremum in equation~\eqref{eq:dual} is taken over a tight set of probability measures.

\begin{example} \label{ex:sub.expect} 
\begin{enumerate}
\item Without uncertainty, the measure $\nu:=\P\circ\zeta_1^{-1}$ determines 
 the linear expectation
 \[ \cE[f(\zeta_1)]:=\E_\P[f(\zeta_1)]=\int_{\Rd}f\,\d\nu \quad \text{for all } f\in\Cb \]
 and condition~\eqref{eq:nomeanunceratinty} reduces to $\int_{\R^d} x \,\nu(\d x)=0$.
\item \label{ex:parametric} 
 Let $\nu$ be a probability measure on $(\Rd,\cB(\Rd))$ and $\Lambda\subset\Rd$ be a bounded 
 set.\footnote{As usual, $\cB(\Rd)$ denotes the Borel $\sigma$-algebra on $\Rd$.}
 Perturbing the values of $\zeta_1$ by $\pm\lambda$ leads to a sublinear distribution 
 incorporating parametric uncertainty given by
 \[ \cE[f(\zeta_1)]:=\sup_{\lambda\in\Lambda}\frac{1}{2}\int_{\Rd}f(x+\lambda)+ f(x-\lambda)\,\nu(\d x) 
 \quad \text{for all } f\in \Cb. \]
 Condition~\eqref{eq:nomeanunceratinty} is satisfied if $\int_{\R^d}x\,\nu(\d x)=0$ and for
 $\Lambda=\{0\}$ we recover the linear case without uncertainty. As a particular one-dimensional 
 example of this parametrization, we choose the Dirac measure $\nu:=\delta_0$ and $\Lambda:=[\sigma_0-u,\sigma_0+u]$,
 where $\sigma_0>0$ is a reference volatility and $u\in [0,\sigma_0]$ is the level of uncertainty. 
 Then, the sublinear expectation
 \[ \cE[f(\zeta_1)]:=\sup_{\sigma\in [\sigma_0-u,\sigma_0+u]}\frac{1}{2}\big(f(\sigma)+f(-\sigma)\big)
    \quad \text{for all } f\in\Cb \]
 describes a binomial model with uncertain volatility. We also want to mention that, for every 
 $\lambda\in\Lambda$, the distribution $\nu$ is transformed to the distribution 
 \[ \nu\kappa_\lambda(B):=\int_{\R^d} \kappa_\lambda(x,B)\,\nu(\d x) \quad\text{for all } B\in\cB(\Rd) \]
 through the weak transport plan $\kappa_\lambda(x,B):=\frac{1}{2}\delta_{x+\lambda}(B)+\frac{1}{2}\delta_{x-\lambda}(B)$.
 For uncertainty sets based on weak optimal transport, we refer to \cite{kupper2023risk} and the references therein.
\item \label{ex:nonparametric} 
 In recent years, non-parametric uncertainty has been becoming increasingly popular and has
 been explored extensively, for instance, in the field of distributionally robust optimization, 
 see, e.g,~\cite{bartl2020computational,Blanchet2019Quantifying,Gao2023DRSO,esfahani2018data,
 pflug2007ambiguity,Wozabal2012frameworkambiguity,zhao2018}. The analytical tractability of 
 transport distances such as the Wasserstein distance allows for dual representations 
 and explicit sensitivity analysis, see~\cite{bartl2021sensitivity,Bartl2023SensitivityAW}. 
 Let $\nu$ be a reference distribution with $\int_{\R^d}x\,\nu(\d x)=0$ and 
 $\int_{\R^d}|x|^p\,\nu(\d x)<\infty$ for some $p>2$. We define
 \[ \cE[f(\zeta_1)]:=\sup_{\cW_0(\nu,\tilde\nu)\leq u}\int_{\Rd} f\, \d\tilde\nu 
    \quad\text{for all } f\in\Cb, \]
 where $u\geq 0$ is the level of uncertainty and the transport distance $\cW_0$ is given by 
 \[ \cW_0(\nu,\tilde\nu):=\left(\inf_{\pi\in\Pi_0(\nu,\tilde\nu)}
    \int_{\R^d\times\R^d}|x-y|^p\,\pi(\d x,\d y)\right)^\frac{1}{p} \]
 with the infimum being taken over the set $\Pi_0(\nu,\tilde\nu)$ of all couplings between 
 $\nu$ and~$\tilde{\nu}$ satisfying $\int_{\R^d\times\R^d} x^T  a(y-x)\,\pi(\d x,\d y)=0$ 
 for all symmetric $d\times d$-matrices $a$. Alternatively, the set $\Pi_0(\nu,\tilde\nu)$ 
 can be replaced by the set of all martingale couplings. For details, we refer to~\cite[Section 4.2]{BK22}. 
\end{enumerate}
\end{example}

\subsection{Continuous time limits and Chernoff-type approximations}
\label{sec:prel.chernoff}

So far, we considered asset prices following the dynamics given by equation~\eqref{ass:dynamics}
in a discrete-time framework with fixed step-size $h>0$. Now, we are interested in the limit
behaviour of the asset dynamics as the number of intermediate trading periods tends to infinity. 
Let $t\geq 0$, $x\in\Rd$ and $h_n:=t/n$ for all $n\in\N$. We define $X_0^{n,x}:=x$ and
\begin{equation}
 X_{(k+1)h_n}^{n,x}:=X_{kh}^{n,x}+h_n\mu+\sqrt{h_n}\zeta_{k+1} \quad\mbox{for all } k,n\in\N.
\end{equation}
It follows from Peng's central limit theorem for sublinear expectations, 
see~\cite{HJLP2023,Peng07,Peng08,Peng08b}, that 
\begin{equation} \label{eq:clt}
 \cE\big[f\big(X_t^{n,x}\big)\big]
 =\cE\left[f\Big(x+t\mu+\sqrt{\frac{t}{n}}\zeta_{1}+\cdots+\sqrt{\frac{t}{n}}\zeta_{n}\Big)\right]
 \to\cE[f(B_t^x)] 
\end{equation}
for all $f\in\Cb$, where $B_t^x$ is $G$-normally distributed with $G(a,b):=\frac{t}{2}\cE[\zeta_1^T a\zeta_1]+(x+tb^T)\mu$
for all $a\in\R^{d\times d}$ and $b\in\Rd$. In the linear case  $\cE[\cdot]=\E_\P[\cdot]$,
we simply obtain $B_t^x=x+t\mu+\sqrt{t}\xi$, where $\xi\sim\cN(0,\Sigma)$ is normally distributed 
with covariance matrix $\Sigma:=\E_\P[\zeta_1\zeta_1^T]$. While Peng's definition of the $G$-normal 
distribution relies on the existence and uniqueness of viscosity solutions for the fully nonlinear PDE 
\begin{equation} \label{eq:Gexp}
 \partial_t u(t,x)=G(D^2 u(t,x),Du(t,x)), \quad u(0,x)=f(x)
\end{equation}
by setting $\cE[f(B_t^x)]:=u(t,x)$, in this article, we will take an equivalent semigroup perspective. 
In the linear case, the family $(B_t^x)_{t\geq 0}$ is a Brownian motion whose linear transition 
semigroup is given by the heat semigroup 
\[ (S(t)f)(x):=\E_\P[f(B_t^x)]=\lim_{n\to\infty}\E_\P[f(X^{n,x}_t)] \]
for all $t\geq 0$, $f\in\Cb$ and $x\in\Rd$. The semigroup is uniquely determined by its generator
\[ (Af)(x)=\frac{1}{2}\Tr(\Sigma D^2f(x))+Df(x)^T\mu \]
for all $f\in\Cb^2$ and $x\in\Rd$, where the space $\Cb^2$ consists of all bounded twice continuously
differentiable functions $f\colon\Rd\to\R$ with bounded first and second derivative.\ In the sublinear 
case, the operators
\[ (S(t)f)(x):=\lim_{n\to\infty}\cE[f(X^{n,x}_t)] \]
form a semigroup of sublinear operators $S(t)\colon\Cb\to\Cb$ which is uniquely determined
by its generator 
\[ (Af)(x)=G(D^2f(x),Df(x))=\frac{1}{2}\cE[\zeta_1^T D^2f(x)\zeta_1]+Df(x)^T\mu \]
for all $f\in\Cb^2$ and $x\in\Rd$, see~\cite[Theorem~4.1]{BK22}. Furthermore, 
the unique viscosity solution of equation~\eqref{eq:Gexp} is given by $u(t,x):=(S(t)f)(x)$,
see~\cite[Theorem 6.2]{GNR22}. Subsequently, we explain the semigroup approach in more detail.

Throughout this article, the space $\Cb$ is endowed with the mixed topology between the 
supremum norm $\|\cdot\|_\infty$ and the topology of uniform convergence on compact sets,
i.e., the strongest locally convex topology on $\Cb$ which coincides on $\|\cdot\|_\infty$-bounded
sets with the topology of uniform convergence on compact sets. In particular, for every sequence
$(f_n)_{n\in\N}\subset\Cb$ and $f\in\Cb$, it holds $f_n\to f$ if and only if 
\[ \sup_{n\in\N}\|f_n\|_\infty<\infty \quad\mbox{and}\quad \lim_{n\to\infty}\|f-f_n\|_{\infty,K}=0 \]
for all compact subsets $K\subset\Rd$ and $\|f\|_{\infty,K}:=\sup_{x\in K}|f(x)|$, see~\cite[Proposition~B.2]{GNR22}.
In the following, if not stated otherwise, all limits in $\Cb$ are taken w.r.t. the mixed topology 
and compact subsets are denoted by $K\Subset\Rd$. Although the mixed topology is not metrizable,
it has been observed in~\cite{Nendel22} that, for monotone operators $S\colon\Cb\to\Cb$,
sequential continuity is equivalent to continuity which is further equivalent to continuity 
on norm-bounded sets. For more details on the mixed topology, we refer to~\cite[Appendix~B]{GNR22} 
and the references therein. Since functions are ordered pointwise here, 
 an operator $S\colon\Cb\to\Cb$ is called monotone if $(Sf)(x)\leq (Sg)(x)$ for all $x\in\Rd$ 
and $f,g\in\Cb$ with $f(y)\leq g(y)$ for all $y\in\Rd$ and convex if 
$(S(\lambda f+(1-\lambda)g))(x)\leq \lambda (Sf)(x)+(1-\lambda)(Sg)(x)$ for all $f,g\in\Cb$, 
$\lambda\in [0,1]$ and $x\in\Rd$. The following definition characterizes the semigroups which will
be studied in this article.

\begin{definition}
 A family $(S(t))_{t\geq 0}$ of operators $S(t)\colon\Cb\to\Cb$ is called strongly continuous
 convex monotone semigroup on $\Cb$ if the following conditions are satisfied:
 \begin{enumerate}
  \item $S(t)$ is convex and monotone with $S(t)f_n\downarrow 0$ for all $t\geq 0$ and $f_n\downarrow 0$,
  \item $S(0)f=f$ and $S(s+t)f=S(s)S(t)f$ for all $s,t\geq 0$ and $f\in\Cb$,
  \item $\sup_{t\in [0,T]}\|S(t)r\|_\infty<\infty$ for all $r,T\geq 0$,
  \item $f=\lim_{t\downarrow 0}S(t)f$ for all $f\in\Cb$. 
 \end{enumerate}
 Furthermore, the generator of the semigroup is defined by
 \[ A\colon D(A)\to\Cb,\; f\mapsto\lim_{h\downarrow 0}\frac{S(h)f-f}{h}, \]
 where the domain consists of all $f\in\Cb$ such that the previous limit exists. 
\end{definition}

It has recently been shown by Blessing et al., see~\cite{BDKN22}, that strongly continuous convex 
monotone semigroups are uniquely determined by their so-called upper $\Gamma$-generators defined 
on their upper Lipschitz sets. While this result is convincing due to its generality, in many
applications, the generator $Af$ can only be determined for sufficiently smooth functions $f$.
However, under additional conditions, the semigroup is already uniquely determined by the evaluation
of its generator at smooth functions or even only smooth functions with compact support,
see~\cite{BDKN22,BK22,BKN23}. A precise statement which is sufficient for the applications 
presented in this article is given in Theorem~\ref{thm:unique}. The second main result about strongly 
continuous convex monotone semigroups is that they allow for Chernoff-type approximations of the form 
\begin{equation} \label{eq:chernoff}
 S(t)f=\lim_{n\to\infty}\big(I\big(\tfrac{t}{n}\big)^n f\big)(x).
\end{equation}
Here, the starting point is a family $(I(t))_{t\geq 0}$ of one-step operators $I(t)\colon\Cb\to\Cb$ 
from which we derive the iterated operators $I(t/n)^n:=I(t/n)\circ\ldots\circ I(t/n)$. 
Under suitable stability conditions, the limit in equation~\eqref{eq:chernoff} exists and defines 
a strongly continuous semigroup $(S(t))_{t\geq 0}$ on $\Cb$ which is uniquely determined by the 
infinitesimal behaviour of $(I(t))_{t\geq 0}$. To be precise, it holds 
\[ Af=I'(0)f:=\lim_{h\downarrow 0}\frac{I(h)f-f}{h} \]
for smooth functions $f$ and the previously mentioned comparison principle can be applied. 
Chernoff-type approximations have been studied in~\cite{BK22,BK23,BDKN22,BKN23} and in 
Section~\ref{sec:proof} we recall the precise statement on which the proofs of the main 
results of this article are based.

\begin{example}
 Let $(I(t)f)(x):=\cE[f(x+t\mu+\sqrt{t}\zeta_1)]$ for all $t\geq 0$, $f\in\Cb$ and $x\in\Rd$,
 where the constant drift $\mu\in\Rd$ and the random factors $(\zeta_k)_{k\in\N}\subset\cH$
 are the same as before. It follows from Taylor's formula that
 \[ (I'(0)f)(x)=\frac{1}{2}\cE[\zeta_1^T D^2f(x)\zeta_1]+Df(x)^T\mu \]
 for all $f\in\Cb^2$ and $x\in\Rd$. Furthermore, the stability conditions required for the 
 Chernoff-type approximations are satisfied. Hence, for every $t\geq 0$ and $f\in\Cb$, the limit 
 \[ S(t)f:=\lim_{n\to\infty}I\big(\tfrac{t}{n}\big)^n f \]
 exists and defines a strongly continuous convex monotone semigroup on $\Cb$ which is uniquely 
 determined by its generator
 \[ (Af)(x)=\frac{1}{2}\cE[\zeta_1^T D^2f(x)\zeta_1]+Df(x)^T\mu \]
 for all $f\in\Cb^2$ and $x\in\Rd$. For details, we refer to~\cite[Theorem~4.1]{BK22}\footnote{The
 result in~\cite{BK22} is only stated with $\mu=0$ but the argumentation remains valid when adding 
 a constant drift.}. Moreover, by using that the random factors $(\zeta_k)_{k\in\N}$ are i.i.d., 
 one can show that 
 \begin{equation} \label{eq:chernoff2}
  \big(I\big(\tfrac{t}{n}\big)f\big)(x)
  =\cE\left[f\Big(x+t\mu+\sqrt{\frac{t}{n}}\zeta_{1}+\cdots+\sqrt{\frac{t}{n}}\zeta_{n}\Big)\right]
 \end{equation}
 for all $t\geq 0$, $f\in\Cb$, $x\in\Rd$ and $n\in\N$. Since~\cite[Theorem~6.2]{GNR22} guarantees
 that the unique viscosity solution of equation~\eqref{eq:Gexp} is given by $u(t,x):=(S(t)f)(x)$,
 a random variable $Y\in\cH$ is G-normally distributed with $G(a,b):=\frac{1}{2}\cE[\zeta_1^T a\zeta_1]+b^T\mu$ 
 if and only if $\cE[f(Y)]=(S(1)f)(0)$ for all $f\in\Cb$. In this way, we recover a variant of Peng's 
 central limit theorem as a particular case of a Chernoff-type approximation. In addition, the 
 semigroup approach used in~\cite{BK22} allows to replace the sublinear expectation by a convex 
 expectation without significantly changing the proof. In contrast, the earlier results 
 in~\cite{HJLP2023,Peng07,Peng08,Peng08b} are only stated for the sublinear case. The same 
 is true for the convergence rates in~\cite{BJKL23} based on the semigroup approach in comparison 
 to the ones based on monotone schemes for viscosity solutions in~\cite{HJL21,HL19,Krylov2020,Song20}.

 We conclude this brief illustration of the semigroup approach by picking up the two sublinear expectations
 from Example~\ref{ex:sub.expect}. In the sequel, we denote by $\cE_1$ the sublinear expectation from 
 Example~\ref{ex:sub.expect}(ii) describing parametric uncertainty with $\Lambda:=\{\lambda\in\Rd\colon |\lambda|\leq u\}$
 and by $\cE_2$ the sublinear expectation from Example~\ref{ex:sub.expect}(iii) describing non-parametric 
 uncertainty with the same parameter $u\geq 0$. We define  
 \[ (I(t)f)(x):=\cE_1[f(x+t\mu+\sqrt{t}\zeta_1)] \quad\mbox{and}\quad 
    (J(t)f)(x):=\cE_2[f(x+t\mu+\sqrt{t}\zeta_1)] \] 
 and denote by $(S(t))_{t\geq 0}$ and $(T(t))_{t\geq 0}$ the corresponding semigroups with
 generators $A$ and $B$, respectively. An explicit computation shows that
 \[ (Af)(x)=(Bf)(x)
    =\frac{1}{2}\Tr(\Sigma D^2 f(x))+Df(x)^T\mu+\sup_{|\lambda|\leq u}\frac{1}{2}\Tr(\lambda\lambda^T D^2 f(x)) \]
 with $\Sigma:=\int_{\Rd}yy^T\,\nu(\d y)$ for all $f\in\Cb^2$ and $x\in\Rd$. The linear part on the 
 right-hand side is the generator of the reference model, i.e., a Bachelier model with covariance 
 matrix~$\Sigma$ and drift~$\mu$, whereas the supremum incorporates the model uncertainty. In particular,
 the generator does not depend on the specific type of uncertainty as long as the amount of uncertainty 
 is the same. The comparison principle guarantees that this observation is also valid for the semigroups, 
 i.e., it holds $S(t)f=T(t)f$ for all $t\geq 0$ and $f\in\Cb$. For details, we refer to~\cite[Subsection~4.2]{BK22}. 
\end{example}

\subsection{Dynamically consistent pricing operators}
\label{sec:dyn.consist}

In this subsection, we identify desirable properties for the pricing operators and focus on the pricing 
of European options with payoffs $f(X_t)$.\ As before, we consider discrete trading times 
$\cT(h)=\{kh\colon k\in\N_0\}$ for some trading period $h>0$ and recall that all prices 
are discounted. For every $s,t\in\cT(h)$, we denote by $(p_{s,t}f)(X_s)$ the price at time $s$ 
of the contingent claim $f(X_t)$  in the state $X_s$. For every $s,t,u,v\in\cT(h)$ with 
$s\leq u\leq t$ and $f,g\in\Cb$, we assume that 
\begin{itemize}
 \item[(p1)] $p_{s,t}\colon\Cb\to\Cb$ with $p_{s,s}=\id_{\Cb}$,
 \item[(p2)] $p_{s,t} 0=0$ and $p_{s,t}(f+c\one_{\Rd})=p_{s,t}f+c$ for all $c\in\R$,
 \item[(p3)] $f\leq \tilde f$ implies $p_{s,t}f\leq p_{s,t}\tilde f$, 
 \item[(p4)] $p_{s,u}(p_{u,t}f)=p_{s,t}f$,
 \item[(p5)] $p_{s,s+v}f=p_{t,t+v}f$.
\end{itemize}
The conditions~{\rm (p1)}-{\rm (p3)} have a clear interpretation and are desirable for any pricing 
operator. Since the underlying dynamics $(X_t)_{t\in\cT(h)}$ is a homogeneous Markov process, we 
require in condition~{\rm (p5)} that the pricing operators are also homogeneous, i.e., conditioned 
that the market is at time $s$ and $t$ in the same state, the prices of $f(X_{s+u})$ and $f(X_{t+u})$ 
coincide. In the sequel, the operator $p_{s,s+t}$ is therefore simply denoted by $p_t$.
Furthermore, as we will discuss below, condition~{\rm (p4)} guarantees the consistency of 
the extended pricing operators which reduces to the semigroup property, i.e., $p_s(p_t f)=p_{s+t}f$ 
for all $s,t\in\cT(h)$ and $f\in\Cb$. In particular, the pricing operators $(p_t)_{t\in\cT(h)}$ 
are fully determined by the one-step pricing operators $I(h)f:=p_h f$ and by the equation
\[ p_{kh}f=I(h)^{k}f \quad\mbox{for all } k\in\N. \]

We next discuss an extension of pricing operators to path-dependent options. To do so, let $\mathcal{X}:=\bigcup_{t\in\cT(h)}\mathcal{X}_t$, where $\cX_{kh}$ denotes the space of all bounded continuous
functions $g\colon (\Rd)^{k+1}\to\R$. Then, $g\in \cX$ represents a path-dependent option $g(X_0,\ldots,X_{kh})$ for some $k\in\N_0$.
Suppose that, for every $s\in\cT(h)$, there exists $\hat{p}_{s}\colon\cX\to\cX_s$ with
\[ \big(\hat{p}_{s}g\big)(x_0,\dots, x_s)=\big(p_{t-s}f\big)(x_s) \]
for all $g\in\cX$ of the form $g(x_0,\ldots, x_t,\dots, x_u)=f(x_t)$ 
for $t,u\in\cT(h)$ with $s\le t\le u$ and $f\in\Cb$. In addition, for every $s,t\in\cT(h)$ with $s\leq t$ and $g,\tilde g\in\cX$, we assume that 
\begin{itemize}
 \item[($\hat{\rm p}$1)] $\hat{p}_{s}0=0$ and $\tilde g\in\cX_s$ implies $\hat{p}_{s}(g+\tilde g)=\hat{p}_{s}g+\tilde g$,
 \item[($\hat{\rm p}$2)] $g\leq \tilde g$ implies $\hat{p}_{s}g\leq\hat{p}_{s}\tilde g$,
\item[($\hat{\rm p}$3)] $\hat{p}_{t}g\leq\hat{p}_{t} \tilde g$ implies $\hat{p}_{s}g\leq\hat{p}_{s}\tilde g$. 
\end{itemize}
 If condition~($\hat{\rm p}$3) were violated, there would exist a time and a state where $\tilde g$ is priced strictly higher than $g$, even though $\tilde g$ has a lower price than $g$ in all possible states at some future time. This would result in time inconsistencies in the prices. Moreover, it is well known that condition~($\hat{\rm p}$3) implies the dynamic programming principle or Bellman's principle, see e.g.~\cite{CK11} and the respective references after Definition 2.2. In our context, the following statement holds. 
\begin{lemma}
Under the assumption that conditions~($\hat{\rm p}${\rm 1}) and ($\hat{\rm p}${\rm 2}) are satisfied, 
condition~($\hat{\rm p}${\rm 3}) is equivalent to 
\begin{itemize}\label{lem:ext pricing}
 \item[{\rm ($\hat{\rm p}$3')}]  $\hat{p}_{s}(\hat{p}_{t}g)=\hat{p}_{s}g$ for all $s,t\in\cT(h)$ 
with $s\leq t$ and $g\in\cX$.
\end{itemize}
In particular,  $p_s(p_t f)=p_{s+t}f$ 
for all $s,t\in\cT(h)$ and $f\in\Cb$.
\end{lemma}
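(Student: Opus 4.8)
The plan is to prove the two implications $(\hat{\rm p}3)\Rightarrow(\hat{\rm p}3')$ and $(\hat{\rm p}3')\Rightarrow(\hat{\rm p}3)$, and then deduce the semigroup identity for $(p_t)$ as a special case.

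**From $(\hat{\rm p}3)$ to $(\hat{\rm p}3')$.**
Fix $s\le t$ in $\cT(h)$ and $g\in\cX$. Set $\tilde g:=\hat{p}_{t}g\in\cX_t$. Since $\hat{p}_{t}$ is, by $(\hat{\rm p}1)$ applied with the zero function and then with $\tilde g\in\cX_t$, the identity on $\cX_t$ (i.e.\ $\hat{p}_{t}\tilde g=\hat{p}_t 0+\tilde g=\tilde g$), we have $\hat{p}_{t}g=\tilde g=\hat{p}_{t}\tilde g$. Applying $(\hat{\rm p}3)$ in both directions (using the trivial inequalities $\hat{p}_t g\le\hat{p}_t\tilde g$ and $\hat{p}_t\tilde g\le\hat{p}_t g$) yields $\hat{p}_{s}g\le\hat{p}_{s}\tilde g$ and $\hat{p}_{s}\tilde g\le\hat{p}_{s}g$, hence $\hat{p}_{s}g=\hat{p}_{s}\tilde g=\hat{p}_{s}(\hat{p}_{t}g)$, which is $(\hat{\rm p}3')$.

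**From $(\hat{\rm p}3')$ to $(\hat{\rm p}3)$.**
Suppose $g,\tilde g\in\cX$ satisfy $\hat{p}_{t}g\le\hat{p}_{t}\tilde g$. Apply the monotone operator $\hat{p}_{s}$ (property $(\hat{\rm p}2)$): $\hat{p}_{s}(\hat{p}_{t}g)\le\hat{p}_{s}(\hat{p}_{t}\tilde g)$. By $(\hat{\rm p}3')$ the left side equals $\hat{p}_{s}g$ and the right side equals $\hat{p}_{s}\tilde g$, giving $\hat{p}_{s}g\le\hat{p}_{s}\tilde g$, which is $(\hat{\rm p}3)$. The minor subtlety here — and the only place one has to be slightly careful — is to make sure each of $g$, $\tilde g$, $\hat{p}_t g$, $\hat{p}_t\tilde g$ lies in $\cX$ so that $\hat{p}_s$ and $\hat{p}_t$ may legitimately be applied; this is immediate since $\hat{p}_t$ maps $\cX$ into $\cX_t\subset\cX$.

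**Deducing the semigroup property for $(p_t)$.**
Fix $s,t\in\cT(h)$ and $f\in\Cb$. View $f$ as the path-dependent payoff $g(x_0,\dots,x_{s+t}):=f(x_{s+t})\in\cX_{s+t}$. By the defining compatibility relation between $\hat{p}$ and $p$, we have $\hat{p}_{s}g=p_{t}f$ (as an element of $\cX_s$, i.e.\ the function $x_s\mapsto(p_t f)(x_s)$) and also $\hat{p}_{0}g=p_{s+t}f$. Moreover $p_t f\in\Cb$ corresponds to the payoff $\tilde g(x_0,\dots,x_s):=(p_t f)(x_s)\in\cX_s$, for which the compatibility relation gives $\hat{p}_{0}\tilde g=p_{s}(p_t f)$. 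Now apply $(\hat{\rm p}3')$ with the indices $0\le s$: $p_{s+t}f=\hat{p}_{0}g=\hat{p}_{0}(\hat{p}_{s}g)=\hat{p}_{0}\tilde g=p_{s}(p_t f)$, which is the claimed identity. I expect the routine bookkeeping of identifying the various functions in $\cX$ with elements of $\Cb$ via the compatibility relation to be the most error-prone step, but it is purely a matter of unwinding definitions; the equivalence $(\hat{\rm p}3)\Leftrightarrow(\hat{\rm p}3')$ itself is a short two-line argument in each direction.
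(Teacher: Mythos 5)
Your proof is correct and follows essentially the same route as the paper: both directions of the equivalence use ($\hat{\rm p}$1) to get $\hat{p}_t(\hat{p}_t g)=\hat{p}_t g$ and then apply ($\hat{\rm p}$3) twice, respectively ($\hat{\rm p}$2) combined with ($\hat{\rm p}$3'), and the semigroup identity is deduced exactly as in the paper by applying ($\hat{\rm p}$3') with the indices $0\leq s$ to $g(x_0,\dots,x_{s+t})=f(x_{s+t})$ and identifying $\hat{p}_s g$ with the payoff $\tilde g(x_0,\dots,x_s)=(p_t f)(x_s)$.
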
    
\begin{proof}
 Suppose that condition~($\hat{\rm p}$3) is satisfied. Let $s,t\in\cT(h)$ with $s\leq t$ 
 and $g\in\cX$. Since condition~($\hat{\rm p}$1) implies $\hat{p}_{t}g=\hat{p}_{t}(\hat{p}_{t}g)$, 
 we obtain from ($\hat{\rm p}$3) that $\hat{p}_{s}(g)=\hat{p}_{s}(\hat{p}_{t} g)$. 
 
Conversely, suppose that condition~($\hat{\rm p}$3') holds. Let $s,t\in\cT(h)$ with $s\leq t$ and $g,\tilde g\in\cX$ with $\hat{p}_{t}g\leq\hat{p}_{t} \tilde g$. Using condition~($\hat{\rm p}$2), we obtain $\hat{p}_{s}g=\hat{p}_{s}(\hat{p}_{t}g)\leq\hat{p}_{s}(\hat{p}_{t}\tilde g)=\hat{p}_{s}\tilde g$.

As for the second part, let $s,t\in\cT(h)$ and $f\in\Cb$.
For $g(x_0,\dots,x_{s+t}):=f(x_{s+t})$, we have $\tilde{f}(x_s):=(p_t f)(x_s)=(\hat p_s g)(x_0,\dots,x_s)$. Hence, it follows from condition~($\hat{\rm p}$3') that
\[
p_s(p_t f)=p_s \tilde f=\hat p_0 \tilde g=\hat p_0(\hat p_s(g)=\hat p_0 g=p_{t+s}f,
\]
where $\tilde g(x_0,\dots,x_s):=\tilde f(x_s)$ and therefore $\tilde g= \hat p_s g$.
\end{proof}

\begin{remark}
Let $(\hat p_s)_{s\in \cT(h)}$ be a family of path-dependent pricing operators $\hat p_s\colon \cX\to\cX_s$ which satisfies ($\hat{\rm p}$1)-($\hat{\rm p}$3). Then, its restriction
$p_t\colon \Cb\to\Cb$ given by 
\[
p_t f:=\hat p_0 g\quad\mbox{where }g(x_0,\dots,x_t):=f(x_t)
\]
of homogeneous pricing operators for options with payoff functions $f(X_t)$,
satisfy 
\begin{itemize}
\item[(p1')] $p_{0}=\id_{\Cb}$,
\item[(p2')] $p_{t}0=0$ and $p_{t}(f+c\one_{\Rd})=p_{t}f+c$ for all $c\in\R$,
\item[(p3')] $f\leq \tilde f$ implies $p_{t}f\leq p_{t}\tilde f$, 
\item[(p4')] $p_s(p_t f)=p_{s+t}f$,
\end{itemize}
for all $s,t\in\cT(h)$. Here, the conditions (p1')-(p3') follow directly from the definition, while (p4') is a consequence of Lemma~\ref{lem:ext pricing}.  
\end{remark}

As a result of the previous discussion, we obtain that a homogeneous pricing operator $(p_t)_{t\in\cT(h)}$, 
which allows for an extension $(\hat p_t)_{t\in\cT(h)}$ of pricing operators for path-dependent options, 
necessarily has to satisfy the semigroup property (p4').
In this sense, the semigroup property is necessary to avoid 
time inconsistencies in the corresponding prices. In particular, the pricing operator is given by the one-step pricing operators $I(h)f:=p_h f$.

We finally remark that an extension to path-dependent pricing operators exists under rather mild conditions. For instance, if
the mapping $(x_0,\dots,x_{kh})\mapsto p_h(g(x_0,\dots,x_{kh},\cdot))$ is continuous for all $k\in\N$ and any bounded continuous function $g\colon (\Rd)^{k+2}\to\R$, see e.g.~\cite[Proposition 5.5]{DKN18}, then for every $g\in \cX_t$ for some $t\in\cT(h)$, it follows that the operators $(\hat p_s)_{s\in\cT(h)}$ given by the backward recursion 
\[
\begin{array}{cl}
\hat p_s g:=g &\mbox{for }s\ge t \\
\hat p_s g:= p_h \big(\hat p_{s+h} g\big) &\mbox{for }s< t,
\end{array}
\]
have the desirable properties.

\section{Agent's preferences and indifference pricing} 
\label{sec:indifference}

We now introduce agent's preferences by considering an agent who measures her risk exposition 
by the entropic risk measure with risk aversion parameter $\alpha\in (0,\infty)$, i.e., the 
agent's risk on the random loss $Y\in\cH$ is given by
\[ \rho[Y]:=\frac{1}{\alpha}\log\big(\cE\big[e^{\alpha Y} \big]\big)\in(-\infty,\infty], \]
where $(\Omega,\cH,\cE)$ is a sublinear expectation space incorporating model uncertainty
of the asset distribution. Here, we consider risk measures as functionals defined on losses 
rather than on positions, i.e., the risk of a position $Z$ is given by $\rho[-Z]$.
In order to develop our indifference pricing framework, we first focus on ask pricing operators 
representing the seller's price of European contingent claims and the corresponding bid prices 
will then be derived in Section~\ref{sec:bid-ask}. Recall that the asset dynamics 
$(X_t^x)_{t\in\cT(h)}$ with trading period $h>0$ have already been specified at the beginning 
of Section~\ref{sec:setup}. 
Hence, the ask price $a_{h,X^x_t}(f)$ for the contingent claim $f(X^x_{t+h})$ given the asset 
price $X^x_t$ at time $t\in\cT(h)$ is determined by the indifference pricing relation
\[ \inf_{\theta\in\Theta}\rho\left[f(X^x_{t+h})-a_{h,X^x_t}(f)-\theta^T\left(X^x_{t+h}-X^x_t\right)\right]
    =\inf_{\theta\in\Theta}\rho\left[-\theta^T\left(X^x_{t+h}-X^x_t\right)\right], \]
where $\Theta\subset\Rd$ contains all available trading strategies. This relation should be read 
in the following way: assuming that the agent can always trade on the market to reduce her risk 
exposure, the quantity $a_{h,X^x_t}(f)$ makes the agent indifferent between selling the derivative 
at this price or keeping it.
Furthermore, the set $\Theta$ of available trading strategies can a priori model any type of 
constraint. For example, we could consider $\Theta:=\Rd$ if the agent can trade without constraints 
all assets in the market or $\Theta:=\R^m$ for some $m<d$ if, for any reason, the agent cannot trade 
some of the assets. The set $\Theta$ could also be bounded if volume constraints are imposed.
Note that, in principle, by modelling $\Theta$ in a suitable way, not all the components of the 
asset process need to be assets on the market so that the derivative could also depend on some 
external factors.

Since the factors $(\zeta_k)_{k\in\N}$ are i.i.d., we obtain that the ask prices $a_{h,x}(f)$ 
for one trading period are fully determined by the equation
\[ \tilde{\rho}_{h,x}[f-a_{h,x}(f)]=\tilde{\rho}_{h,x}[0], \]
where the trading adjusted risk functional is given by
\begin{equation} \label{def:rhotilde}
 \tilde{\rho}_{h,x}[f]:=\inf_{\theta\in\Theta}\rho\left[f(x+h\mu+\sqrt{h}\zeta_1)-\theta^T (h\mu+\sqrt{h}\zeta_1)\right].
\end{equation}
Furthermore, by applying the cash invariance on the deterministic number $a_{h,x}(f)\in\R$, 
it follows that the one-step pricing operator is given by 
\begin{equation}\label{eq:def_I}
(I(h)f)(x):=a_{h,x}(f)=\tilde{\rho}_{h,x}[f]-\tilde{\rho}_{h,x}[0]
\end{equation}
for all $f\in\Cb$ and $x\in\Rd$. 
Under reasonable assumptions specified in Section~\ref{sub:main}, one can show that 
$I(h)\colon\Cb\to\Cb$ and therefore, as discussed in Subsection~\ref{sec:dyn.consist},  
the time consistent multi-step pricing operators are given by
\begin{equation} \label{def:a multi}
 a^k_{kh,x}(f):=(I(h)^k f)(x) \quad\mbox{for all } k\in\N.
\end{equation}
Similar to the worst-case asset dynamics in Subsection~\ref{sec:prel.chernoff}, we are now 
interested in the limit behaviour of the ask prices as the number of intermediate trading 
periods tends to infinity. Let $t\geq 0$ and $h_n:=t/n$ for all $n\in\N$. Then, the limit
\begin{equation} \label{eq:price.limit}
 a_{t,x}^\infty(f):=\lim_{n\to\infty}a^n_{nh_n,x}(f)=\lim_{n\to\infty}\big(I\big(\tfrac{t}{n}\big)^n f\big)(x)
\end{equation}
defines the time-consistent asymptotic risk-based price of a claim with payoff function $f$.

Before stating the main results, we want to explain the relation between local and global 
indifference pricing in the present framework. Our formalization of the indifference pricing 
relation might appear slightly different from classical indifference pricing because one usually
starts from a risk measure that is defined globally on the entire path of the asset process 
and the hedging strategy. However, although the pricing operator here is defined locally by 
a one-step indifference pricing principle, its concatenation in equation~\eqref{def:a multi} 
again satisfies an indifference pricing relation.
Indeed, since the entropic risk measure is time-consistent, we obtain  
\[ I\big(\tfrac{t}{n}\big)^n f=\tilde{I}\big(\tfrac{t}{n}\big)^n f -nc\big(\tfrac{t}{n}\big)
    =\tilde{I}\big(\tfrac{t}{n}\big)^n f-\tilde{I}\big(\tfrac{t}{n}\big)^n 0,\]
where $(\tilde{I}(t)f)(x):=\tilde{\rho}_{t,x}[f]$. Hence, equation~\eqref{def:a multi} and the 
cash invariance of $\tilde{I}(\frac{t}{n})^n$ yield
\begin{equation} \label{indif:global}
 \Big(\tilde{I}\big(\tfrac{t}{n}\big)^n\big(f-a^n_{t,x}(f)\big)\Big)(x)
 =\big(\tilde{I}\big(\tfrac{t}{n}\big)0\big)(x).
\end{equation}
Using that the random factors $(\zeta_k)_{k\in\N}$ are i.i.d., we obtain the global indifference relation
\[ \inf_\theta\rho\left[f(X^x_t)-a^n_{t,x}(f)-(\theta\cdot X^x)_t\right]
    =\inf_\theta\rho\left[-(\theta\cdot X^x)_t\right], \]
where $(\theta\cdot X^x)_t:=\sum_{k=1}^n\theta_k(X^x_{kh_n}-X^x_{(k-1)h_n})$ with $h_n:=t/n$ and 
the infima are taken over all $\Theta$-valued processes $\theta=(\theta_1,\dots,\theta_n)$ 
such that $\theta_k$ is $X^x_{k-1}$-measurable for all $k=1,\dots,n$.

\subsection{Main results}\label{sub:main}

Recall that $\mu\in\Rd$ is a constant drift and that $(\zeta_k)_{k\in\N}\subset\cH^d$
is an i.i.d. sequence of random variables defined on a sublinear expectation space 
$(\Omega,\cH,\cE)$.\ So far, we did not specify the space $\cH$ but assumed it to be
rich enough to guarantee that all the appearing expectations are well defined. 
In order to state and prove the results in this section, we define $\zeta:=\zeta_1$ 
and impose the following conditions.

\begin{assumption} \label{ass:main}
 Let $\Theta\subset\Rd$ be a closed convex set including zero.\ Suppose that~$\cH$ contains 
 all $\zeta$-measurable functions $X\colon\Omega\to\R$ satisfying $|X|\leq ae^{b|\zeta|}$ 
 for some $a,b\geq 0$, where $|\cdot|$ denotes the Euclidean norm. In addition, for every 
 $a\in\Rd$ and $b\geq 0$,
 \[ \cE[a^T\zeta]=0 \quad\mbox{and}\quad 
    \lim_{c\to\infty}\cE\big[e^{b|\zeta|}\one_{\{|\zeta|\geq c\}}\big]=0. \] 
\end{assumption}

If the expectation $\cE[\,\cdot\,]=\E_\P[\,\cdot\,]$ is linear, one can choose $\cH=L^1(\P)$ 
and the previous conditions reduce to $\E_\P[\zeta]=0$ and $\E_\P[e^{b|\zeta|}]<\infty$ 
for all $b\geq 0$. Furthermore, the condition $\cE[a^T\zeta]=0$ states that the mean is 
not uncertain.\ When passing from the multi-step prices to the asymptotic risk-based prices
the number of intermediate trading times tends to infinity. Hence, in order to exclude 
doubling strategies, we impose a volume constraint on the trading sets by considering
$\Theta_R:=\Theta\cap B_R(0)$, where $B_R(x):=\{y\in\Rd\colon |x-y|\leq R\}$ for all $R\geq 0$ 
and $x\in\Rd$. This constraint also guarantees that the one-step pricing operators
\[ (I_R(t)f)(x):=\inf_{\theta\in\Theta_R}\rho[f(x+t\mu+\sqrt{t}\zeta)-\theta^T(t\mu+\sqrt{t}\zeta)]
    -\inf_{\theta\in\Theta_R}\rho[-\theta^T(t\mu+\sqrt{t}\zeta)]. \]
are well defined for all $R,t\geq 0$, $f\in\Cb$ and $x\in\Rd$. The next theorem shows that
the asymptotic risk-based prices are well defined and fully determined by the covariance 
\[ G\colon\R^{d\times d}\to\R,\; a\mapsto\frac{1}{2}\cE[\zeta^T a\zeta] \]
of the random factors and the deterministic drift. We define
\begin{align} 
 G_\theta(a,b) &:=\frac{1}{2}\cE\left[\zeta^T a\zeta+\alpha|(b-\theta)^T\zeta|^2\right]
    +(b-\theta)^T\mu \label{eq:Gtheta} \\
 &=G\left(a+\alpha(b-\theta)(b-\theta)^T\right)+(b-\theta)^T\mu \nonumber
\end{align}
for all $a\in\R^{d\times d}$, $b\in\Rd$ and $\theta\in\Theta$. Moreover, we recall that 
$\Cb^2$ contains all bounded twice continuously differentiable functions $f\colon\Rd\to\R$ 
with bounded first and second derivative and that all limits in $\Cb$ are taken w.r.t. 
the mixed topology.

\begin{theorem} \label{thm:main}
 Suppose that Assumption~\ref{ass:main} is satisfied. Then, for every $R\geq 0$, the limit
 \[ S_R(t)f:=\lim_{n\to\infty}I_R\big(\tfrac{t}{n}\big)^n f \]
 of the volume constrained multi-step ask pricing operators exists for all $t\geq 0$ and $f\in\Cb$. 
 Furthermore, the family $(S_R(t))_{t\geq 0}$ is a strongly continuous convex monotone semigroup
 on $\Cb$ which is uniquely determined by its generator satisfying $\Cb^2\subset D(A_R)$ and 
 \begin{align}
  (A_Rf)(x) &= \inf_{\theta\in\Theta_R}G_\theta(D^2f(x),Df(x))-\inf_{\theta\in\Theta_R}G_\theta(0,0)
    \nonumber \\
  &=\inf_{\theta\in\Theta_R}\Big(\frac{1}{2}
    \cE\big[\zeta^T D^2f(x)\zeta+\alpha |(Df(x) - \theta)^T\zeta|^2\big] + (Df(x) - \theta)^T \mu\Big) 
    \nonumber \\
  &\quad\; -\inf_{\theta\in\Theta_R}\Big(\frac{\alpha}{2}\cE\big[|\theta^T\zeta|^2\big] - \theta^T \mu\Big)
    \label{eq:gen.bounded}
 \end{align}
 for all $f\in\Cb^2$ and $x\in\Rd$.
\end{theorem}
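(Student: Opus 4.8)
The plan is to realise $(S_R(t))_{t\ge 0}$ as the Chernoff approximation generated by the one-step operators $(I_R(t))_{t\ge 0}$ and to invoke the abstract approximation result for strongly continuous convex monotone semigroups recalled at the beginning of Section~\ref{sec:proof} (cf.~\cite{BDKN22}), together with the uniqueness/comparison result Theorem~\ref{thm:unique}. The first step is to check that $I_R(t)\colon\Cb\to\Cb$ is well defined with the required structural properties. For $f\in\Cb$, $\theta\in\Theta_R$ and $t\ge 0$ one has the pointwise bound $e^{\alpha(f(x+t\mu+\sqrt t\zeta)-\theta^T(t\mu+\sqrt t\zeta))}\le e^{\alpha\|f\|_\infty+\alpha Rt|\mu|}\,e^{\alpha R\sqrt t|\zeta|}$, so the exponential moment bound $\cE[e^{b|\zeta|}]<\infty$ from Assumption~\ref{ass:main} and the estimates in Appendix~\ref{app:mgf} make $\rho[f(x+t\mu+\sqrt t\zeta)-\theta^T(t\mu+\sqrt t\zeta)]$ finite; continuity in $x$ follows from continuity from above of $\cE$, and since $\Theta_R=\Theta\cap B_R(0)$ is compact the infimum is attained and depends continuously on $x$, whence $I_R(t)\colon\Cb\to\Cb$. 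Evaluating at $t=0$ gives $I_R(0)=\id_{\Cb}$; cash invariance of $\rho$ yields $I_R(t)(f+c)=I_R(t)f+c$ and $I_R(t)0=0$, so $\|I_R(t)f\|_\infty\le\|f\|_\infty$; monotonicity is inherited from $\cE$; convexity holds because $\rho$ is convex (a robust Hölder inequality, Appendix~\ref{app:convex}) and the map $(f,\theta)\mapsto\rho[f(x+\cdot)-\theta^T(\cdot)]$ is jointly convex, so its partial infimum over the convex set $\Theta_R$ is convex in $f$; and $I_R(t)f_n\downarrow 0$ for $f_n\downarrow 0$ by continuity from above of $\cE$.

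The technical core is to verify the uniform stability hypotheses of the Chernoff result: for every $T\ge 0$ the iterates $I_R(t/n)^n$, $t\in[0,T]$, $n\in\N$, must remain in and be equicontinuous on the relevant weighted function spaces (bounded Lipschitz functions, exponentially bounded functions). Using that the $\zeta_k$ are i.i.d., one rewrites $I_R(t/n)^nf$ in the flattened form analogous to~\eqref{eq:chernoff2}, and the required bounds reduce to exponential moment estimates for $\sqrt{t/n}\,(\zeta_1+\cdots+\zeta_n)$, which follow from $\lim_{c\to\infty}\cE[e^{b|\zeta|}\one_{\{|\zeta|\ge c\}}]=0$ and the compactness of $\Theta_R$; this is exactly the role of the volume constraint, namely to rule out exploding (``doubling'') strategies as $n\to\infty$. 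The one point of care relative to the classical linear case is that $\cE$ is merely sublinear: one uses subadditivity in one direction and, in the other, the elementary identity $\cE[sW+Y]=\cE[Y]$, valid whenever $\cE[W]=\cE[-W]=0$ (Appendix~\ref{app:convex}), applied to the mean-zero increments $\sqrt t\,a^T\zeta$. I expect this step, together with the attendant uniform control of the Taylor remainder below, to be the main obstacle.

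Next I would identify the generator on $\Cb^2$ by a second-order expansion. For $f\in\Cb^2$, $x\in\Rd$ and $\theta\in\Theta_R$, set $u:=\alpha\big(f(x+t\mu+\sqrt t\zeta)-\theta^T(t\mu+\sqrt t\zeta)-f(x)\big)=\alpha\sqrt t\,(Df(x)-\theta)^T\zeta+\alpha t\big((Df(x)-\theta)^T\mu+\tfrac12\zeta^TD^2f(x)\zeta\big)+\text{remainder}$. Expanding $e^u=1+u+\tfrac12u^2+\cdots$ and taking $\cE$, the $\sqrt t$-term vanishes by no mean uncertainty ($\cE[a^T\zeta]=0$) and the order-$t$ part equals $\alpha$ times $\tfrac12\,\zeta^T\big(D^2f(x)+\alpha(Df(x)-\theta)(Df(x)-\theta)^T\big)\zeta+(Df(x)-\theta)^T\mu$, whose expectation is precisely $\alpha\,G_\theta(D^2f(x),Df(x))$ by~\eqref{eq:Gtheta}, while the remainder is $o(t)$ after applying $\cE$ by the exponential moment/tail hypothesis (an $\eps$-of-room argument splitting on $\{|\zeta|\le c\}$ and $\{|\zeta|>c\}$). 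Hence
\[ \cE\big[e^{\alpha(f(x+t\mu+\sqrt t\zeta)-\theta^T(t\mu+\sqrt t\zeta))}\big]=e^{\alpha f(x)}\big(1+t\,\alpha\,G_\theta(D^2f(x),Df(x))+o(t)\big), \]
uniformly in $\theta\in\Theta_R$ and locally uniformly in $x$. Taking $\tfrac1\alpha\log$, subtracting the $f$-independent term (which by the same expansion with $f\equiv 0$ equals $t\,G_\theta(0,0)+o(t)$) and passing to the infimum over the compact set $\Theta_R$ yields
\[ \lim_{t\downarrow 0}\frac{(I_R(t)f)(x)-f(x)}{t}=\inf_{\theta\in\Theta_R}G_\theta(D^2f(x),Df(x))-\inf_{\theta\in\Theta_R}G_\theta(0,0) \]
locally uniformly in $x$, which is formula~\eqref{eq:gen.bounded}; this right-hand side is again an element of $\Cb$, so $\Cb^2\subseteq D(A_R)$ once the semigroup exists.

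Finally, Steps~1--2 verify the hypotheses of the Chernoff result, so $S_R(t)f=\lim_n I_R(t/n)^nf$ exists for all $t\ge 0$ and $f\in\Cb$, the family $(S_R(t))_{t\ge 0}$ is a strongly continuous convex monotone semigroup, and its generator extends $I_R'(0)$, hence coincides on $\Cb^2$ with~\eqref{eq:gen.bounded} by Step~3. The uniqueness statement --- that the semigroup is determined by this generator --- then follows from Theorem~\ref{thm:unique} applied to the convex Hamiltonian $(p,X)\mapsto\inf_{\theta\in\Theta_R}G_\theta(X,p)-\inf_{\theta\in\Theta_R}G_\theta(0,0)$, which is Lipschitz in $(p,X)$ precisely because $\Theta_R$ is compact, so that the regularity hypotheses of Theorem~\ref{thm:unique} on the generator are satisfied.
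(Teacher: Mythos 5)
Your overall strategy coincides with the paper's: verify the hypotheses of the abstract Chernoff result (Assumption~\ref{ass:cher} and Theorem~\ref{thm:cher}), identify $I_R'(0)f$ on $\Cb^2$ by a second-order Taylor expansion inside the exponential, and conclude existence, the semigroup property and uniqueness via Theorem~\ref{thm:cher} and Theorem~\ref{thm:unique}. Your Steps 1, 3 and 4 match the paper's proof in substance, including the use of $\cE[a^T\zeta]=0$ through Lemma~\ref{lem:E}(vi) to eliminate the first-order term, and the splitting on $\{|\zeta|\le c\}$ versus $\{|\zeta|>c\}$ to obtain the $o(t)$ remainder uniformly over the compact set $\Theta_R$ and locally uniformly in $x$.

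The genuine gap is in your Step 2, which you yourself flag as the main obstacle. You propose to verify the uniform continuity-from-above condition for the iterates (condition~(vi) of Assumption~\ref{ass:cher}) by rewriting $I_R(t/n)^n f$ ``in the flattened form analogous to \eqref{eq:chernoff2}'' and reducing everything to exponential moment estimates for $\sqrt{t/n}\,(\zeta_1+\cdots+\zeta_n)$. That flattening is available only for the plain expectation operators $(J(t)f)(x)=\cE[f(x+t\mu+\sqrt t\zeta)]$; the pricing operators $I_R(t)$ contain an infimum over $\theta\in\Theta_R$ and a $\tfrac1\alpha\log\cE[e^{\alpha(\cdot)}]$ at every step, so the $n$-fold iterate is a nested backward recursion of robust entropic optimizations (cf.\ equation~\eqref{indif:global}) and does not collapse to a single expectation of $f$ evaluated at $x+t\mu+\sqrt{t/n}\,(\zeta_1+\cdots+\zeta_n)$. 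The paper instead establishes the quantitative bound \eqref{eq:cont.above}, namely $\|I(t)f-f\|_\infty\le c\,(\|Df\|_\infty+\|D^2f\|_\infty)\,t$ for $f\in\Cb^2$ with $\|f\|_\infty\le r$ and $\|Df\|_\infty+\|D^2f\|_\infty\le 1$ --- the essential point being that the constant degrades only linearly in the derivative norms, which is achieved by applying the convexity estimate of Lemma~\ref{lem:lambda} to $f/\lambda$ with $\lambda:=\|Df\|_\infty+\|D^2f\|_\infty$ --- and then invokes \cite[Corollary~2.14]{BKN23} to deduce condition~(vi) for the iterated operators. Without this (or an equivalent) mechanism, your argument for the existence of the limit $S_R(t)f$ is incomplete; the remaining steps are sound.
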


The proof is given in Subsection~\ref{sec:proof1}.\ Without additional conditions, 
the volume constraint is necessary to prevent the two infima in equation~\eqref{eq:gen.bounded} 
from taking the value $-\infty$. However, in case that there exists $\delta>0$ with 
\begin{equation} \label{eq:non-deg} 
 \cE\big[|\theta^T\zeta|^2\big]\geq\delta |\theta|^2 
 \quad\mbox{for all } \theta\in\Theta,
\end{equation}
one can always restrict the infima to a bounded set which might depend on $f$. This 
allows to take the limit $R\to\infty$ in equation~\eqref{eq:gen.bounded} 
and the next theorem shows that this transfers to the semigroups $(S_R(t))_{t\geq 0}$. 
Hence, we can define asymptotic risk-based prices involving unbounded sets of trading 
strategies as limits of volume constraint prices. Moreover, for $\Theta=\Rd$ it is 
sufficient to require that there exists $\delta>0$ with 
\begin{equation} \label{eq:non-deg2} 
 \cE\big[|\theta^T\zeta|^2\big]\geq\delta |\theta|^2 
 \quad\mbox{for all } \theta\in\Rd \mbox{ with } \theta^T\mu\neq 0.
\end{equation}

\begin{theorem} \label{thm:unbounded}
 Suppose that Assumption~\ref{ass:main} and condition~\eqref{eq:non-deg} are valid.
 Then, the limit 
 \[ S(t)f:=\lim_{R\to\infty}S_R(t)f \]
 of the volume constrained prices exists for all $t\geq 0$ and $f\in\Cb$.\ Furthermore, 
 the family $(S(t))_{t\geq 0}$ is a strongly continuous convex monotone semigroup on $\Cb$ 
 which is uniquely determined by its generator satisfying $\Cb^2\subset D(A)$ and 
 \[ (Af)(x)=\inf_{\theta\in\Theta}G_\theta(D^2f(x),Df(x))-\inf_{\theta\in\Theta}G_\theta(0,0)
    \quad\mbox{for all } f\in\Cb^2 \mbox{ and } x\in\Rd. \]
 Moreover, for $\Theta:=\Rd$, it is sufficient to require condition~\eqref{eq:non-deg2}
 instead of condition~\eqref{eq:non-deg}. 
\end{theorem}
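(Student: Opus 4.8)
The plan is to obtain Theorem~\ref{thm:unbounded} from Theorem~\ref{thm:main} by letting $R\to\infty$, showing that the net $(S_R(t)f)_R$ is eventually monotone and identifying its limit with the help of the uniqueness statement (Theorem~\ref{thm:unique}) and the approximation results of~\cite{BDKN22}. The starting point is the splitting
\[
 (I_R(t)f)(x)=(\tilde I_R(t)f)(x)-(\tilde I_R(t)0)(x),\qquad
 (\tilde I_R(t)f)(x):=\inf_{\theta\in\Theta_R}\rho\big[f(x+t\mu+\sqrt t\zeta)-\theta^T(t\mu+\sqrt t\zeta)\big],
\]
where $\tilde I_R(t)$ is monotone and cash additive in $f$ and, since $\Theta_R$ grows with $R$, satisfies $\tilde I_R(t)\geq\tilde I_{R'}(t)$ pointwise for $R\leq R'$, while $\tilde I_R(t)0\equiv-c_R(t)$ is a constant with $c_R(t)\geq 0$ nondecreasing in $R$. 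Iterating and using cash additivity gives $I_R(t)^nf=\tilde I_R(t)^nf+n\,c_R(t)$, and since a composition of monotone operators that are nonincreasing in $R$ is again nonincreasing in $R$, the functions $\tilde I_R(t)^nf$ decrease in $R$; hence, once $c_R(t)$ no longer depends on $R$, $(I_R(\tfrac tn)^nf)_R$ is decreasing, and so is $S_R(t)f=\lim_nI_R(\tfrac tn)^nf$. Since $I_R(t)0=0$ by definition, $S_R(t)0=0$, and monotonicity together with cash additivity yields $-\|f\|_\infty\leq S_R(t)f\leq\|f\|_\infty$; therefore $S(t)f:=\lim_{R\to\infty}S_R(t)f$ exists pointwise.

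Two ingredients remain: the stabilisation of $c_R(t)$ and the identification of the limit. For the first, $c_R(t)=-\inf_{\theta\in\Theta_R}\rho[-\theta^T(t\mu+\sqrt t\zeta)]$, and by condition~\eqref{eq:non-deg} together with the absence of mean uncertainty the convex function $\theta\mapsto\rho[-\theta^T(t\mu+\sqrt t\zeta)]$ on $\Theta$ is coercive, with a coercivity threshold bounded uniformly for $t\in(0,1]$ — this is the quantitative content of the exponential-moment estimates of Appendix~\ref{app:mgf}, using that every $\Q\in\cQ$ is centred. Hence there is a universal $R_1<\infty$ with $c_R(t)=c_{R_1}(t)$ for all $R\geq R_1$ and all such $t$, as the previous paragraph requires. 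The same coercivity, now at the level of the generator, shows that the two infima defining
\[
 (Af)(x):=\inf_{\theta\in\Theta}G_\theta(D^2f(x),Df(x))-\inf_{\theta\in\Theta}G_\theta(0,0)
\]
are finite and attained on bounded subsets of $\Theta$; for the second infimum the bound is the universal $R_1$, and for the first a bound $R(f)$ depending only on $\|Df\|_\infty$ and $\|D^2f\|_\infty$, via the elementary inequality $|(Df(x)-\theta)^T\zeta|^2\geq\tfrac12|\theta^T\zeta|^2-2|Df(x)^T\zeta|^2$, the monotonicity and subadditivity of $\cE$, and $\cE[|\zeta|^2]<\infty$ (which follows from Assumption~\ref{ass:main}). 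In particular $\Cb^2\subset D(A_R)$ with $A_Rf=Af$ for all $f\in\Cb^2$ and all $R\geq R(f)\vee R_1$.

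For the identification I would first note that the estimates behind Theorem~\ref{thm:main} are uniform in $R$: since the $x$-independent term $\inf_\theta\rho[-\theta^T(t\mu+\sqrt t\zeta)]$ cancels in differences, one has $\|I_R(t)f-I_R(t)g\|_\infty\leq\|f-g\|_\infty$ and, if $f$ is $L$-Lipschitz, so is $I_R(t)f$ with the same constant, uniformly in $R$ and $t$; hence the same holds for $S_R(t)$. Consequently, for $f\in\Lipb$ the decreasing net $(S_R(t)f)_R$ is equi-Lipschitz and uniformly bounded and converges to $S(t)f\in\Lipb$ uniformly on compacts, and by sup-norm non-expansiveness and a density argument $S_R(t)f\to S(t)f$ in the mixed topology for every $f\in\Cb$, each $S(t)$ being convex, monotone, non-expansive with $S(t)0=0$ and $S(t)\colon\Cb\to\Cb$. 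The cleanest rigorous way to turn this into the full statement — the semigroup property, strong continuity, $\Cb^2\subset D(A)$ with the formula for $A$, and uniqueness — is to combine these uniform-in-$R$ estimates with the results of~\cite{BDKN22} recalled in Section~\ref{sec:proof}: they reduce the claim to the behaviour of the generators on the core $\Cbi$, where $A_R=A$ once $R$ is large by the previous paragraph, and they yield uniqueness via Theorem~\ref{thm:unique} because $\Cbi$ is determining. (Concretely, one may realise $S(t)f=\lim_nI_{R_n}(\tfrac tn)^nf$ along a sufficiently fast sequence $R_n\uparrow\infty$ and apply the Chernoff theorem with limiting generator $A$ on $\Cbi$.) Finally, when $\Theta=\Rd$ one reduces condition~\eqref{eq:non-deg2} to~\eqref{eq:non-deg}: the null space $N:=\{\theta\in\Rd\colon\cE[|\theta^T\zeta|^2]=0\}$ is a linear subspace contained in $\mu^\perp$ by~\eqref{eq:non-deg2}, and on $N$ one has $\theta^T\zeta=0$ $\cE$-quasi surely and $\theta^T\mu=0$, so $G_\theta$ and $\rho[-\theta^T(\cdot)]$ depend on $\theta$ only modulo $N$; restricting both infima to $N^\perp$, on which $\theta\mapsto\cE[|\theta^T\zeta|^2]$ vanishes only at the origin and is therefore coercive, puts us back in the setting of~\eqref{eq:non-deg}.

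The step I expect to be the main obstacle is precisely this identification of the limit as a genuine semigroup: the monotone pointwise convergence $S_R(t)f\downarrow S(t)f$ is soft, but upgrading it to the semigroup property $S(s+t)=S(s)S(t)$, to strong continuity, and to $\Cb^2\subset D(A)$ needs the abstract stability machinery of~\cite{BDKN22}, and the real work is to check its hypotheses uniformly in $R$ — the uniform Lipschitz, linear-growth and equicontinuity bounds — together with the stabilisation of the $\theta$-infima, which is the quantitative role played by the non-degeneracy condition~\eqref{eq:non-deg} (resp.~\eqref{eq:non-deg2}).
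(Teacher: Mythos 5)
Your overall strategy differs from the paper's in its first half: you try to produce the limit $S(t)f=\lim_{R\to\infty}S_R(t)f$ by exhibiting eventual monotonicity in $R$ of the iterated one-step operators, whereas the paper never touches the $R$-dependence of the one-step operators beyond the uniform estimates already established for Theorem~\ref{thm:main}. Instead it invokes the generator-stability machinery of~\cite{BKN23}: it proves that $\inf_{\theta\in\Theta}G_\theta(D^2f(x),Df(x))=\inf_{\theta\in\Theta_R}G_\theta(D^2f(x),Df(x))$ for all $R\geq R(\|Df\|_\infty+\|D^2f\|_\infty)$ via the coercivity $G_\theta\geq\frac{\alpha\delta}{2}|\theta|^2-c_1|\theta|-c_2$, verifies a uniform bound $\|A_Rf\|_\infty\leq c(\|Df\|_\infty+\|D^2f\|_\infty)$ and the sup-norm convergence of the difference quotients on $\BUC^2$, and then concludes from~\cite[Theorem~2.5]{BKN23} that convergence of the generators forces convergence of the semigroups, with uniqueness of the limit from Theorem~\ref{thm:unique}. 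Your second half (stabilisation of the generator infima, $A_Rf=Af$ for large $R$, reduction to the core $\Cbi$) coincides with the paper's key step.

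The genuine gap is the claim that there is a \emph{universal} $R_1$ with $c_R(t)=c_{R_1}(t)$ for all $R\geq R_1$ and all $t\in(0,1]$, which is the load-bearing step of your monotonicity argument (without it, $I_R(t)f=\tilde I_R(t)f-c_R(t)$ is a difference of two quantities that both decrease in $R$, and nothing is monotone). You attribute this to ``the quantitative content of the exponential-moment estimates of Appendix~\ref{app:mgf}'', but Appendix~\ref{app:mgf} does not contain such a statement: it verifies the conditions~\eqref{eq:mgf1}--\eqref{eq:mgf2} for two specific families of distributions, and those conditions are precisely the \emph{additional hypotheses} of Theorem~\ref{thm:unbounded2}, not consequences of Assumption~\ref{ass:main} and~\eqref{eq:non-deg}. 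The paper is explicit that under the hypotheses of Theorem~\ref{thm:unbounded} the unconstrained one-step infima need not be controlled --- the one-step operators depend on the full distribution of $\zeta$, not just its covariance --- which is exactly why it works only at the level of the generators, where the non-degeneracy condition bites directly on the second moments. A uniform-in-$t$ localisation of $\argmin_\theta\rho[-\theta^T(t\mu+\sqrt t\zeta)]$ can probably be rescued by a two-regime analysis of the log-moment-generating function (quadratic near the origin with curvature $\geq\delta$ uniformly over directions and over $\Q\in\cQ$, linear by convexity beyond a uniform threshold), but that is a nontrivial argument you would have to supply, and as written your proof has no substitute for it. The remaining issues are minor: the sign conventions in your first paragraph are inconsistent with the paper's definition of $c_R(t)$ (one has $c_R(t)=\tilde I_R(t)0\leq 0$, nonincreasing in $R$), and the verification of the continuity-from-above condition (ii) of~\cite[Assumption~2.2]{BKN23} uniformly in $R$ --- which the paper obtains from the bound on $\|A_Rf\|_\infty$ together with~\cite[Corollary~2.16]{BKN23} --- is glossed over in your appeal to the abstract machinery.
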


The proof is given in Subsection~\ref{sec:proof2}. In the case $\Theta=\Rd$, the asymptotic 
prices do not dependent on the first derivative $Df$ and are dominated by the $G$-expectation 
which has previously been introduced in Subsection~\ref{sec:prel.chernoff}.\ Hence, while 
pricing with a $G$-expectation corresponds to pricing according the worst-case measure in 
the ambiguity set, the risk-based framework leads to a mitigation of the worst-case bounds 
by taking into account the attitude of the agent towards risk.

\begin{corollary} \label{cor:Rd}
 Let $\Theta:=\Rd$ and suppose that Assumption~\ref{ass:main} and condition~\eqref{eq:non-deg2} 
 are satisfied. Then, denoting by $(S(t))_{t\geq 0}$ the semigroup from Theorem~\ref{thm:unbounded}, 
 we obtain
 \begin{align*}
  (Af)(x) &=\inf_{\theta\in\Rd}\bigg(\frac{1}{2}\cE\big[\zeta^T D^2f(x)\zeta+\alpha |\theta^T\zeta|^2\big]-\theta^T\mu\bigg) 
    -\inf_{\theta\in\Rd}\bigg(\frac{\alpha}{2}\cE\big[|\theta^T\zeta|^2\big]-\theta^T \mu\bigg) \\
  &\leq\frac{1}{2}\cE[\zeta^T D^2f(x)\zeta]
 \end{align*}
 for all $f\in\Cb^2$ and $x\in\Rd$. Hence, it holds $S(t)f\leq T(t)f$ for all $t\geq 0$ 
 and $f\in\Cb$, where the strongly continuous convex monotone semigroup $(T(t))_{t\geq 0}$
 on $\Cb$ is given by 
 \[ T(t)f:=\lim_{n\to\infty}J\big(\tfrac{t}{n}\big)f \quad\mbox{with}\quad
    (J(t)f)(x):=\cE[f(x+\sqrt{t}\zeta)]. \]
\end{corollary}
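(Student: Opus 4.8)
The plan is to deduce the closed form for the generator by an algebraic simplification of the formula furnished by Theorem~\ref{thm:unbounded}, to obtain the displayed estimate from the subadditivity of $\cE$, and then to transfer this estimate from the generators to the semigroups by means of the comparison principle for convex monotone semigroups.

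First I would specialise Theorem~\ref{thm:unbounded} to $\Theta=\Rd$. Recalling from~\eqref{eq:Gtheta} that $G_\theta(a,b)=G\big(a+\alpha(b-\theta)(b-\theta)^T\big)+(b-\theta)^T\mu$, the change of variables $\eta:=\theta-b$, which is a bijection of $\Rd$ onto itself, gives
\[
 \inf_{\theta\in\Rd}G_\theta(a,b)=\inf_{\eta\in\Rd}\big(G(a+\alpha\eta\eta^T)-\eta^T\mu\big),
\]
and the right-hand side is independent of $b$; this already explains why the generator does not depend on $Df$. Taking $a=D^2f(x)$, using $G(a+\alpha\eta\eta^T)=\tfrac12\cE\big[\zeta^T a\zeta+\alpha|\eta^T\zeta|^2\big]$, and noting $\inf_{\theta\in\Rd}G_\theta(0,0)=\inf_{\theta\in\Rd}\big(\tfrac{\alpha}{2}\cE[|\theta^T\zeta|^2]-\theta^T\mu\big)=:m$ produces the stated expression for $(Af)(x)$. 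By condition~\eqref{eq:non-deg2} the number $m$ is a finite real: the map $\theta\mapsto\tfrac{\alpha}{2}\cE[|\theta^T\zeta|^2]-\theta^T\mu$ is nonnegative when $\theta^T\mu=0$ and bounded below by a coercive quadratic in $|\theta|$ when $\theta^T\mu\neq0$, while $m\le 0$ by choosing $\theta=0$; this is precisely the role of~\eqref{eq:non-deg2}, which is in any case already covered by Theorem~\ref{thm:unbounded}.

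For the inequality I would invoke the subadditivity and positive homogeneity of $\cE$ to write $\cE\big[\zeta^T D^2f(x)\zeta+\alpha|\eta^T\zeta|^2\big]\le\cE[\zeta^T D^2f(x)\zeta]+\alpha\cE[|\eta^T\zeta|^2]$, whence
\[
 \inf_{\eta\in\Rd}\Big(\tfrac12\cE\big[\zeta^T D^2f(x)\zeta+\alpha|\eta^T\zeta|^2\big]-\eta^T\mu\Big)
 \le\tfrac12\cE[\zeta^T D^2f(x)\zeta]+m .
\]
Subtracting $m$ yields $(Af)(x)\le\tfrac12\cE[\zeta^T D^2f(x)\zeta]$. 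The point is that both infima are taken over the same set with the same quadratic penalty, so they cancel after the splitting; bounding the first infimum by its value at $\eta=0$ alone would not suffice, since $m$ may be strictly negative.

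Finally, by the Example in Subsection~\ref{sec:prel.chernoff} applied with zero drift, $(T(t))_{t\ge0}$ is a strongly continuous convex monotone semigroup on $\Cb$ with $\Cb^2\subset D(B)$ and $(Bf)(x)=\tfrac12\cE[\zeta^T D^2f(x)\zeta]$ for $f\in\Cb^2$, so the two previous steps give $(Af)(x)\le(Bf)(x)$ for all $f\in\Cb^2$ and $x\in\Rd$. Since both $(S(t))_{t\ge0}$ and $(T(t))_{t\ge0}$ are, by Theorem~\ref{thm:unbounded} and the Example in Subsection~\ref{sec:prel.chernoff} respectively, uniquely determined by the values of their generators on $\Cb^2$ in the sense of Theorem~\ref{thm:unique}, the comparison principle for strongly continuous convex monotone semigroups recalled at the beginning of Section~\ref{sec:proof} upgrades this pointwise inequality of generators to $S(t)f\le T(t)f$ for all $t\ge0$ and $f\in\Cb$. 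I expect this last step to be the main obstacle: one must verify that the hypotheses of the comparison result apply to $(S(t))$ and $(T(t))$ simultaneously — in particular that $\Cb^2$ is admissible for the generator of $(T(t))$ as well and that the required invariance and locality assumptions hold — whereas the algebraic reduction and the application of subadditivity are routine.
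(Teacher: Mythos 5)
Your proposal is correct and follows essentially the same route as the paper: the substitution $\theta\mapsto Df(x)+\theta$ (your $\eta=\theta-b$) to remove the gradient dependence, the subadditivity of $\cE$ to split off the quadratic penalty and cancel it against the second infimum, and the comparison principle of Theorem~\ref{thm:unique} applied to the generators on $\Cb^2$ to pass to the semigroups. Your added remarks on the finiteness of the second infimum and on why evaluating at $\eta=0$ alone would not suffice are accurate but not needed beyond what the paper records.
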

\begin{proof}
 Let $f\in\Cb^2$ and $x\in\Rd$. Since $\Theta=\Rd$, we can substitute $\theta$ by $Df(x)+\theta$ to obtain
 \[ \inf_{\theta\in\Rd}G_\theta\left(D^2f(x),Df(x)\right)=\inf_{\theta\in\Rd}G_\theta\big(D^2f(x),0\big). \]
 In addition, for every $\theta\in\Rd$, the sublinearity of $\cE[\,\cdot\,]$ implies 
 \[ \frac{1}{2}\cE\big[\zeta^T D^2f(x)\zeta+\alpha |\theta^T\zeta|^2\big]-\theta^T\mu
    \leq\frac{1}{2}\cE\big[\zeta^T D^2f(x)\zeta\big] 
    +\frac{\alpha}{2}\cE\big[|\theta^T\zeta|^2\big]-\theta^T\mu \]
 and therefore $(Af)(x)\leq\frac{1}{2}\cE[\zeta^T D^2f(x)\zeta]$. Since the family $(T(t))_{t\geq 0}$ 
 is a strongly continuous convex monotone semigroup on $\Cb$ with generator
 \[ (Bf)(x)=\frac{1}{2}\cE[\zeta^T D^2f(x)\zeta] \quad\mbox{for all } f\in\Cb^2 \mbox{ and } x\in\Rd, \]
 it follows from Theorem~\ref{thm:unique} that $S(t)f\leq T(t)f$ for all $t\geq 0$ and $f\in\Cb$.
\end{proof}

In one dimension, condition~\eqref{eq:non-deg} is valid if and only if $\mu=\zeta=0$ 
or $\Theta=0$ or $\cE[|\zeta|^2]>0$. The first case is trivial and the second case corresponds
to the $G$-expectation. Moreover, writing $\cE[\,\cdot\,]=\sup_{\Q\in\cQ}\E_\Q[\,\cdot\,]$, 
the third case occurs if there exists $\Q\in\cQ$ with $\Q(\zeta\neq 0)>0$. 
Hence, in one dimension, condition~\eqref{eq:non-deg} is satisfied in all relevant examples.
Furthermore, in multi dimensions, condition~\eqref{eq:non-deg2} means that the variance
of the increment $\theta^T\zeta$ is non zero for any strategy which also non trivially  
invests in the drift. Since $\zeta$ has mean zero and thus the chance of loosing the 
investment exists, this means that the agent can not use the drift in order to reduce her 
risk infinitely.

So far, we defined the asymptotic risk-based prices corresponding to the case that no
trading constraints are imposed as the limit of asymptotic risk-based prices
corresponding to the case that the trading strategies are restricted to a bounded set. 
The question arises whether, in the absence of trading constraints, the asymptotic 
risk-based prices can also be obtained directly as the limit of unconstrained multi-step indifference 
prices. In order to achieve this approximation, we assume that there exist $M\geq 0$ and 
$t_1>0$ with 
\begin{equation} \label{eq:mgf1}
 \log\big(\cE[e^{t|\zeta|^2}]\big)\leq Mt \quad\mbox{for all } t\in [0,t_1].
\end{equation}
In addition, for every $C\geq 0$, there exist $t_2>0$ and $R\geq 0$ with 
\begin{equation} \label{eq:mgf2}
 \log\big(\cE[e^{\sqrt{t}\theta^T\zeta-t|\zeta|^2}]\big)\geq C|\theta|t
 \quad\mbox{for all } t\in [0,t_2] \mbox{ and } |\theta|\geq R.
\end{equation}
In particular, applying condition~\eqref{eq:mgf2} with $C:=|\mu|$ yields $t_0>0$ such that 
\[ (I(t)f)(x):=\inf_{\theta\in\Theta}\rho[f(x+t\mu+\sqrt{t}\zeta)-\theta^T(t\mu+\sqrt{t}\zeta)]
    -\inf_{\theta\in\Theta}\rho[-\theta^T(t\mu+\sqrt{t}\zeta)] \]
is well-defined for all $t\in [0,t_0]$, $f\in\Cb$ and $x\in\Rd$. Previously, we only
imposed conditions on the first and second moments of $\zeta$ which uniquely determine
the asymptotic risk-based prices.\ This is due to the fact that strongly continuous convex
monotone semigroups are uniquely determined by their generators. In particular,
the convergence in Theorem~\ref{thm:unbounded} is derived from the convergence of
the generators which only depend on the covariance of $\zeta$ but not on further 
information about its distribution. In contrast to the asymptotic prices, the one-step 
prices depend on the particular distribution of $\zeta$ which explains the necessity 
of imposing additional conditions on the exponential moments.

\begin{theorem} \label{thm:unbounded2}
 Let $\Theta:=\Rd$ and let Assumption~\ref{ass:main} and the conditions~\eqref{eq:mgf1} 
 and~\eqref{eq:mgf2} be satisfied. Then, denoting by $(S(t))_{t\geq 0}$ the semigroup from 
 Theorem~\ref{thm:unbounded}, we obtain
 \[ S(t)f=\lim_{n\to\infty}I\big(\tfrac{t}{n}\big)^n f 
    \quad\mbox{for all } t\geq 0 \mbox{ and } f\in\Cb. \]
\end{theorem}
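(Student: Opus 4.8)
The plan is to realize the semigroup $(S(t))_{t\ge 0}$ of Theorem~\ref{thm:unbounded} as the Chernoff-type limit of the unconstrained one-step pricing operators $(I(t))_{t\in[0,t_0]}$, using the approximation theorem from~\cite{BDKN22} recalled at the beginning of Section~\ref{sec:proof}, and then to identify the resulting semigroup with $(S(t))_{t\ge 0}$ through the comparison principle of Theorem~\ref{thm:unique}. As a first step I would record the structural properties of $I(t)$ that hold for $t\in[0,t_0]$: cash invariance and monotonicity give $I(t)0=0$, $I(t)(f+c)=I(t)f+c$, $-\|f\|_\infty\le I(t)f\le\|f\|_\infty$ and $\|I(t)f-I(t)g\|_\infty\le\|f-g\|_\infty$; convexity of $I(t)$ follows from convexity of the robust entropic risk measure together with stability of convexity under the infima over $\theta$; and replacing $f$ by a translate gives $\Lip(I(t)f)\le\Lip(f)$, so the Lipschitz bound is preserved along the iteration.

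The technical core is a localization estimate: for every $L\ge 0$ there exist $R=R(L)\ge 0$ and $\tau=\tau(L)>0$ such that, for every bounded $L$-Lipschitz $g\in\Cb$, every $x\in\Rd$ and every $t\in[0,\tau]$, both infima defining $(I(t)g)(x)$ are attained over $\Theta_R=B_R(0)$, hence $I(t)g=I_R(t)g$ there. For $L$-Lipschitz $g$ one has $|g(x+t\mu+\sqrt t\zeta)-g(x)|\le L|t\mu+\sqrt t\zeta|$ uniformly in $x$, so that for large $|\theta|$ the term $-\theta^T(t\mu+\sqrt t\zeta)$ dominates; combining $-t\,\theta^T\mu\ge -t|\theta|\,|\mu|$ with the lower bound $\tfrac1\alpha\log\cE[e^{-\alpha\sqrt t\theta^T\zeta}]\ge\tfrac1\alpha\log\cE[e^{-\alpha\sqrt t\theta^T\zeta-t|\zeta|^2}]\ge C|\theta|t$ from condition~\eqref{eq:mgf2} (with $C$ chosen large relative to $|\mu|$, $L$ and $\alpha$; the Lipschitz contribution is absorbed into the $-t|\zeta|^2$ weight of~\eqref{eq:mgf2} by Young's inequality) shows that $\rho[g(x+t\mu+\sqrt t\zeta)-\theta^T(t\mu+\sqrt t\zeta)]\to+\infty$ as $|\theta|\to\infty$, uniformly in $x$ and in $t\in[0,\tau]$, while the infima over the compact set $\Theta_R$ are finite by the linear exponential moment bound in Assumption~\ref{ass:main}; condition~\eqref{eq:mgf1} enters when controlling the residual exponential moments. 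This uniform localization is exactly what rules out the doubling strategies that would otherwise drive $I(t/n)^nf$ to $-\infty$. Since $I_R(t)$ also preserves Lipschitz constants, an induction yields $I(t/n)^kg=I_R(t/n)^kg$ for all $k$ whenever $R\ge R(L)$, $t/n\le\tau(L)$ and $g$ is $L$-Lipschitz; letting $n\to\infty$ (Theorem~\ref{thm:main}) and then $R\to\infty$ (Theorem~\ref{thm:unbounded}) shows that $S_R(t)g$ is eventually constant in $R$ and that $\lim_{n\to\infty}I(t/n)^ng=S(t)g$ for every bounded Lipschitz $g$.

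Next I would verify that $(I(t))_{t\in[0,t_0]}$ meets the stability hypotheses of the Chernoff theorem — boundedness, the nonexpansivity and Lipschitz estimates above, membership $I(t)f\in\Cb$, mixed-topology continuity, and equicontinuity on $\|\cdot\|_\infty$-bounded sets — transferring these from the controlled operators $I_R(t)$ via the localization. The theorem then produces a strongly continuous convex monotone semigroup $\widehat S(t)f:=\lim_{n\to\infty}I(t/n)^nf$ on $\Cb$ whose generator, restricted to $\Cb^2$, equals $I'(0)$. Since every $f\in\Cb^2$ is Lipschitz, the localization gives $I(t)f=I_R(t)f$ for $R\ge R(\|Df\|_\infty)$ and small $t$, whence $I'(0)f=\lim_{t\downarrow0}\tfrac{I_R(t)f-f}{t}=(A_Rf)(\cdot)=\inf_{\theta\in\Theta_R}G_\theta(D^2f,Df)-\inf_{\theta\in\Theta_R}G_\theta(0,0)$ by Theorem~\ref{thm:main}; for $\Theta=\Rd$ and $R$ large the constrained infima coincide with the unconstrained ones, so $I'(0)f=(Af)(\cdot)$, i.e. the generator of the semigroup from Theorem~\ref{thm:unbounded}. (Alternatively $I'(0)f$ can be obtained directly from a second-order Taylor expansion of $f(x+t\mu+\sqrt t\zeta)-\theta^T(t\mu+\sqrt t\zeta)$ inserted into $\rho[\,\cdot\,]=\tfrac1\alpha\log\cE[e^{\alpha\,\cdot}]$, using $\cE[\pm a^T\zeta]=0$ together with the sub- and superadditivity of $\cE$ to annihilate the $\sqrt t$-order term and~\eqref{eq:mgf1} to make the remainder $o(t)$ uniform in $\theta$ on bounded sets and in $x$ on compacts, then interchanging $\inf_\theta$ with $\lim_{t\downarrow0}$.)

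Finally, $\widehat S$ and $S$ are strongly continuous convex monotone semigroups on $\Cb$ whose generators agree on $\Cb^2$ and which satisfy the regularity assumptions of Theorem~\ref{thm:unique} (preservation of sup- and Lipschitz bounds; both agree with one another on the Lipschitz functions by the previous step and are therefore dominated by the heat-type semigroup of Corollary~\ref{cor:Rd}), so Theorem~\ref{thm:unique} forces $\widehat S(t)=S(t)$ for all $t\ge 0$, which is the assertion. I expect the main obstacle to be the uniformity in the localization step — uniform as $t\downarrow 0$, uniform in $x$ over compacts, and stable along the iteration — because only then do the unconstrained iterates genuinely coincide with the volume-constrained ones and inherit both their stability and their infinitesimal behaviour; the interchange of $\inf_\theta$ and $\lim_{t\downarrow 0}$ needed in the direct generator computation leans on the same uniformity together with~\eqref{eq:mgf1}.
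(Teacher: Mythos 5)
Your overall architecture coincides with the paper's: verify the Chernoff hypotheses (Assumption~\ref{ass:cher}) for the unconstrained one-step operators, compute $I'(0)f$ for $f\in\Cb^2$ by localizing the infimum over $\theta$ to a bounded set, and conclude via Theorem~\ref{thm:cher} and the comparison principle of Theorem~\ref{thm:unique}. The gap lies in your central localization lemma. You claim that for every $L$ there are $R(L)$ and $\tau(L)$ such that $I(t)g=I_R(t)g$ for \emph{all} bounded $L$-Lipschitz $g$ and all $t\in[0,\tau(L)]$, and you propose to absorb the Lipschitz contribution into the $-t|\zeta|^2$ weight of~\eqref{eq:mgf2} by Young's inequality. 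This does not work: for an $L$-Lipschitz $g$ the relevant error term is of order $L\sqrt{t}\,|\zeta|$, and any Young-type bound $L\sqrt{t}\,|\zeta|\leq\eta t|\zeta|^2+\tfrac{L^2}{4\eta}$ leaves a constant $\tfrac{L^2}{4\eta}$ that does not vanish as $t\downarrow0$, whereas the gain from~\eqref{eq:mgf2} on $\{|\theta|\geq R\}$ is only of order $|\theta|t$ and the value at a fixed bounded strategy is $g(x)+O(\sqrt{t})$. Consequently the radius needed for the comparison blows up as $t\downarrow0$, and your induction $I(t/n)^kg=I_R(t/n)^kg$ with $R$ independent of $n$ is unjustified. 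For the same reason your assertion that, for $f\in\Cb^2$, the localization radius depends only on the Lipschitz constant $\|Df\|_\infty$ is incorrect.

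The paper avoids this by proving the localization only for $f\in\Cb^2$: expanding $f$ to second order and recentering the strategy at $\theta=Df(x)$ cancels the first-order term exactly, so that, with $\zeta_t:=t\mu+\sqrt{t}\zeta$, the remaining error $\int_0^1\zeta_t^T D^2f(x+s\zeta_t)\zeta_t(1-s)\,\d s$ is bounded below by $-2\|D^2f\|_\infty(|\mu|^2t^2+t|\zeta|^2)$ and thus carries precisely the $-t|\zeta|^2$ structure required by~\eqref{eq:mgf2}, while condition~\eqref{eq:mgf1} controls the comparison value at $\theta=Df(x)$ by $f(x)+O(t)$. The resulting radius depends on both $\|Df\|_\infty$ and $\|D^2f\|_\infty$ (but on nothing else), which is exactly the uniformity needed to verify condition~(vi) of Assumption~\ref{ass:cher} and the generator formula. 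Since your route through Theorem~\ref{thm:cher} and Theorem~\ref{thm:unique} only ever requires the localization for $\Cb^2$ functions, the proof is repairable by replacing the Lipschitz localization with this $C^2$ argument and dropping the direct identification of $\lim_{n\to\infty}I(t/n)^ng$ with $S_R(t)g$ for merely Lipschitz $g$, which is neither needed nor established.
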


The proof is given in Subsection~\ref{sec:proof3} and it relies on the fact that doubling strategies are automatically excluded by the additional conditions on the risk measure.\ Furthermore, we show in Appendix~\ref{app:mgf} that the conditions~\eqref{eq:mgf1} and~\eqref{eq:mgf2} are satisfied for bounded symmetric distributions and for families of normal distributions. These distributions naturally appear in numerical implementations 
of the iterative scheme, see Section~\ref{sec:numerics}.

So far, the risk aversion of the agent has been described by a fixed parameter $\alpha>0$
which did not appear in the notation. However, the generator and thus the 
corresponding semigroup clearly depend on the choice of this parameter.\ Subsequently, 
we denote by $(S_\alpha(t))_{t\geq 0}$ the semigroup from Theorem~\ref{thm:unbounded} 
previously denoted by $(S(t))_{t\geq 0}$ and by $A_\alpha f$ its generator previously 
denoted by $Af$. Corollary~\ref{cor:Rd} states that, in the absence of trading constraints
and for any $\alpha>0$, the asymptotic risk-based prices are dominated by the worst-case 
prices corresponding to the $G$-expectation.\ We now show that this upper bound is achieved 
as the risk aversion of the agent tends to infinity if there exists $\delta>0$ with
\begin{equation} \label{eq:strict_ell}
 -\cE\big[-|\theta^T\zeta|^2\big]\geq\delta |\theta|^2 
 \quad\mbox{for all } \theta\in\Rd \mbox{ with } \theta^T\mu\neq 0.
\end{equation}
Condition~\eqref{eq:strict_ell} guarantees that condition~\eqref{eq:non-deg} is also 
valid since Lemma~\ref{lem:E}(iv) implies 
\[ \cE[|\theta^T\zeta|^2]\geq -\cE[-|\theta^T\zeta|^2] \quad\mbox{for all } \theta\in\Rd. \]
In particular, Theorem~\ref{thm:unbounded} and Corollary~\ref{cor:Rd} can be applied.
For a linear expectation both conditions are clearly equivalent but the same is not
true in the sublinear case.

\begin{theorem} \label{thm:alpha}
 Let $\Theta:=\Rd$ and suppose that Assumption~\ref{ass:main} and condition~\eqref{eq:strict_ell} 
 are satisfied. Then, as the risk aversion of the agent tends to infinity, the limit 
 \[ S(t)f:=\lim_{\alpha\to\infty}S_\alpha(t)f \] 
 of the unconstrained asymptotic risk-based prices exists for all $t\geq 0$ and $f\in\Cb$.\
 Moreover, the family $(S(t)_{t\geq 0}$ is a strongly continuous convex monotone semigroup 
 on $\Cb$ which is uniquely determined by its generator satisfying $\Cb^2\subset D(A)$ and 
 \[ (Af)(x)=\frac{1}{2}\cE\big[\zeta^T D^2f(x)\zeta\Big] \quad\mbox{for all } f\in\Cb^2 \mbox{ and } x\in\Rd. \]
 In particular, the limit of the risk-based prices coincides with the worst-case prices, i.e., 
 \[ S(t)f=T(t)f \quad\mbox{for all } t\geq 0 \mbox{ and } f\in\Cb, \]
 where the strongly continuous convex monotone semigroup $(T(t))_{t\geq 0}$ on $\Cb$ is given by 
 \[ T(t)f:=\lim_{n\to\infty}J\big(\tfrac{t}{n}\big)f \quad\mbox{with}\quad
    (J(t)f)(x):=\cE[f(x+\sqrt{t}\zeta)]. \]
\end{theorem}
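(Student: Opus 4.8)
The plan is to avoid any delicate limit construction and instead deduce the convergence from a sharp two-sided comparison of the generators $A_\alpha$ with $Bf:=\tfrac12\cE[\zeta^T D^2f\zeta]$, combined with the comparison principle for strongly continuous convex monotone semigroups (Theorem~\ref{thm:unique}). As a preliminary I would record that \eqref{eq:strict_ell} implies \eqref{eq:non-deg}, hence also \eqref{eq:non-deg2}, so that both Theorem~\ref{thm:unbounded} and Corollary~\ref{cor:Rd} are available: for every $\alpha>0$ the family $(S_\alpha(t))_{t\geq0}$ is a strongly continuous convex monotone semigroup on $\Cb$ with $\Cb^2\subset D(A_\alpha)$, with $A_\alpha f$ given by the formula in Corollary~\ref{cor:Rd}, and moreover $S_\alpha(t)f\leq T(t)f$ for all $t\geq0$ and $f\in\Cb$, where $(T(t))_{t\geq0}$ denotes the $G$-expectation semigroup from Corollary~\ref{cor:Rd}, which is strongly continuous convex monotone with generator $Bf=\tfrac12\cE[\zeta^T D^2f\zeta]$ on $\Cb^2$. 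The whole assertion therefore reduces to producing the reverse inequality, up to an error vanishing as $\alpha\to\infty$.

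The key step is the elementary bound
\[ Bf(x)-\frac{C}{\alpha}\ \leq\ (A_\alpha f)(x)\ \leq\ Bf(x) \qquad\text{for all } f\in\Cb^2,\ x\in\Rd,\ \alpha>0, \]
with $C:=|\mu|^2/(2\delta)$ and $\delta>0$ from \eqref{eq:strict_ell}. The upper bound is precisely Corollary~\ref{cor:Rd}. For the lower bound I would start from $(A_\alpha f)(x)=\inf_{\theta\in\Rd}\bigl(\tfrac12\cE[\zeta^T D^2f(x)\zeta+\alpha|\theta^T\zeta|^2]-\theta^T\mu\bigr)-\inf_{\theta\in\Rd}\bigl(\tfrac{\alpha}{2}\cE[|\theta^T\zeta|^2]-\theta^T\mu\bigr)$, observe that the second infimum is at most $0$ (choose $\theta=0$), and estimate the first one by splitting on $\theta^T\mu$: if $\theta^T\mu=0$, monotonicity of $\cE[\,\cdot\,]$ gives $\tfrac12\cE[\zeta^T D^2f(x)\zeta+\alpha|\theta^T\zeta|^2]-\theta^T\mu\geq Bf(x)$; if $\theta^T\mu\neq0$, subadditivity of $\cE[\,\cdot\,]$ gives $\cE[\zeta^T D^2f(x)\zeta+\alpha|\theta^T\zeta|^2]\geq\cE[\zeta^T D^2f(x)\zeta]+\alpha\bigl(-\cE[-|\theta^T\zeta|^2]\bigr)$, and \eqref{eq:strict_ell} yields $-\cE[-|\theta^T\zeta|^2]\geq\delta|\theta|^2$, so minimising the resulting quadratic $\tfrac{\alpha\delta}{2}|\theta|^2-|\mu||\theta|$ produces $\tfrac12\cE[\zeta^T D^2f(x)\zeta+\alpha|\theta^T\zeta|^2]-\theta^T\mu\geq Bf(x)-C/\alpha$. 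Taking the infimum over $\theta$ and subtracting a nonpositive quantity gives the lower bound. This is the one place where \eqref{eq:strict_ell} is genuinely needed instead of the weaker \eqref{eq:non-deg2}, which controls only $\cE[|\theta^T\zeta|^2]\geq-\cE[-|\theta^T\zeta|^2]$.

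Next I would introduce the shifted family $T_\alpha(t)f:=T(t)f-\tfrac{C}{\alpha}t$ and check that it is again a strongly continuous convex monotone semigroup on $\Cb$ with $\Cb^2\subset D(B_\alpha)$ and generator $B_\alpha f=Bf-\tfrac{C}{\alpha}$; the semigroup property uses the cash invariance of $T(t)$ (valid since $T(t)f(x)=\cE[f(B_t^x)]$), while convexity, monotonicity and strong continuity are inherited because the added term is deterministic and independent of $f$. Since $B_\alpha f\leq A_\alpha f$ on $\Cb^2$ by the previous step, the comparison principle of Theorem~\ref{thm:unique}---applied exactly as in the proof of Corollary~\ref{cor:Rd}---gives $T_\alpha(t)f\leq S_\alpha(t)f$ for all $t\geq0$ and $f\in\Cb$. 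Together with $S_\alpha(t)f\leq T(t)f$ this yields
\[ T(t)f-\frac{Ct}{\alpha}\ \leq\ S_\alpha(t)f\ \leq\ T(t)f \qquad\text{for all } t\geq0,\ f\in\Cb, \]
so $S_\alpha(t)f\to T(t)f$ uniformly, and a fortiori in the mixed topology, as $\alpha\to\infty$. The limit is the semigroup $(T(t))_{t\geq0}$, which is strongly continuous convex monotone with generator $Bf=\tfrac12\cE[\zeta^T D^2f\zeta]$ on $\Cb^2$ and uniquely determined by it (Theorem~\ref{thm:unique}), which is exactly the claimed statement including the identification $S(t)=T(t)$.

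I do not expect a genuine obstacle: once the two-sided generator estimate is in place the argument is short. The points requiring some care, rather than real difficulty, are (i) checking that the shifted family $(T_\alpha(t))$ and the already constructed $(S_\alpha(t))$ satisfy all standing hypotheses of Theorem~\ref{thm:unique} so that the comparison principle applies---routine, since $T_\alpha$ is a deterministic translate of the $G$-expectation semigroup and $S_\alpha$ was produced in Theorem~\ref{thm:unbounded}---and (ii) the standard density argument extending the final two inequalities from $f\in\Cb^2$ to all $f\in\Cb$, using that each $S_\alpha(t)$ and $T(t)$ is $\|\cdot\|_\infty$-nonexpansive and continuous on norm-bounded sets for the mixed topology.
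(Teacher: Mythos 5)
Your route is genuinely different from the paper's. The paper proves Theorem~\ref{thm:alpha} by verifying the hypotheses of the stability theorem from~\cite{BKN23}: it shows that $A_\alpha f\to Bf$ uniformly in $x$ and $f\in\Cb^2$ (using condition~\eqref{eq:strict_ell} to restrict the infimum to $|\theta|\leq c_\alpha:=2|\mu|/(\alpha\delta)$, obtaining a bound of order $1/\alpha$), checks a uniform continuity-from-above property of the family $(S_{\alpha,R})$, and then invokes~\cite[Theorem~2.5]{BKN23}. Your sandwich argument replaces all of this by the two-sided generator estimate $Bf-C/\alpha\leq A_\alpha f\leq Bf$ with $C=|\mu|^2/(2\delta)$ --- which is correct, and is essentially the same computation as the paper's verification of generator convergence, reorganized --- followed by two applications of the comparison principle. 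What your approach buys is an explicit, uniform convergence rate $\|S_\alpha(t)f-T(t)f\|_\infty\leq Ct/\alpha$ for all $f\in\Cb$, and it avoids the delicate uniform equicontinuity-from-above verification entirely. Your closing worry about a density argument from $\Cb^2$ to $\Cb$ is moot: Theorem~\ref{thm:unique} already yields the ordering for all $f\in\Cb$.

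There is, however, one concrete gap. The shifted family $T_\alpha(t)f:=T(t)f-\tfrac{C}{\alpha}t$ is \emph{not} a strongly continuous convex monotone semigroup in the sense of the paper's Definition: condition~(i) there requires $S(t)f_n\downarrow 0$ whenever $f_n\downarrow 0$, which forces $S(t)0=0$, whereas $T_\alpha(t)0=-Ct/\alpha\neq 0$. Consequently Theorem~\ref{thm:unique} cannot be quoted verbatim for the pair $(T_\alpha,S_\alpha)$, and your claim that $(T_\alpha(t))$ ``satisfies all standing hypotheses'' is not checked where it actually fails. This is a normalization issue rather than a substantive one --- a comparison principle of the form ``$Af\leq Bf+c$ on $\Cbi$ implies $S(t)f\leq T(t)f+ct$'' is what you really need, and it can be extracted from the unnormalized framework of~\cite{BDKN22} --- but as written the step $T_\alpha(t)f\leq S_\alpha(t)f$ does not follow from the results stated in the paper. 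You should either prove this affine variant of the comparison principle explicitly or verify that the cited result in~\cite{BDKN22} applies without the normalization $S(t)0=0$.
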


The proof is given in Subsection~\ref{sec:proof4}.\ We conclude this section with
a brief discussion of the difference between the conditions~\eqref{eq:non-deg} 
and~\eqref{eq:strict_ell} using the binomial model and the normal distribution
as illustrative examples.

\begin{remark} \label{rem:alpha}
 First, we consider a one-dimensional uncertain binomial model
 \[ \cE[f(\zeta)]:=\sup_{\sigma\in [\underline{\sigma},\overline{\sigma}]}\frac{1}{2}\big(f(\sigma)+f(-\sigma)\big)
    \quad\mbox{for all } f\in\Cb, \]
 where $0\leq\underline{\sigma}\leq\overline{\sigma}$ are fixed parameters.\
 Condition~\eqref{eq:non-deg} is equivalent to $\overline{\sigma}>0$ meaning that the
 ambiguity set contains at least one non deterministic model and condition~\eqref{eq:strict_ell} 
 is equivalent to $\underline{\sigma}>0$ meaning that all the models are non deterministic
 and their volatility is uniformly bounded from below. 
 
 Second, let $\Lambda$ be a bounded set of positive semi-definite symmetric $d\times d$-matrices and 
 \[ \cE[f(\zeta)]:=\sup_{\Sigma\in\Lambda}\int_{\Rd}f(\sigma y)\,\cN(0,\one)(\d y) \quad\mbox{for all } f \in\Cb, \]
 where $\cN(0,\one)$ denotes the $d$-dimensional standard normal distribution and $\sigma\in\R^{d\times d}$
 is any matrix with $\sigma\sigma^T=\Sigma$. For every $\theta\in\Rd$, 
 \[ \cE[|\theta^T\zeta|^2]
    =\sup_{\Sigma\in\Lambda}\int_{\Rd}\bigg(\sum_{i,j=1}^d\sigma_{i,j}\theta_i y_j\bigg)^2\,\cN(0,\one)(\d y) 
    =\sup_{\Sigma\in\Lambda}|\sigma^T\theta|^2=\sup_{\Sigma\in\Lambda}\theta^T\Sigma\theta. \]
 Hence, condition~\eqref{eq:non-deg} means that, for every $\theta\in\Rd$, there exists $\Sigma\in\Lambda$ 
 with $\theta^T\Sigma\theta\geq\delta |\theta|^2$. However, none of the matrices has to be positive definite, i.e., none of the linear models has to satisfy condition~\eqref{eq:non-deg}.
 On the other hand, condition~\eqref{eq:strict_ell} is satisfied if and only if $\Lambda$
 is a set of uniformly positive definite matrices, i.e., it holds
 $\inf_{|\theta|=1}\inf_{\Sigma\in\Lambda}\theta^T\Sigma\theta>0$. Hence, all
 the linear models have to satisfy condition~\eqref{eq:strict_ell} with a uniform parameter. 
\end{remark}

\section{Examples and numerical illustrations} 
\label{sec:numerics}

In this section, we illustrate the application of our pricing model to different market dynamics.
Since the continuous-time pricing dynamics does not depend on the choice of the particular model
apart from its covariance structure, we can use simple models for the approximation. For instance,
the Bachelier model (or the Black-Scholes model in the geometric case), is obtained as scaling 
limit of the binomial model. Similarly, we can start from indifference prices defined in a 
binomial model with or without volatility uncertainty and recursively solve the optimal investing 
problem for every trading period. We further illustrate the dependence of the pricing dynamics
on both the level of risk aversion and of uncertainty as well as the convergence of the risk-based 
prices to the worst-case prices as the risk aversion tends to infinity. We also compare different 
linear models with the same covariance structures leading to the same continuous-time pricing 
dynamics although the error bounds might be different. Finally, we observe that the bid-ask spread
for the risk-based prices is clearly smaller the one for the worst-case ones.

Throughout this section, we focus on pricing a butterfly option written on a single asset. Recall 
that a butterfly option with lower strike $K_{\rm L}$, middle strike $K_{\rm M}$ and upper strike 
$K_{\rm U}>0$ gives the holder of the contract the right to obtain at maturity the payoff
\[ f(x)=(x-K_{\rm L})^+ -2(x-K_{\rm M})^+ +(x-K_{\rm U})^+. \]
Usually, one requires $K_{\rm U}-K_{\rm M}=K_{\rm M}-K_{\rm L}$.

In order to produce the numerical illustrations, we always implement\footnote{Source code and examples are available at \url{https://github.com/sgarale/risk_based_pricing}.} the discrete-time approximation given by equation~\eqref{eq:price.limit}.
Note that one could also exploit the characterization of the pricing dynamics as a non-linear PDE
but working on the level of the generator poses additional difficulties when dealing with non smooth 
functions as it is mostly the case in financial contracts. Hence, we directly instead compute at 
each step of the iteration the trading strategy that optimally reduces the risk for the seller 
of the contract in the next trading period.\ We will only consider models satisfying the
conditions~\eqref{eq:non-deg2}-\eqref{eq:strict_ell} which allows us to choose $\Theta:=\Rd$ 
and to consider the limit $\alpha\to\infty$. In particular, we can perform an unconstrained optimization.

\subsection{Implementation}

For the iteration of the one-step operators, we have to find a suitable numerical representation 
of the resulting functions. Starting with a payoff function~$f$, which is known on its entire domain, 
we numerically compute the quantity $(I(t/n)f)(x_i)$ on a finite set $\{x_i\}_{i=1,\dots,N}$.
In order to extend $I(t/n)f$ to its entire domain we then have to prescribe an 
interpolation method. The available possibilities include the following:
\begin{itemize}
 \item directly interpolate $I(t/n)f$, e.g., linearly, using splines, etc,
 \item save the optimizers corresponding to the points $\{x_i\}_{i=1,\dots, N}$ and interpolate 
  them when computing $I(t/n)f$ on new points.
\end{itemize}
 When testing these methods, the first one does not seem feasible: in order to obtain a good 
 approximation of the value $I(t/n)f$, which is then used for the next step of the iteration, 
 one has to start from a very fine spatial grid. This is computationally expensive and 
 becomes even more challenging in higher dimensions. We therefore choose the second option:
 first, we compute $I(t/n)f$ on a set of points covering the region of the domain we are interested in. 
 The resulting optimizers are then interpolated to obtain a better approximation  of $I(t/n)f$
 on a finer grid.
 Furthermore, when the increments of the model are bounded, we can explicitly choose the bounds 
 of the grid to guarantee that errors coming from the part of the domain that we disregard are avoided.
 For the sake of illustration, we consider a one-dimensional binomial model with volatility $\sigma>0$ 
 and suppose that we are interested in approximating the value $S(t)f$ on the interval $[\underline{x},\overline{x}]$
 by an $n$-step iteration with step-size $h:=t/n$. Then, for the last step, the function $I(h)^{n-1}f$ 
 coming from the $(n-1)$-th step, will be evaluated on the region 
 $[\underline{x}-|\mu|h-\sigma\sqrt{h},\overline{x}+|\mu|h+\sigma\sqrt{h}]$. 
 Proceeding backwards, we obtain that it is sufficient to start the iteration with a grid contained 
 in the interval
 \[ [\underline{x}-|\mu|t-\sigma\sqrt{nt},\overline{x}+|\mu|t+\sigma\sqrt{nt}] \]
 in order to avoid errors that might otherwise propagate to the interval $[\underline{x},\overline{x}]$.
 This also shows that a finer time discretization comes at the cost of enlarging the spatial grid.

\subsection{Binomial model} \label{sec:numerics.bin}

We consider a one-dimensional binomial model with drift $\mu\in\R$ and volatility $\sigma>0$,
i.e., the distribution of $\zeta$ under the market expectation is given by
\[ \cE[f(\zeta)]:=\E_\P[f(\zeta)]=\frac{1}{2}\big(f(\sigma)+f(-\sigma )\big) \quad\mbox{for all } f\in\Cb, \]
where $\P\circ\zeta^{-1}:=\frac{1}{2}(\delta_\sigma+\delta_{-\sigma})$.\
Assumption~\ref{ass:main} and condition~\eqref{eq:strict_ell} are clearly satisfied.
Hence, for $\Theta:=\Rd$ and any risk aversion $\alpha>0$, Theorem~\ref{thm:unbounded} yields 
a strongly continuous convex monotone semigroup $(S(t))_{t\geq 0}$ on $\Cb$ which is 
uniquely determined by its generator satisfying $\Cb^2\subset D(A)$ and 
\[ (Af)(x)=\inf_{\theta\in\R}\bigg(\frac{1}{2}\E_\P\big[f''(x)\zeta^2+\alpha\theta^2\zeta^2\big]-\theta\mu\bigg) 
    -\inf_{\theta\in\R}\bigg(\frac{\alpha}{2}\E_\P\big[\theta^2\zeta^2\big]-\theta\mu\bigg) 
    =\frac{1}{2}\sigma^2 f''(x) \]
for all $f\in\Cb^2$ and $x\in\Rd$. Due to Theorem~\ref{thm:unique}, the semigroup $(S(t))_{t\geq 0}$ 
coincides with the linear heat semigroup corresponding to the pricing dynamics under the Bachelier 
model~\cite{Bachelier1900}.
The risk aversion parameter $\alpha>0$ does not appear in the generator of $(S(t))_{t\geq 0}$ 
which is not surprising since the binomial model is complete. Hence, there is no reason for 
the prices to be sensitive to the risk aversion of an agent if the agent can replicate any payoff.

\begin{figure}[h]
 \centering
 \includegraphics[width=0.5\textwidth]{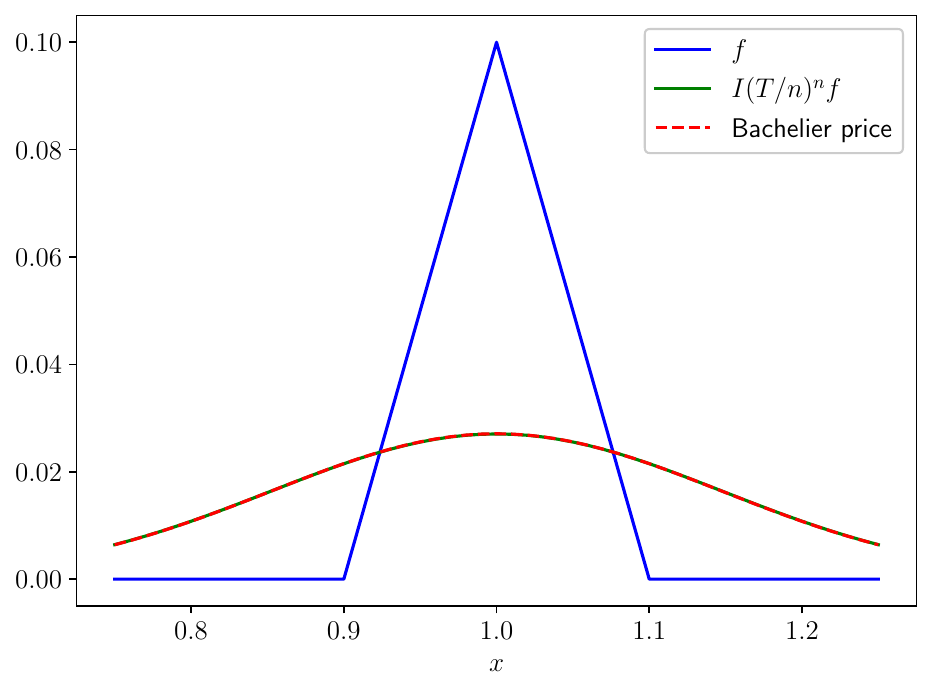}
 \caption{Convergence of the binomial model to the Bachelier model for a butterfly
  option\textsuperscript{\ref{fn:butterfly}}. Maturity: $T=0.5$; number of time steps: $n=200$; 
  parameters of the process: $\sigma=20\%$, $\mu=5\%$; risk aversion: $\alpha=1$.}
 \label{fig:bf_bin}
\end{figure}

Figure~\ref{fig:bf_bin} shows the convergence of the binomial models to the Bachelier 
model for a butterfly option\footnote{Throughout Section~\ref{sec:numerics}, we consider a butterfly 
option with strikes $K_{\rm L} = 0.9$, $K_{\rm M} = 1$ and $K_{\rm U} = 1.1$.\label{fn:butterfly}} 
with maturity $T=0.5$ (6 months) starting from a binomial model with volatility $\sigma=20\%$ 
and drift $\mu=5\%$.

\subsection{Several linear models}
\label{sec:num.linear}

The observation that the risk aversion parameter does not affect the pricing dynamics extends
to any linear model. Indeed, for $\Theta:=\Rd$ and any linear expectation $\cE[\cdot]:=\E_\P[\cdot]$, 
Corollary~\ref{cor:Rd} guarantees that the generator is given by 
\[ (Af)(x)=\frac{1}{2}\E_\P\big[\zeta^T D^2f(x)\zeta\big] 
    \quad\mbox{for all } f\in\Cb^2 \mbox{ and } x\in\Rd. \]
Furthermore, as long as the models share the same covariance structure given by the function 
$G(a):=\E[\zeta^T a\zeta]$ for all $a\in\R^{d\times d}$, the semigroup $(S(t))_{t\geq 0}$
does not depend on the particular choice of the distribution $\P\circ\zeta^{-1}$.

\begin{figure}[h]
 \centering
 \subfloat[Pricing dynamics]{\includegraphics[width=0.45\textwidth]{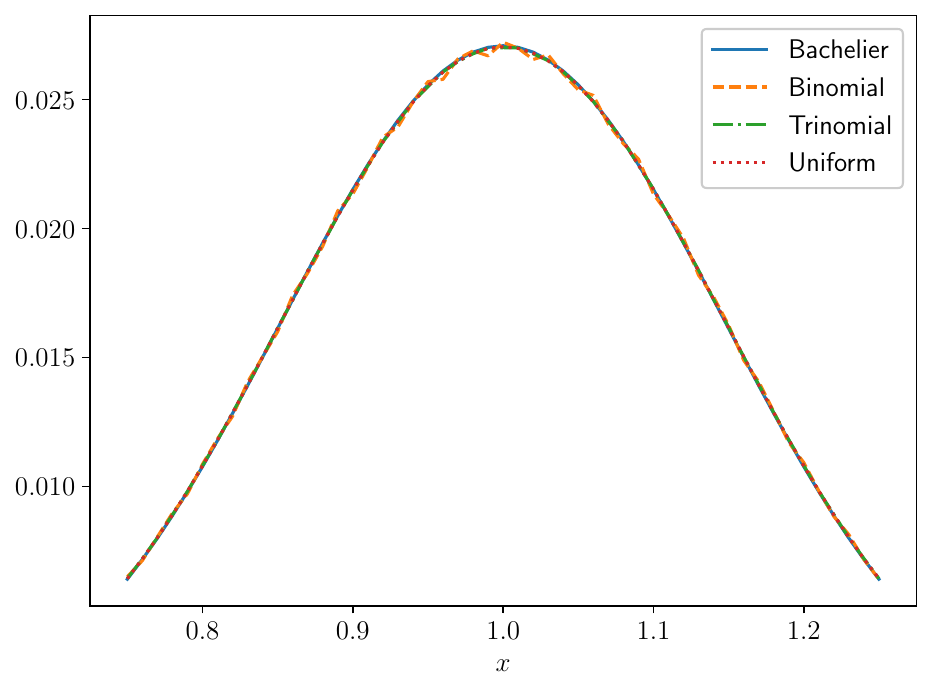}}
 \hfill
 \subfloat[Implied Bachelier volatilities]{\includegraphics[width=0.45\textwidth]{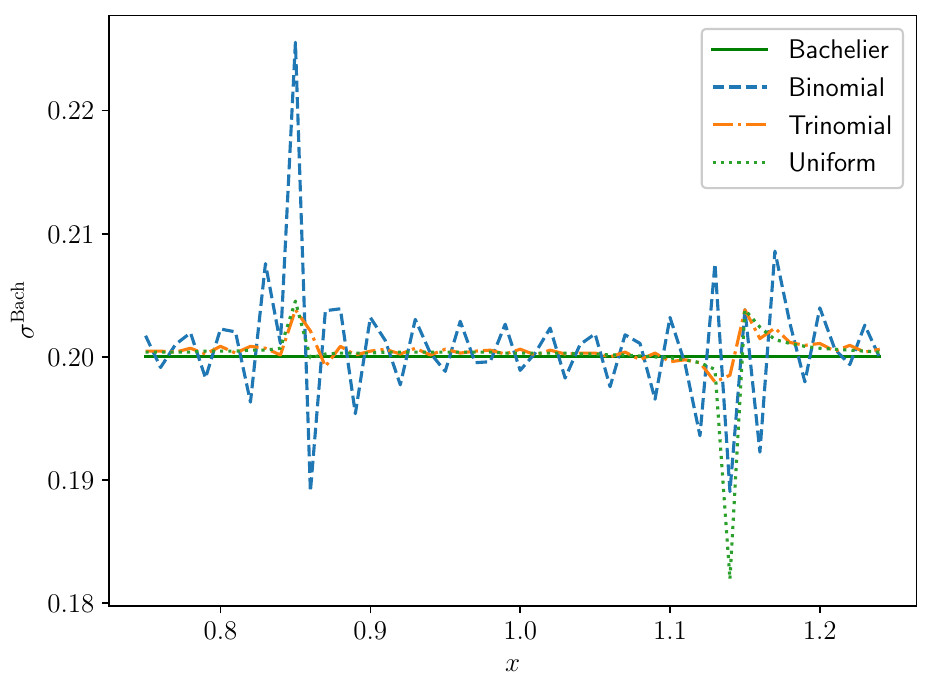}}
 \caption{Convergence of several linear models to the Bachelier model for a butterfly 
  option\textsuperscript{\ref{fn:butterfly}}; Parameters: $n=100$, $\sigma=20\%$, $\mu=5\%$, $\alpha=1$.
  Figure (a) displays the pricing functional for the maturity $T=0.5$; figure (b) shows the 
  corresponding Bachelier implied volatilities.}
 \label{fig:bf_diff_mod}
\end{figure}

Figure~\ref{fig:bf_diff_mod} shows the approximation of the same risk-based prices with
different linear models having the same volatility ($\sigma=20\%$). We compare the following models:
\begin{align*}
 \text{(Binomial)} &\qquad\E_\P[f(\zeta)]=\frac{1}{2}\big(f(\sigma)+f(-\sigma)\big), \\
 \text{(Trinomial)} &\qquad\E_\P[f(\zeta)]=\frac{1}{3}\big(f(\sqrt{3/2}\sigma)+f(0)+f(-\sqrt{3/2}\sigma)\big), \\
 \text{(Uniform)} &\qquad\E_\P[f(\zeta)]=\frac{1}{2\sigma\sqrt{3}}\int_{-\sigma\sqrt{3}}^{\sigma\sqrt{3}} f(x)\,\d x.
\end{align*}
While all these models converge to the same Bachelier price, it seems that richer models 
convergence faster when using the same number of steps in the time discretization. 
This is particularly evident from the plot of the Bachelier implied volatilities.

\subsection{Uncertain binomial model} 
\label{sec:num.unc.bin}

We consider a one-dimensional binomial model with drift $\mu\in\R$ and volatility uncertainty.
Using the parametrization from Example~\ref{ex:sub.expect}(ii), we choose $\nu:=\delta_0$ 
and $\Lambda:=[\sigma_0-u,\sigma_0+u]$, where $\sigma_0>0$ is a reference volatility and $u\in [0,\sigma_0]$ 
is the level of uncertainty. This yields the sublinear expectation
\[ \cE[f(\zeta)]:=\sup_{\sigma\in [\sigma_0-u,\sigma_0+u]}\frac{1}{2}\big(f(\sigma)+f(-\sigma)\big)
    \quad\mbox{for all } f\in\Cb. \]
Here, the risk averse agent fears the worst-case and increases the price as it can be seen 
in Figure~\ref{fig:bf_mult_unc}.

\begin{figure}[h]
 \centering
 \includegraphics[width=0.5\textwidth]{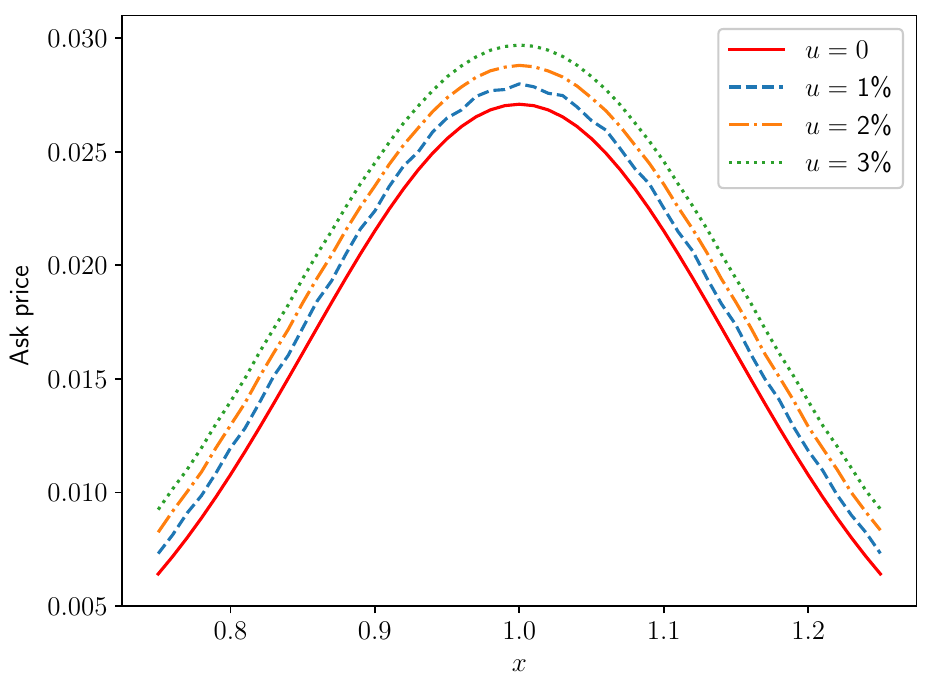}
 \caption{Impact of the level of uncertainty on the risk-based ask price for a butterfly
  option\textsuperscript{\ref{fn:butterfly}}. Parameters: $T=0.5$, $n=100$, $\mu=5\%$, 
  $\sigma_0=20\%$, $\alpha=1$.}
 \label{fig:bf_mult_unc}
\end{figure}

Moreover, condition~\eqref{eq:strict_ell} is satisfied for any $u\in [0,\sigma_0)$ in which case
Theorem~\ref{thm:alpha} implies that the risk-based prices convergence to the $G$-expectation as 
$\alpha\to\infty$. 
Figure~\ref{fig:bf_unc_a} displays the worst-case bound given by the $G$-expectation 
and the risk-based prices for different levels of risk aversion. As shown in Corollary~\ref{cor:Rd},
the risk-based prices are always lower than the worst-case ones. Figure \ref{fig:bf_unc_b} shows 
more in detail how the risk-based prices approach the $G$-expectation as the risk aversion parameter 
increases. Recall from Subsection~\ref{sec:prel.chernoff} that the $G$-expectation is obtained 
by a Chernoff-type approximation with one-step operator $(J(t)f)(x):=\cE[f(x+\sqrt{t}\zeta)]$.

\begin{figure}[h]
 \centering
 \subfloat[Payoff and ask prices\label{fig:bf_unc_a}]{\includegraphics[width=0.45\textwidth]{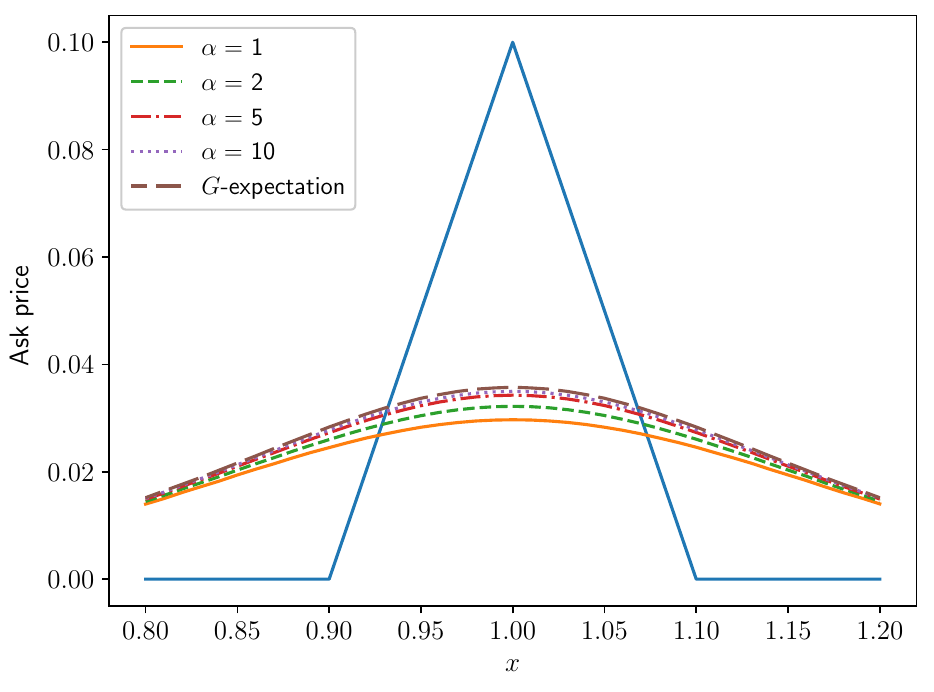}}
 \hfill
 \subfloat[Focus on the risk aversion\label{fig:bf_unc_b}]{\includegraphics[width=0.45\textwidth]{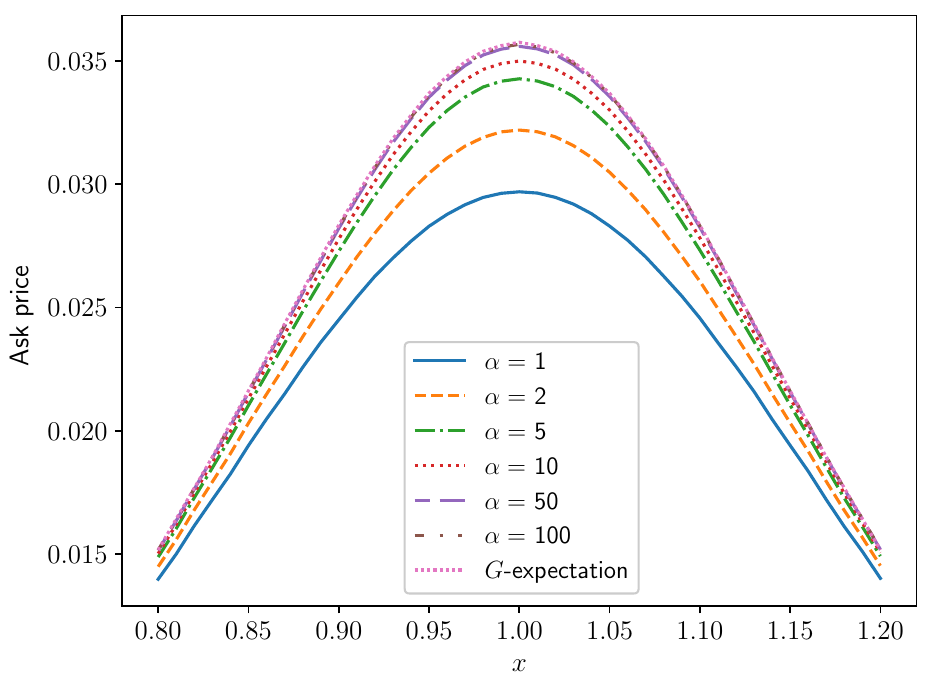}}
 \caption{Impact of the risk aversion parameter on the risk-based ask price for a butterfly
  option\textsuperscript{\ref{fn:butterfly}} and comparison with the worst-case bound.
  Parameters: $T=0.5$, $n=100$, $\mu=5\%$, $\sigma_0=20\%$, $u=3\%$. Figure (a) also 
  displays the payoff function while figure (b) shows more levels of risk aversion.}
 \label{fig:bf_unc_bin}
\end{figure}

\subsection{Bid-ask spread} 
\label{sec:bid-ask}

Using the same arguments as in Section~\ref{sec:indifference}, we can additionally define 
bid pricing operators. Indeed, switching to the buyer position, we obtain that the 
one-step bid prices $b_{h,X_t^h}(f)$ have to satisfy the indifference pricing relation
\[ \inf_{\theta\in\Theta}\rho\big[b_{h,X_t^h}(f)-f(X_{t+h}^x)-\theta^T\big(X_{t+h}^x-X_t^x\big)\big]
    =\inf_{\theta\in\Theta}\rho\big[-\theta^T\big(X_{t+h}^x-X_t^x\big)\big]. \]
Hence, similarly to equation~\eqref{eq:def_I}, we can define the one-step pricing operators
\[ (J(t)f)(x):=b_{t,x}(f)=\tilde{\rho}_{t,x}[0]-\tilde{\rho}_{t,x}[-f]=-\big(I(t)(-f)\big)(x) \]
and the corresponding time consistent multi-step pricing operators 
\[ b_{kh,x}^k:=(J(h)^k f)(x)=-\big(I(h)^k (-f)\big)(x)=-a_{kh,x}^k (-f) \]
for all $f\in\Cb$, $x\in\Rd$ and $k\in\N$. Consequently, for every $t\geq 0$ and $h_n:=t/n$,
the limit 
\[ b_{t,x}^\infty (f):=\lim_{n\to\infty}b_{nh_n,x}^n (f)
    =\lim_{n\to\infty}-a_{nh_n,x}^n (-f)=-a_{t,x}^\infty (-f) \]
exists and satisfies $b_{t,x}^\infty(f)\leq a_{t,x}^\infty(f)$ for all $f\in\Cb$ and $x\in\Rd$.  
Hence, in the absence of trading constraints, Corollary~\ref{cor:Rd} implies that the bid-ask 
spread for the risk-based prices is smaller than the bid-ask spread for the worst-case prices 
associated to the $G$-expectation. 
We now consider again the uncertain binomial model from Subsection~\ref{sec:num.unc.bin} with 
reference volatility $\sigma_0=20\%$ and uncertainty level $u=3\%$.

\begin{figure}[h]
 \centering
 \subfloat[Payoff and bid prices\label{fig:bf_bid_unc}]{\includegraphics[width=0.45\textwidth]{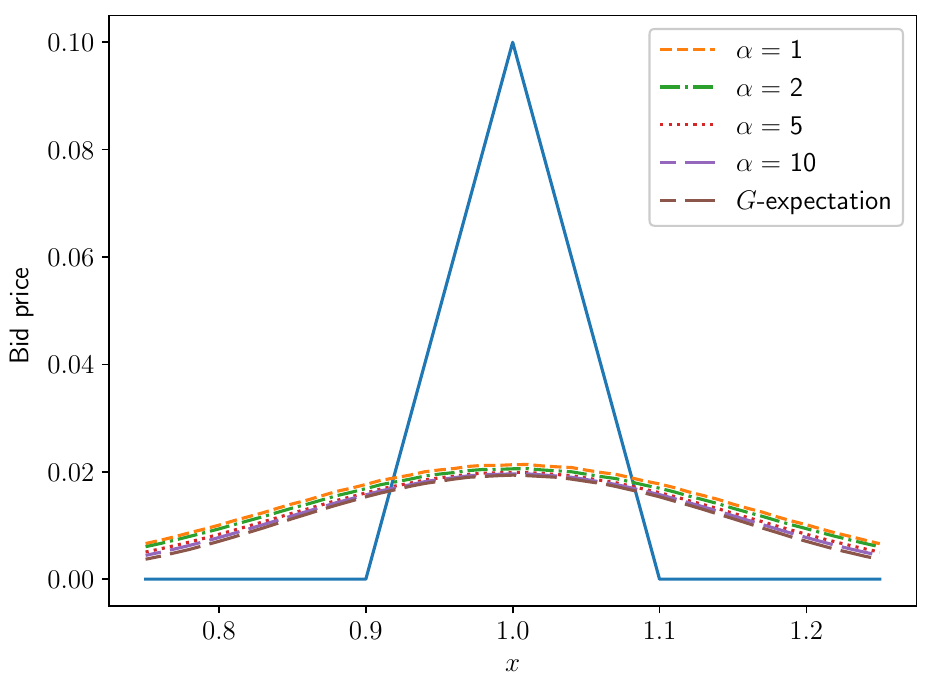}}
 \hfill
 \subfloat[Risk-based bounds vs worst-case bounds\label{fig:bf_bid_ask}]{\includegraphics[width=0.45\textwidth]{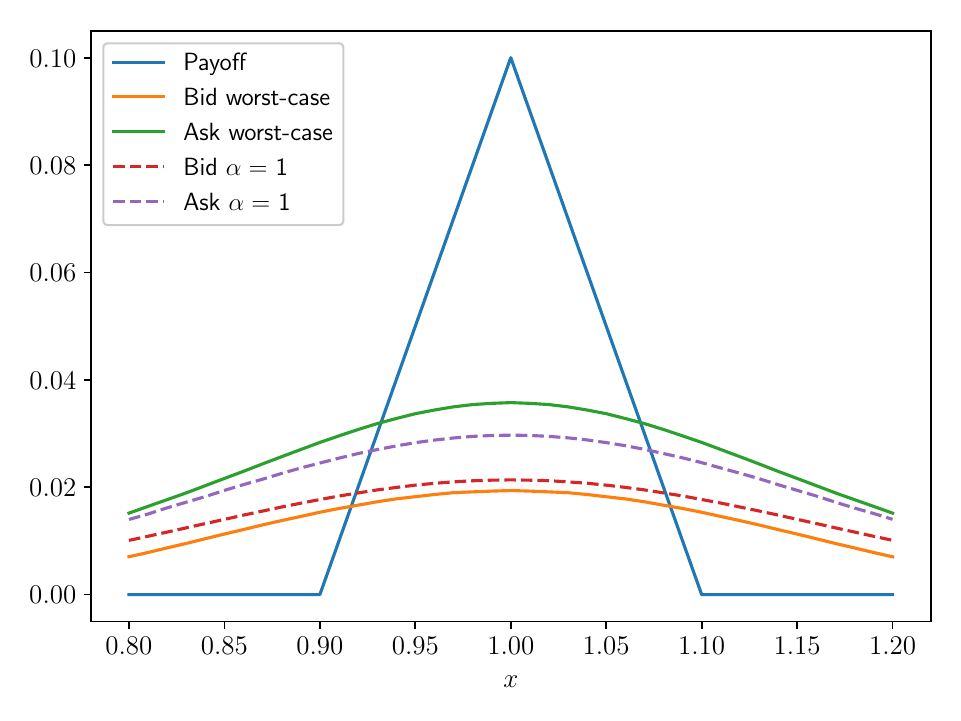}}
 \caption{(a) Impact of the risk aversion on the risk-based bid price for a butterfly
  option\textsuperscript{\ref{fn:butterfly}}. (b) Comparison of the risk-based bid-ask bounds
  with the worst-case bid-ask bounds. Parameters: $T=0.5$, $n=100$, $\mu=5\%$, $\sigma_0=20\%$, $u=3\%$.}
\end{figure}

Figure~\ref{fig:bf_bid_unc} 
shows the bid prices at different levels of risk aversion for a butterfly option\textsuperscript{\ref{fn:butterfly}} 
while Figure~\ref{fig:bf_bid_ask} compares the risk-based bid-ask bounds with the worst-case ones
corresponding to pricing with a $G$-expectation. The risk-based approach, which considers the 
attitude of the agent towards uncertainty, clearly leads to a reduction of the bid-ask spread.

\section{Proof of the main results}
\label{sec:proof}

In order to prove Theorem~\ref{thm:main} and Theorem~\ref{thm:unbounded}, we rely on the following 
results from~\cite{BDKN22}.

\begin{definition}
 Let $(I(t))_{t\geq 0}$ be a family of operators $I(t)\colon\Cb\to\Cb$. The Lipschitz set $\LI$ 
 consists of all $f\in\Cb$ such that there exist $c\geq 0$ and $t_0>0$ with
 \[ \|I(t)f-f\|_\infty\leq ct \quad\mbox{for all } t\in [0,t_0]. \]
 Moreover, for every $f\in\Cb$ such that the following limit exists, we define 
 \[ I'(0)f:=\lim_{h\downarrow 0}\frac{I(h)f-f}{h}\in\Cb. \]
\end{definition}

Recall that all limits in $\Cb$ are taken w.r.t. the mixed topology.

\begin{assumption} \label{ass:cher}
 Let $(I(t))_{t\geq 0}$ be a family of operators $I(t)\colon\Cb\to\Cb$ satisfying 
 the following conditions:
 \begin{enumerate}
  \item $I(0)=\id_{\Cb}$.
  \item $I(t)$ is convex and monotone with $I(t)0=0$ for all $t\geq 0$.
  \item There exists $\omega\geq 0$ with
   \[ \|I(t)f-I(t)g\|_\infty\leq e^{\omega t}\|f-g\|_\infty
   	\quad\mbox{for all } t\in [0,1] \mbox{ and } f,g\in\Cb. \]
  \item For every $\eps>0$, there exist $t_0>0$ and $\delta>0$ with
   \[ I(t)(\tau_x f)\leq\tau_x I(t)f+r\eps t \]
   for all $t\in [0,t_0]$, $x\in B_{\Rd}(\delta)$, $r\geq 0$ and $f\in\Lipb(r)$.
  \item The limit $I'(0)f\in\Cb$ exists for all $f\in\Cbi$.
  \item For every $T\geq 0$, $K\Subset\Rd$ and $(f_k)_{k\in\N}\subset\Cb$ with
   $f_k\downarrow 0$,
   \[ \sup_{(t,x)\in [0,T]\times K}\sup_{n\in\N}\big(I\big(\tfrac{t}{n}\big)^n f_k\big)(x)\downarrow 0
   	\quad\mbox{as } k\to \infty. \]
   \item It holds $I(t)\colon\Lipb(r)\to\Lipb(e^{\omega t}r)$ for all $r,t\geq 0$.
 \end{enumerate}
\end{assumption}

The previous conditions guarantee that~\cite[Assumption~5.7]{BDKN22} is satisfied since 
condition~(vi) is equivalent to~\cite[Assumption~5.7(vi)]{BDKN22}, see~\cite[Lemma~C.2]{BDKN22}.
Hence, the next theorem follows immediately from~\cite[Theorem~5.4, Theorem~5.9 and Corollary~5.11]{BDKN22}.

\begin{theorem} \label{thm:cher}
 Let $(I(t))_{t\geq 0}$ be a family of operators $I(t)\colon\Cb\to\Cb$ satisfying Assumption~\ref{ass:cher}. 
 Then, there exists a strongly continuous convex monotone semigroup $(S(t))_{t\geq 0}$ on $\Cb$ 
 which is given by
 \begin{equation} \label{eq:cher}
  S(t)f:=\lim_{n\to\infty}I\big(\tfrac{t}{n}\big)^n f \quad\mbox{for all } t\geq 0 \mbox{ and } f\in\Cb.
 \end{equation}
 In addition, the following statements are valid:
 \begin{enumerate}
  \item It holds $f\in D(A)$ and $Af=I'(0)f$ for all $f\in\Cb$ such that $I'(0)f\in\Cb$ exists.
   In particular, this is valid for all $f\in\Cbi$.
  \item It holds $\|S(t)f-S(t)g\|_\infty\leq e^{\omega t}\|f-g\|_\infty$ for all $t\geq 0$
   and $f,g\in\Cb$.  
  \item For every $\eps>0$, $r,T\geq 0$ and $K\Subset\R^d$, there exist $K'\Subset\R^d$
   and $c\geq 0$ with
   \[ \|S(t)f-S(t)g\|_{\infty,K}\leq c\|f-g\|_{\infty,K'}+\eps \]
   for all $t\in [0,T]$ and $f,g\in B_{\Cb}(r)$.
  \item It holds $\LI\subset\LS$ and $S(t)\colon\LS\to\LS$ for all $t\geq 0$.
  \item  For every $\eps>0$, there exists $\delta>0$ with  
   \[ S(t)(\tau_x f)\leq\tau_x S(t)f+e^{\omega t}r\eps t \]
  for all $r,t\geq 0$, $f\in\Lipb(r)$ and $x\in B_{\R^d}(\delta)$.
  \item It holds $S(t)\colon\Lipb(r)\to\Lipb(e^{\omega t}r)$ for all $r,t\geq 0$.
 \end{enumerate}
\end{theorem}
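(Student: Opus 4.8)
The plan is to show that Assumption~\ref{ass:cher} is nothing but a repackaging of the hypotheses under which the abstract Chernoff approximation of~\cite{BDKN22} operates, and then to harvest all the claimed conclusions directly from the results cited there.

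First I would go through the items of Assumption~\ref{ass:cher} one at a time and check that~(i)--(v) together with~(vii) coincide, up to purely notational rewriting, with the corresponding conditions of~\cite[Assumption~5.7]{BDKN22}: the normalisation $I(0)=\id_{\Cb}$, convexity and monotonicity with $I(t)0=0$, the global $e^{\omega t}$-Lipschitz bound on $\Cb$, the local shift-continuity estimate on the Lipschitz balls $\Lipb(r)$, differentiability of $t\mapsto I(t)f$ at the origin for $f\in\Cbi$, and preservation of the Lipschitz balls. The single item that is not literally the same is condition~(vi); here the plan is to invoke~\cite[Lemma~C.2]{BDKN22}, which establishes that the monotone-vanishing requirement on the iterates $I(\tfrac{t}{n})^n f_k$ stated in~(vi) is equivalent to~\cite[Assumption~5.7(vi)]{BDKN22}. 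Once these matches are in place, $(I(t))_{t\geq 0}$ satisfies~\cite[Assumption~5.7]{BDKN22}.

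Next I would invoke~\cite[Theorem~5.4]{BDKN22} to conclude that the limit~\eqref{eq:cher} exists in the mixed topology for all $t\geq 0$ and $f\in\Cb$ and that $(S(t))_{t\geq 0}$ is a strongly continuous convex monotone semigroup on $\Cb$; the generator identification in that theorem yields statement~(i), i.e.\ $f\in D(A)$ with $Af=I'(0)f$ whenever the latter limit exists, in particular on $\Cbi$ by Assumption~\ref{ass:cher}(v). Statement~(ii) I would obtain by iterating the bound in Assumption~\ref{ass:cher}(iii), which gives $\|I(\tfrac{t}{n})^n f-I(\tfrac{t}{n})^n g\|_\infty\leq e^{\omega t}\|f-g\|_\infty$ for $n$ large, and passing to the limit $n\to\infty$. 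Finally, statements~(iii)--(vi), namely the local Lipschitz estimate on compact sets, the inclusion $\LI\subset\LS$ with invariance $S(t)\colon\LS\to\LS$, the shift estimate on $\Lipb(r)$, and the mapping property $S(t)\colon\Lipb(r)\to\Lipb(e^{\omega t}r)$, are read off from~\cite[Theorem~5.9 and Corollary~5.11]{BDKN22} applied to the semigroup just produced.

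Because the argument is a sequence of citations once the dictionary between the two assumption lists is set up, I do not anticipate a genuine difficulty; the only place that demands any thought is the translation of condition~(vi) via~\cite[Lemma~C.2]{BDKN22}, which is exactly why that lemma is flagged in the sentence preceding the statement. Everything else is bookkeeping.
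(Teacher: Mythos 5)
Your proposal matches the paper's argument exactly: the paper also observes that Assumption~\ref{ass:cher} reproduces \cite[Assumption~5.7]{BDKN22} with condition~(vi) handled via \cite[Lemma~C.2]{BDKN22}, and then cites \cite[Theorem~5.4, Theorem~5.9 and Corollary~5.11]{BDKN22} for all the conclusions. This is the same approach, correctly executed.
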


It follows from~\cite[Theorem~4.7]{BDKN22} that semigroups which have been constructed this 
way are uniquely determined by their generators evaluated at smooth functions.

\begin{theorem} \label{thm:unique}
 Let $(S(t))_{t\geq 0}$ and $(T(t))_{t\geq 0}$ be two strongly continuous convex monotone
 semigroups on $\Cb$ with generators $A$ and $B$, respectively, which satisfy the conditions~(v) 
 and~(vi) of Theorem~\ref{thm:cher}. Furthermore, we assume that $\Cbi\subset D(A)\cap D(B)$ and 
 \[ Af\leq Bf \quad\mbox{for all } f\in\Cbi. \]
 Then, it holds $S(t)f\leq T(t)f$ for all $t\geq 0$ and $f\in\Cb$. 
\end{theorem}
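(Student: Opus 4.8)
The statement is an instance of \cite[Theorem~4.7]{BDKN22}, whose hypotheses coincide with the ones assumed here; the plan is to recall the comparison argument behind it. First I would reduce to smooth initial data. Given $f\in\Cb$, the mollifications $f_k:=f*\rho_{1/k}$ lie in $\Cbi$, satisfy $\|f_k\|_\infty\le\|f\|_\infty$, and converge to $f$ in the mixed topology. Applying Theorem~\ref{thm:cher}(iii) to both semigroups, with the ball of radius $r:=\|f\|_\infty$, shows $S(t)f_k\to S(t)f$ and $T(t)f_k\to T(t)f$ uniformly on compact sets, hence pointwise; thus it is enough to prove $S(t)f\le T(t)f$ for $f\in\Cbi$.

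So fix $f\in\Cbi$, $t>0$, $n\in\N$, put $h:=t/n$ and $g_k:=S((n-k-1)h)f$ for $k=0,\dots,n-1$. Telescoping $T(kh)S((n-k)h)f$ over $k$ and using $T((k+1)h)=T(kh)T(h)$ and $S((n-k)h)=S(h)S((n-k-1)h)$ gives
\[
 T(t)f-S(t)f=\sum_{k=0}^{n-1}\Big(T(kh)\big(T(h)g_k\big)-T(kh)\big(S(h)g_k\big)\Big).
\]
For any $\phi,\psi\in\Cb$, monotonicity of $T(kh)$ together with the Lipschitz estimate of Theorem~\ref{thm:cher}(ii) yields $T(kh)\phi-T(kh)\psi\ge-e^{\omega t}\|(\psi-\phi)^+\|_\infty$; applying this with $\phi=T(h)g_k$ and $\psi=S(h)g_k$ gives
\[
 T(t)f-S(t)f\ge-e^{\omega t}\sum_{k=0}^{n-1}\big\|\big(S(h)g_k-T(h)g_k\big)^+\big\|_\infty .
\]
Since $f\in\Cbi$ is Lipschitz, say $f\in\Lipb(r_0)$, Theorem~\ref{thm:cher}(vi) puts all $g_k$ in the common ball $\Lipb(r)$ with $r:=e^{\omega t}r_0$. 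Hence the proof reduces to the one-step statement: for every $\eps>0$ there is $h_0>0$ with $\|(S(h)g-T(h)g)^+\|_\infty\le\eps h$ for all $h\in(0,h_0]$ and $g\in\Lipb(r)$; for then the right-hand side above is $\ge-e^{\omega t}\eps t$ once $n$ is large, and $\eps\downarrow 0$ concludes.

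To prove this one-step estimate for merely Lipschitz data $g$, I would mollify $g^\delta:=g*\rho_\delta\in\Cbi$, so that $\|g^\delta-g\|_\infty\le r\delta$ and $g^\delta\in\Lipb(r)$. Two applications of Theorem~\ref{thm:cher}(ii) reduce the task to bounding $S(h)g^\delta-T(h)g^\delta$ up to an additive $2e^{\omega h}r\delta$, and since $g^\delta\in\Cbi\subset D(A)\cap D(B)$ with $Ag^\delta\le Bg^\delta$, strong continuity gives $S(h)g^\delta-T(h)g^\delta=h(Ag^\delta-Bg^\delta)+o(h)\le o(h)$. The main obstacle is that this $o(h)$ is neither uniform in $\delta$ nor, a priori, uniform on $\Rd$ — strong continuity only delivers the difference quotients in the mixed topology — so one has to localise the estimate to compact sets, propagate it by the local Lipschitz bound of Theorem~\ref{thm:cher}(iii), and then choose $\delta=\delta(h)\downarrow 0$ by a diagonal argument balancing the remainder against $r\delta$. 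It is precisely for these localisation and mollification steps that the approximate translation invariance of Theorem~\ref{thm:cher}(v) and the Lipschitz estimate of Theorem~\ref{thm:cher}(vi) for \emph{both} semigroups are required; carrying all of this out quantitatively is the content of \cite[Section~4]{BDKN22}.
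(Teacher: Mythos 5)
Your proposal rests on exactly the same justification the paper itself uses: Theorem~\ref{thm:unique} is stated in the paper without proof, as a direct consequence of \cite[Theorem~4.7]{BDKN22}. Your supplementary sketch of the comparison argument (reduction to $\Cbi$ by mollification and Theorem~\ref{thm:cher}(iii), the telescoping of $T(kh)S((n-k)h)f$, and the uniform one-step estimate on $\Lipb(r)$) is sound, and you correctly identify the uniformity of that one-step estimate over $\Lipb(r)$ as the genuinely hard point, which is precisely what the cited result of \cite{BDKN22} supplies.
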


\subsection{Proof of Theorem~\ref{thm:main}}
\label{sec:proof1}

Recall that
\[ (I_R(t)f)(x):=\inf_{\theta\in\Theta_R}\rho[f(x+\zeta_t)-\theta^T\zeta_t]-c_R(t), \]
for all $R,t\geq 0$, $f\in\Cb$ and $x\in\Rd$, where $\zeta_t:=t\mu+\sqrt{t}\zeta$ and 
$c_R(t):=\inf_{\theta\in\Theta_R}\rho[-\theta^T\zeta_t]$. In the sequel, we fix $R\geq 0$ 
and therefore simply write $I(t)f:=I_R(t)f$, $S(t)f:=S_R(t)f$, $Af:=A_R f$, $\Theta:=\Theta_R$
and $c(t):=c_R(t)$.

\begin{proof}[Proof of Theorem~\ref{thm:main}]
 In order to apply Theorem~\ref{thm:cher}, we have to verify Assumption~\ref{ass:cher}. 
 Furthermore, we have to show that 
 \[ (I'(0)f)(x)=\inf_{\theta\in\Theta}G_\theta(D^2f(x),Df(x))-\inf_{\theta\in\Theta}G_\theta(0,0)
    \quad\mbox{for all } f\in\Cb^2 \mbox{ and } x\in\Rd, \]
 where $G_\theta$ is defined by equation~\eqref{eq:Gtheta}.\ 
 First, we verify Assumption~\ref{ass:cher}(i)-(iv) and~(vii). Condition~(i) follows from 
 $\zeta_0\equiv 0$. Regarding condition~(ii), the monotonicity of $\rho$ yields the monotonicity
 of $I(t)$ and $I(t)0=0$ holds by definition. Let $\lambda\in [0,1]$, 
 $f_1, f_2\in\Cb$, $x\in\Rd$ and $\eps>0$. Let $\theta_1,\theta_2\in\Theta$ be $\eps$-optimizers
 of $(I(t)f_1)(x)$ and $(I(t)f_2)(x)$, respectively, and define $\tilde{f}=\lambda f_1+(1-\lambda)f_2\in\Cb$ 
 and $\tilde{\theta}=\lambda\theta_1+(1-\lambda)\theta_2\in\Theta$. Then, by convexity of $\rho$, 
 \begin{align*}
  \big(I(t)\tilde{f}\big)(x) 
  &\leq\rho \big[\tilde{f}(x+\zeta_t) + \tilde{\theta}^T\zeta_t \big]-c(t) \\
  &\leq\lambda\big(\rho\big[f_1(x+\zeta_t)+\theta_1^T\zeta_t\big]-c(t)\big)
    +(1-\lambda)\big(\rho\big[f_2(x+\zeta_t)+\theta_2^T\zeta_t\big]-c(t)\big) \\
  &=\lambda(I(t)f_1)(x)+(1-\lambda)(I(t)f_2)(x)+\eps.
 \end{align*}
 Letting $\eps\downarrow 0$ shows that $I(t)$ is convex. For every $t\geq 0$, $f,g\in\Cb$, 
 $x\in\Rd$ and $\theta\in\Theta$, the monotonicity and cash invariance of $\rho$ imply 
 \[ (I(t)f)(x)\leq\rho\big[f(x+\zeta_t)+\theta^T\zeta_t\big]-c(t)
  \leq\rho\big[g(x+\zeta_t)+\theta^T\zeta_t\big]-c(t)+\|f-g\|_\infty. \]
 Taking the infimum over $\theta\in\Theta$ and changing the roles of $f$ and $g$ yields
 \[ \|I(t)f-I(t)g\|_\infty\leq\|f - g\|_\infty \quad\mbox{for all } t\geq 0 \]
 which shows that condition~(iii) is satisfied with $\omega:=0$. 
 Condition~(iv) follows from 
 \[ I(t)(\tau_x f)=\tau_x I(t)f \quad\mbox{for all } t\geq 0, f\in\Cb  \mbox{ and } x\in\Rd. \]
 In order to verify condition~(vii) with $\omega:=0$, we observe that 
 \begin{align*}
  |(I(t)f)(x)-(I(t)f)(y)| &=|(\tau_x I(t)f)(0)-(\tau_y I(t)f)(0)|
    =|(I(t)\tau_x f)(0)-(I(t)\tau_y f)(0)| \\
  &\leq\|\tau_x f-\tau_y f\|_\infty\leq r|x-y|
 \end{align*}
 for all $r,t\geq 0$, $f \in \Lip_b(r)$ and $x\in\Rd$.

 Second, we show that $\Cb^2\subset\LI$. Let $f\in\Cb^2$. For every $t\in [0,1]$, $x\in\Rd$ 
 and $\theta\in\Theta$, applying Taylor's formula on the function
 $g(y):=\exp(\alpha (f(x+y)-f(x))-\alpha\theta^T y)$ yields  
 \begin{align} 
  &\exp\big(\alpha(f(x+\zeta_t)-f(x))-\alpha\theta^T\zeta_t\big)
    =1+\alpha (Df(x)-\theta)^T\zeta_t \nonumber \\
  &\quad\; +\int_0^1\big(\alpha^2|(Df(x+s\zeta_t)-\theta)^T\zeta_t|^2+\alpha\zeta_t^T D^2f(x+s\zeta_t)\zeta_t\big)
    g(s\zeta_t)(1-s)\,\d s. \label{eq:taylor}
 \end{align}
 It follows from Assumption~\ref{ass:main} and Lemma \ref{lem:E}(vi) that
 \begin{align*}
  &\big|\cE\big[\exp(\alpha (f(x+\zeta_t)-f(x))-\alpha\theta^T\zeta_t)\big]-1\big| \\
  &\leq\alpha (\|Df\|_\infty+R)|\mu|t
     +\frac{\alpha}{2}\cE\left[\big(\alpha(\|Df\|_\infty+R)^2+\|D^2f\|_\infty\big)
     |\zeta_t|^2 e^{2\alpha\|f\|_\infty+\alpha R|\zeta_t|}\right] \\
  &\leq\alpha(\|Df\|_\infty+R)|\mu|t \\
  &\quad\; +\frac{\alpha}{2}\cE\left[\big(\alpha(\|Df\|_\infty+R)^2+\|D^2f\|_\infty\big)
    \big(|\mu|^2+2|\mu|\cdot |\zeta|+|\zeta|^2\big)e^{\alpha(2\|f\|_\infty+R(|\mu|+|\zeta|))}\right]t
 \end{align*}
 for all $t\in [0,1]$, $x\in\Rd$ and $\theta\in\Theta$. Hence, there exist $c_1,c_2\geq 0$
 and $t_0>$ with
 \[ -c_2 t\leq\log(1-c_1 t)
    \leq\log\big(\cE\big[\exp(\alpha f(x+\zeta_t)-\alpha\theta^T\zeta_t\big]\big)-f(x)
    \leq\log(1+c_1 t)\leq c_2t \]
 for all $t\in [0,t_0]$, $x\in\Rd$ and $\theta\in\Theta$. Dividing by $\alpha>0$, taking the supremum 
 over $\theta\in\Theta$ and applying the previous estimate with $f\equiv 0$ shows that
 \[ \|I(t)f-f\|_\infty
    \leq\sup_{\theta\in\Theta}\big\|\rho\big[f(\,\cdot\,+\zeta_t)-\theta^T\zeta_t\big]-f\big\|_\infty
    +\sup_{\theta\in\Theta}\big|\rho\big[-\theta^T\zeta_t\big]\big|\leq 2c_2t 
    \quad\mbox{for all } t\in [0,t_0]. \]

 Third, we verify condition~(vi).\ Several sufficient conditions which guarantee that 
 condition~(vi) is valid have been systemically explored in~\cite[Subsection~2.5]{BKN23}. 
 Here, we show that, for every $r\geq 0$, there exists $c\geq 0$ with
 \begin{equation} \label{eq:cont.above}
  \|I(t)f-f\|_\infty\leq c\big(\|Df\|_\infty+\|D^2f\|_\infty\big)t
 \end{equation}
 for all $t\in [0,1]$ and $f\in\Cb^2$ with $\|f\|_\infty\leq r$ and $\|Df\|_\infty+\|D^2f\|_\infty\leq 1$.
 Let $r\geq 0$, $t\in [0,1]$ and $f\in\Cb^2$ with $\|f\|_\infty\leq r$ and $0<\|Df\|_\infty+\|D^2f\|_\infty\leq 1$.
 For every $\lambda\in (0,1)$, $x\in\Rd$ and $\theta\in\Theta$, applying equation~\eqref{eq:taylor} 
 with $\frac{1}{\lambda}f$ yields
 \begin{align*}
  &\cE\left[\exp\left(\frac{\alpha(f(x+\zeta_t)-f(x))}{\lambda}-\alpha\theta^T\zeta_t\right)\right] \\
  &\leq 1+\alpha\big(\tfrac{1}{\lambda}\|Df\|_\infty+R\big)|\mu| t \\
  &\quad\; +\alpha\cE\left[\int_0^1
    \Big(\tfrac{\alpha}{\lambda^2}\|Df\|_\infty^2+\tfrac{\alpha}{\lambda}\|Df\|_\infty R+\alpha R^2
    +\tfrac{1}{\lambda}\|D^2 f\|_\infty\Big)
    |\zeta_t|^2 g(sX_s)(1-s)\,\d s\right].
\end{align*}
 We apply Lemma~\ref{lem:lambda} with $\lambda:=\|Df\|_\infty+\|D^2f\|_\infty\in (0,1]$ to obtain
 \begin{align*}
  &\rho\big[f(x+\zeta_t)-f(x)-\theta^T\zeta_t\big]-\rho\big[-\theta^T\zeta_t\big]
    \leq\lambda\rho\left[\frac{f(x+\zeta_t)-f(x)}{\lambda}-\theta^T\zeta_t\right] 
    -\lambda\rho\big[-\theta^T\zeta_t\big] \\
  &\leq\frac{\lambda}{\alpha}\log\left(1+\alpha (1+R)|\mu|t
    +\frac{\alpha^2 (1+R+R^2)+\alpha)}{2}\cE\left[|\zeta_t|^2 e^{\alpha (2r+R|\zeta_t|)}\right]\right) \\
  & \quad\; +\frac{\lambda}{\alpha}\log\left(1+\alpha R|\mu|t+\frac{\alpha^2 R^2}{2}
    \cE\left[|\zeta_t|^2e^{\alpha R|\zeta_t|}\right]\right) \\
 & \leq\frac{\lambda}{\alpha}\log\big(1+\alpha (1+R)|\mu|t + \cE[Y]t\big)
  +\frac{\lambda}{\alpha}\log\big(1+\alpha R|\mu|t+\cE[Z]t\big)
 \end{align*}
 for all $x\in\Rd$ and $\theta\in\Theta$, where 
 \begin{align*}
  Y &:=\tfrac{1}{2}\big(\alpha^2(1+R+R^2)+\alpha\big)\big(|\mu|^2+2|\mu|\cdot |\zeta|+|\zeta|^2\big)
    e^{\alpha(2r+R(|\mu|+|\zeta|))}, \\
  Z &:=\tfrac{1}{2}\alpha^2R^2\big(|\mu|^2+2|\mu|\cdot |\zeta|+|\zeta|^2\big)e^{\alpha R(|\mu|+|\zeta|)}.
 \end{align*}
 The lower bound follows similarly and therefore taking the supremum over $\theta\in\Theta$ shows 
 that, for every $r\geq 0$, there exists $c\geq 0$ with
 \begin{align*}
  |(I(t)f-f)(x)|
  &=\Big|\inf_{\theta\in\Theta}\rho[f(x+\zeta_t)-\theta^T\zeta_t]-\inf_{\theta\in\Theta}\rho[-\theta^T\zeta_t]\Big| \\
  &\leq\sup_{\theta\in\Theta}\big|\rho\big[f(x+\zeta_t)-f(x)-\theta^T\zeta_t\big]-\rho\big[-\theta^T\zeta_t\big]\big|
    \leq c\big(\|Df\|_\infty+\|D^2 f\|_\infty\big)t 
 \end{align*}
 for all $t\in [0,1]$, $f\in\Cb^2$ with $\|f\|_\infty\leq r$ and $0<\|Df\|_\infty+\|D^2f\|_\infty\leq 1$
 and $x\in\Rd$. Hence, we can apply~\cite[Corollary~2.14]{BKN23} with $S_n(t)f:=I(\frac{t}{n})^n f$, 
 $\cT_n:=\R_+$ and $X_n:=\Rd$ for all $t\geq 0$, $f\in\Cb$ and $n\in\N$ to obtain that condition~(vi) 
 is satisfied\footnote{The result in~\cite{BKN23} is stated under a slightly stronger condition but 
 a close inspection of the proof reveals that it is sufficient to verify inequality~\eqref{eq:cont.above}
 instead of~\cite[Inequality~(2.35)]{BKN23}.}.

 Fourth, for every $f\in\Cb^2$, we show that the limit $I'(0)f\in\Cb$ exists and is given by
\[ (I'(0)f)(x)=\inf_{\theta\in\Theta}G_\theta(D^2f(x),Df(x))-\inf_{\theta\in\Theta}G_\theta(0,0)
    \quad\mbox{for all } x\in\Rd, \]
 where $G_\theta(a,b)=\frac{1}{2}\cE[\zeta^T a\zeta+\alpha |(b-\theta)^T\zeta|^2]+(b-\theta)^T\mu$
 for all $a\in\R^{d\times d}$, $b\in\Rd$ and $\theta\in\Theta$. Let $f\in\Cb^2$.
 Equation~\eqref{eq:taylor}, Assumption~\ref{ass:main} and Lemma~\ref{lem:E}(i) and~(vi) imply
 \begin{align}
  &\frac{1}{\alpha t}\log\big(\cE\big[\exp\big(\alpha (f(x+\zeta_t)-f(x))-\alpha\theta^T\zeta_t\big)\big]\big) 
    - G_\theta(D^2f(x),Df(x)) \nonumber \\
  &=\frac{1}{\alpha t}\log\big(1+\alpha (Df(x)-\theta)^T\mu t+\cE[Y_t^{\theta,x}]\big)
    -(Df(x)-\theta)^T\mu\nonumber \\
  & \quad\; -\frac{1}{2}\cE\left[\alpha|(Df(x)-\theta)^T\zeta|^2+\zeta^T D^2f(x)\zeta\big)\right] \nonumber \\
  &=\frac{1}{\alpha t}\cE[Y_t^{\theta,x}]-\cE[Z^{\theta,x}]+r_t^{\theta,x} \label{eq:gen.terms1}
 \end{align}
 for all $t>0$, $x\in\Rd$ and $\theta\in\Theta$, where
 \begin{align*}
  Y_t^{\theta,x} &:=\int_0^1\big(\alpha^2|(Df(x+s\zeta_t)-\theta)^T\zeta_t|^2
    +\alpha\zeta_t^T D^2f(x+s\zeta_t)\zeta_t\big)g^{\theta,x}(s\zeta_t)(1-s)\,\d s, \\
  Z^{\theta,x} &:=\frac{1}{2}\alpha |(Df(x)-\theta)^T\zeta|^2+\frac{1}{2}\zeta^T D^2f(x)\zeta, \\
  r_t^{\theta,x} &:=\frac{1}{\alpha t}\log\big(1+\alpha (Df(x)-\theta)^T\mu t+\cE[Y_t^{\theta,x}]\big) 
    -\Big((Df(x)-\theta)^T\mu+\frac{1}{\alpha t}\cE[Y_t^{\theta,x}]\big)\Big)
 \end{align*}
 and $g^{\theta,x}\colon\Rd\to\R,\; y\mapsto\exp(\alpha (f(x+y)-f(x))-\alpha\theta^T y)$. 
 For every $t\in (0,1]$, $x\in\Rd$ and $\theta\in\Theta$, we can estimate 
 \[ |\alpha (Df(x)-\theta)^T\mu t|\leq\alpha (\|Df\|_\infty+R)|\mu| t \]
 and, by defining $c_f:=\alpha^2(\|Df\|_\infty+R)^2+\alpha\|D^2f\|_\infty$, 
 \begin{align} \label{eq:Y_t.bound}
  \cE[|Y_t^{\theta,x}|] 
  &\leq\cE\left[\frac{1}{2}\big(\alpha^2(\|Df\|_\infty+R)^2+\alpha\|D^2f\|_\infty\big)
    |\zeta_t|^2e^{\alpha(2\|f\|_\infty+R|\zeta_t|)}\right] \nonumber \\
  &\leq\cE\left[\frac{\alpha}{2}c_f\big(|\mu|^2+2|\mu|\cdot|\zeta|+|\zeta|^2\big)
    e^{\alpha(2\|f\|_\infty+R(|\mu|+|\zeta|))}\right] t. 
 \end{align}
 Since $\log(1+z)=z+o(|z|)$ for $z\to 0$, we obtain 
 \begin{equation} \label{eq:gen.r}
  \sup_{x\in\Rd}\sup_{\theta\in\Theta}|r_t^{\theta,x}|\to 0 \quad\mbox{as } t\downarrow 0.
 \end{equation}
 It remains to show that, for every $K\Subset\Rd$, 
 \begin{equation} \label{eq:gen.conv1}
  \sup_{x\in K}\sup_{\theta\in\Theta}\left|\frac{1}{\alpha t}\cE[Y_t^{\theta,x}]-\cE[Z^{\theta,x}]\right|\to 0
  \quad\mbox{as } t\downarrow 0.
 \end{equation}
 For every $t \in (0,1]$, $x\in\Rd$ and $\theta \in \Theta$, Lemma \ref{lem:E}(viii) implies
 \[ \left|\frac{1}{\alpha t}\cE[Y_t^{\theta,x}]-\cE[Z^{\theta,x}]\right| 
    \leq\cE\left[\left|\frac{1}{\alpha t}Y_t^{\theta,x}-Z^{\theta,x}\right|\right]. \]
 Moreover, for every $\eps>0$, by inequality~\eqref{eq:Y_t.bound} and Assumption~\ref{ass:main},
 there exists $c\geq 0$ with 
 \begin{equation} \label{eq:gen.conv2}
  \cE\left[\left|\frac{1}{\alpha t}Y_t^{\theta,x}-Z^{\theta,x}\right|\one_{\{|\zeta|>c\}}\right] 
  \leq\eps\quad\mbox{for all } t\in (0,1], x\in\Rd \mbox{ and } \theta\in\Theta. 
 \end{equation}
 Let $\eps>0$ and choose $c\geq 0$ such that the previous inequality is valid. It follows from 
 \[ |\zeta_t|^2=(t|\mu|^2 + 2\sqrt{t}\mu^T\zeta + |\zeta|^2)t \]
 that there exists $C\geq 0$ with 
 \begin{align*}
  &\cE\left[\left|\frac{1}{\alpha t}Y_t^{\theta,x}-Z^{\theta,x}\right|\one_{\{|\zeta|\leq c\}}\right] \\
  &\leq\alpha\cE\left[\int_0^1\big(|(Df(x+s\zeta_t)-\theta)^T\zeta|^2
    -|(Df(x)-\theta)^T\zeta|^2\big)\one_{\{\zeta|\leq c\}}(1-s)\,\d s\right] \\
  &\quad\; +\cE\left[\int_0^1\big(|\zeta^T D^2f(x+s\zeta_t)\zeta g^{\theta,x}(s\zeta_t)
    -\zeta^T D^2f(x)\zeta\big)\one_{\{\zeta|\leq c\}}(1-s)\,\d s\right]+C\sqrt{t}
 \end{align*}
 for all $t\in (0,1]$, $x\in\Rd$ and $\theta\in\Theta$. Let $K\Subset\Rd$. 
 Since the functions $\{g^{\theta,x}\colon\theta\in\Theta, x\in K\}$ are equicontinuous
 with $g^{\theta,x}(0)=1$ and $\zeta_t\one_{\{|\zeta|\leq c\}}$ converges uniformly to 
 zero as $t\downarrow 0$, there exists $t_0\in (0,1]$ with 
 \begin{equation} \label{eq:gen.conv3}
  \cE\left[\left|\frac{1}{\alpha t}Y_t^{\theta,x}-Z^{\theta,x}\right|\one_{\{|\zeta|\leq c\}}\right]\leq\eps 
  \quad \mbox{for all } t\in (0,t_0], x\in K \mbox{ and } \theta\in\Theta. 
 \end{equation}
 Combining the inequalities~\eqref{eq:gen.conv2} and~\eqref{eq:gen.conv3} shows that 
 equation~\eqref{eq:gen.conv1} is valid.\ Hence, it follows from equation~\eqref{eq:gen.terms1} that 
 \begin{align*}
  & \sup_{x\in K}\left|\frac{1}{t}\Big(\inf_{\theta\in\Theta}\rho[f(x+\zeta_t)-\theta^T\zeta_t]-f(x)\Big) 
    -\inf_{\theta\in\Theta}G_\theta(D^2f(x),Df(x))\right| \\
  &\leq\sup_{x\in K}\sup_{\theta\in\Theta}\frac{1}{\alpha t}
    \big|\log\big(\cE\big[\exp\big(\alpha (f(x+\zeta_t)-f(x))-\alpha\theta^T\zeta_t\big)\big]\big) 
    - G_\theta(D^2f(x),Df(x))\big| \\
  &\leq\sup_{x\in K}\sup_{\theta\in\Theta}
    \left|\frac{1}{\alpha t}\cE[Y_t^{\theta,x}]-\cE[Z^{\theta,x}]+r_t^{\theta,x}\right|\to 0 
    \quad \mbox{as } t\downarrow 0.
 \end{align*}
 By additionally applying this result on the constant function $f\equiv 0$, we obtain that 
 the limit $I'(0)f\in\Cb$ exists and is given by
 \[ (I'(0)f)(x)=\inf_{\theta\in\Theta}G_\theta(D^2 f(x),Df(x))-\inf_{\theta\in\Theta}G_\theta(0, 0) 
    \quad\mbox{for all } f\in\Cb^2 \mbox{ and } x\in\Rd. \]
 Now, the claim follows from Theorem~\ref{thm:cher} and Theorem~\ref{thm:unique}.
\end{proof}

\subsection{Proof of Theorem~\ref{thm:unbounded}}
\label{sec:proof2}

The proof is based on the stability results for strongly continuous convex monotone semigroups in~\cite{BKN23}.\
The basic idea is that, for a sequence of semigroups satisfying certain stability conditions,
convergence of the generators implies convergence of the corresponding semigroups. In many applications,
the generators can be determined explicitly for smooth functions and the same applies to their convergence. 
In contrast, showing the convergence of the semigroups directly is often not feasible.

\begin{proof}[Proof of Theorem~\ref{thm:unbounded}]
 In order to apply~\cite[Theorem~2.3]{BKN23}, we have to verify~\cite[Assumption~2.2]{BKN23}. 
 The conditions~(i),~(iii) and~(iv) follow from Theorem~\ref{thm:cher} and the corresponding 
 estimates in the proof of Theorem~\ref{thm:main}. In order to verify the conditions~(ii) and~(v), 
 we show that, for every $c\geq 0$, there exist $R\geq 0$ and $t_0>0$ with 
 \begin{equation} \label{eq:unbounded1} 
  \inf_{\theta\in\Theta}G_\theta(D^2 f(x),Df(x))=\inf_{\theta\in\Theta_R}G_\theta(D^2 f(x),Df(x))
 \end{equation}
 for all $t\in [0,t_0]$, $f\in\Cb^2$ with $\|Df\|_\infty+\|D^2 f\|_\infty\leq c$ and $x\in\Rd$.
 Choose $\delta>0$ such that condition~\eqref{eq:non-deg} is satisfied. For every $f\in\Cb^2$, 
 $x\in\Rd$ and $\theta\in\Theta$, Lemma~\ref{lem:E}(vii) implies
 \begin{align*}
  G_\theta(D^2 f(x),Df(x))
  &=\frac{1}{2}\cE\big[\alpha |(Df(x)-\theta)^T\zeta|^2+\zeta^T D^2f(x)\zeta\big]+(Df(x)-\theta)^T\mu \\
  &=\frac{1}{2}\cE\big[\alpha |\theta^T\zeta|^2-2\alpha Df(x)^T\zeta\theta^T\zeta + \alpha|Df(x)^T\zeta|^2+\zeta^T D^2f(x)\zeta\big] \\
  &\quad  +(Df(x)-\theta)^T\mu \\
  &\geq\frac{\alpha\delta}{2}|\theta|^2-c_1|\theta|-c_2, 
 \end{align*}
 where $c_1:=\alpha\|Df\|_\infty\cE[|\zeta|^2]+|\mu|$ and 
 $c_2:=\frac{1}{2}(\alpha\|Df\|_\infty^2+\|D^2 f\|_\infty)\cE[|\zeta|^2]+\|Df\|_\infty|\mu|$. 
 This shows that equation~\eqref{eq:unbounded1} is valid and condition~(v) follows immediately.
 For $\Theta=\Rd$, we observe that, for every $\theta\in\Rd$ with $(Df(x)-\theta)^T\mu=0$, 
 \[ G_\theta(D^2f(x),Df(x))\geq\frac{1}{2}\cE[\zeta^T D^2f(x)\zeta] \]
 with equality for $\theta=Df(x)$. Moreover, for every $\theta\in\Rd$ with $(Df(x)-\theta)^T\mu\neq 0$,
 one can use condition~\eqref{eq:non-deg2} to estimate 
 \[ G_\theta(D^2f(x),Df(x))\geq\frac{\alpha\delta}{2}|Df(x)-\theta|^2-c_1|Df(x)-\theta|-c_2 \]
 for suitable constants $c_1,c_2\geq 0$. Hence, equation~\eqref{eq:unbounded1} is still
 valid for some $R\geq\|Df\|_\infty$. In order to verify condition~(iii), we further show 
 that there exists $c\geq 0$ with 
 \begin{equation} \label{eq:unbounded2}
  \|A_R f\|_\infty\leq c\big(\|Df\|_\infty+\|D^2 f\|_\infty\big)
 \end{equation}
 for all $f\in\Cb^2$ with $\|Df\|_\infty+\|D^2 f\|_\infty\leq 1$. By equation~\eqref{eq:unbounded1},
 there exists $R_0\geq 0$ with 
 \[ |(A_R f)(x)|\leq\sup_{\theta\in\Theta_{R_0}}\big|G_\theta(D^2 f(x),Df(x))-G_\theta(0,0)\big| \]
 for all $R\geq 0$, $f\in\Cb^2$ with $\|Df\|_\infty+\|D^2 f\|_\infty\leq 1$ and $x\in\Rd$. 
 For every $f\in\Cb^2$ with $\|Df\|_\infty+\|D^2 f\|_\infty\leq 1$, $x\in\Rd$ and $\theta\in\Theta_{R_0}$,
 it follows from Lemma \ref{lem:E}(viii) that
 \begin{align*}
  &\big|G_\theta(D^2 f(x),Df(x))-G_\theta(0,0)\big| \\
  &=\left|\frac{1}{2}\cE\big[\zeta^T D^2f(x)\zeta+\alpha |(Df(x)+\theta)^T\zeta|^2\big]+Df(x)^T\mu
    -\frac{\alpha}{2}\cE\left[|\theta^T\zeta|^2\right]\right| \\
  &\leq\frac{1}{2}\cE\big[\zeta^T D^2f(x)\zeta+\alpha|Df(x)^T\zeta|^2+2\alpha |Df(x)|\cdot |\theta|\cdot |\zeta|^2\big]
    +|Df(x)^T\mu| \\
  &\leq\frac{1}{2}\big(\|D^2 f\|_\infty+\alpha (\|Df\|_\infty^2 + 2\|Df\|_\infty |\theta|)\big)\cE[|\zeta|^2]
    +\|Df\|_\infty |\mu|\\
  &\leq\left(\frac{1}{2}(1\vee\alpha)(1 + 2R_0)\cE[|\zeta|^2]+|\mu|\right)\big(\|Df\|_\infty+\|D^2 f\|_\infty\big).
 \end{align*}
 This shows that inequality~\eqref{eq:unbounded2} is satisfied. Furthermore, it holds
 \[ \lim_{h\downarrow 0}\left\|\frac{S_R(h)f-f}{h}-A_R f\right\|_\infty=0
    \quad\mbox{for all } R\geq 0 \mbox{ and } f\in\BUC^2, \]
 where $\BUC^2$ denotes the space of all bounded twice differentiable functions $f\colon\Rd\to\R$ 
 such that the first and second are bounded and uniformly continuous. Indeed, for every $R\geq 0$
 and $f\in\BUC^2$, the supremum in inequality~\eqref{eq:gen.conv3} and therefore in 
 inequality~\eqref{eq:gen.conv1} can be taken over $x\in\Rd$ instead of $x\in K$. We obtain
 \[ \lim_{h\downarrow 0}\left\|\frac{I_R(h)f-f}{h}-I'(0)f\right\|_\infty=0 \]
 and~\cite[Theorem~4.3]{BK23} transfers the previous statement to the semigroup $(S_R(t))_{t\geq 0}$.\
 It follows from~\cite[Corollary~2.16]{BKN23} that condition~(iii) is satisfied for any sequence 
 $R_n\to\infty$\footnote{The result in~\cite{BKN23} is stated under a slightly stronger condition 
 but a close inspection of the proof reveals that it is sufficient to verify inequality~\eqref{eq:unbounded2}.}.
 The claim now follows from~\cite[Theorem~2.5]{BKN23} since Theorem~\ref{thm:unique} guarantees 
 that the limit does not depend on the choice of the sequence $R_n\to\infty$.
\end{proof}

\subsection{Proof of Theorem~\ref{thm:unbounded2}}
\label{sec:proof3}

The proof is similar to the one of Theorem~\ref{thm:main} as soon as we can constrain
the trading strategies to a bounded set.

\begin{proof}
 By condition~\eqref{eq:mgf2}, there exists $t_0>0$ such that $I(t)$ is well-defined for all $t\in [0,t_0]$.
 Hence, since we are only interested in the limit behaviour of the iterated operators, we can replace
 $I(t)$ by $I(t_0)$ for all $t>t_0$ without affecting the result. We subsequently show that Assumption~\ref{ass:cher}
 is satisfied.\ The conditions~(i)-(iv) and~(vii) can be verified exactly as in the proof of Theorem~\ref{thm:main},
 where we did not use that the set $\Theta=\Theta_R$ was bounded.\ In order to verify condition~(v),
 let $f\in\Cb^2$. We show that there exist $R\geq 0$ and $t_0>0$ with
 \begin{equation} \label{eq:bounded}
  \inf_{\theta\in\Rd}\rho[f(x+\zeta_t)-\theta^T\zeta_t]=\inf_{|\theta|\leq R}\rho[f(x+\zeta_t)-\theta^T\zeta_t]
 \end{equation}
 for all $t\in [0,t_0]$ and $x\in\Rd$. For every $t\geq 0$, $x\in\Rd$ and $\theta\in\Theta$,
 Taylor's formula implies
 \[ f(x+\zeta_t)=f(x)+Df(x)^T\zeta_t+\int_0^1\zeta_t^T D^2f(x+\zeta_t)\zeta_t (1-s)\,\d s. \]
 We choose $\theta=Df(x)$ and apply condition~\eqref{eq:mgf1} to obtain $M\geq 0$ and $t_1\in (0,1]$ with 
 \begin{align*}
  \rho[f(x+\zeta_t)-\theta^T\zeta_t]
  &=\frac{1}{\alpha}\log\left(\cE\left[\exp\left(\alpha f(x)
    +\alpha\int_0^1\zeta_t^T D^2f(x+s\zeta_t)\zeta_t (1-s)\,\d s\right)\right]\right) \\
  &\leq f(x)+\frac{1}{\alpha}\log\big(\cE\big[[e^{2 \alpha\|D^2f\|_\infty (|\mu|^2 t^2+|\zeta|^2 t)}\big]\big) \\
  &=f(x)+2(|\mu|^2+M)\|D^2f\|_\infty t
 \end{align*}
 for all $t\in [0,t_1]$ and $x\in\Rd$. Furthermore, we apply condition~\eqref{eq:mgf2} with 
 \[ C:=\frac{(4|\mu|^2+2M)\|D^2f\|_\infty+|\mu|}{\sqrt{2\alpha\|D^2f\|_\infty}} \] 
 to obtain $R\geq 1$ and $t_2\in (0,1]$ with 
 \begin{align*}
  &\rho[f(x+\zeta_t)-\theta^T\zeta_t] \\
  &=f(x)+(Df(x)-\theta)^T\mu t \\
  &\quad\: +\frac{1}{\alpha}\log\left(\cE\left[\exp\left(\alpha\sqrt{t}(Df(x)-\theta)^T\zeta
    +\alpha\int_0^1\zeta_t^T D^2f(x+s\zeta_t)\zeta_t (1-s)\,\d s\right)\right]\right) \\
  &\geq f(x)+(Df(x)-\theta)^T\mu t- 2 |\mu|^2\|D^2 f\|_\infty t^2
    +\frac{1}{\alpha}\log\big(\cE\big[e^{\sqrt{t}\alpha (Df(x)-\theta)^T\zeta - 2 t \alpha\|D^2 f\|_\infty|\zeta|^2}\big]\big) \\
  &\geq f(x)+(Df(x)-\theta)^T\mu t-2|\mu|^2\|D^2 f\|_\infty t^2+C\sqrt{2\alpha\|D^2f\|_\infty}|Df(x)-\theta|t \\
  &\geq f(x) + |Df(x) - \theta|\big(C\sqrt{2\alpha\|D^2f\|_\infty}-|\mu|\big)t - 2|\mu|^2\|D^2f\|_\infty t\\
  &\geq f(x)+2(|\mu|^2+M)\|D^2f\|_\infty t
 \end{align*}
 for all $t\in [0,t_2]$, $\theta\in\Rd$ with $|Df(x)-\theta|\geq R$ and $x\in\Rd$. Since $Df$
 is bounded, we have therefore verified equation~\eqref{eq:bounded}. Now, one can proceed line by 
 line as in the proof of Theorem~\ref{thm:main} to show that the limit $I'(0)f\in\Cb$
 exists and is given by 
 \[ (I'(0)f)(x)=\inf_{|\theta|\leq R}G_\theta(D^2f(x),Df(x))-\inf_{|\theta|\leq R}G_\theta(0,0)
    \quad\mbox{for all } x\in\Rd, \]
 where $R\geq 0$ satisfies equation~\eqref{eq:bounded}. Since the left-hand side does not depend on
 the particular choice of $R$, we obtain 
 \[ (I'(0)f)(x)=\inf_{\theta\in\Rd}G_\theta(D^2f(x),Df(x))-\inf_{\theta\in\Rd}G_\theta(0,0)
    \quad\mbox{for all } x\in\Rd. \]
 Finally, we observe that the choice of $R\geq 0$ and $t_0>0$ in equation~\eqref{eq:bounded} only 
 depends on $\|Df\|_\infty$ and $\|D^2f\|_\infty$. Hence, again line by line as in the proof of
 Theorem~\ref{thm:main}, one can show that, for every $r\geq 0$, there exist $c\geq 0$ and $t_0>0$
 with
 \[ \|I(t)f-f\|_\infty\leq c\big(\|Df\|_\infty+\|D^2f\|_\infty\big)t \]
 for all $t\in [0,t_0]$ and $f\in\Cb^2$ with $\|f\|_\infty\leq r$ and $0<\|Df\|_\infty+\|D^2f\|_\infty\leq 1$.\
 It follows from~\cite[Corollary~2.14]{BKN23} that condition~(vi) is satisfied.\
 Theorem~\ref{thm:cher} and Theorem~\ref{thm:unique} yield the claim.
\end{proof}

\subsection{Proof of Theorem~\ref{thm:alpha}}
\label{sec:proof4}

This proof of also based on the stability results in~\cite{BKN23} and thus very similar 
to the proof of Theorem~\ref{thm:unbounded}. Recall that, for every $\alpha>0$, we now 
denote by $(S_\alpha(t))_{t\geq 0}$ the semigroup from Theorem~\ref{thm:unbounded} 
previously denoted by $(S(t))_{t\geq 0}$ and by $A_\alpha f$ its generator previously 
denoted by $Af$.

\begin{proof}[Proof of Theorem~\ref{thm:alpha}]
 In order to apply~\cite[Theorem~2.3]{BKN23}, we have to verify~\cite[Assumption~2.2]{BKN23}. 
 The conditions~(i),~(iii) and~(iv) follow from Theorem~\ref{thm:cher} and the corresponding 
 estimates in the proof of Theorem~\ref{thm:main}. In order to verify condition~(v), we show that, 
 for every $\eps>0$, there exists $\alpha_0>0$ with  
 \begin{equation} \label{eq:alpha}
  \left|\inf_{\theta\in\Rd}G_{\alpha,\theta}(D^2 f(x),0)-\frac{1}{2}\cE[\zeta^T D^2f(x)\zeta]\right|\leq\eps
 \end{equation}
 for all $\alpha\geq\alpha_0$, $f\in\Cb^2$ and $x\in\Rd$, where 
 \[ G_{\alpha,\theta}(a,b):=\frac{1}{2}\cE[\zeta^T a\zeta+\alpha |(b-\theta)^T\zeta|^2]+(b-\theta)^T\mu \]
 for all $a\in\R^{d\times d}$ and $b\in\Rd$. Let $\alpha>0$, $f\in\Cb^2$ and $x\in\Rd$. 
 Choosing $\theta=0$ yields 
 \[ \inf_{\theta\in\Rd}G_{\alpha,\theta}(D^2 f(x),0)\leq\frac{1}{2}\cE[\zeta^T D^2f(x)\zeta]. \]
 Furthermore, by condition~\eqref{eq:strict_ell} and Lemma~\ref{lem:E}(ix), there exists $\delta>0$ with
 \begin{align*}
  G_{\alpha,\theta}(D^2 f(x),0) 
  &=\frac{1}{2}\cE\big[\zeta^T D^2f(x)\zeta+\alpha|\theta^T\zeta|^2\big]-\theta^T\mu \\
  &\geq\frac{1}{2}\cE[\zeta^T D^2f(x)\zeta]-\frac{1}{2}\alpha\cE[-|\theta^T\zeta|^2]-\theta^T\mu \\
  &\geq\frac{1}{2}\cE[\zeta^T D^2f(x)\zeta]+\frac{1}{2}\alpha\delta |\theta|^2-\theta^T\mu,
 \end{align*}
 for all $\theta\in\Rd$ with $\theta^T\mu\neq 0$. Hence, for $c_\alpha:=\frac{2|\mu|}{\alpha\delta}$,
 we obtain 
 \[ \inf_{\theta\in\Rd}G_{\alpha,\theta}(D^2 f(x),0)=\inf_{|\theta|\leq c_\alpha}G_{\alpha,\theta}(D^2 f(x),0) \]
 and Lemma~\ref{lem:E}(viii) implies 
 \begin{align*}
  &\left|\inf_{\theta\in\Rd}G_\theta(D^2f(x),0)-\frac{1}{2}\cE[\zeta^T D^2f(x)\zeta]\right| \\
  &\leq\sup_{|\theta|\leq c_\alpha}\left|\frac{1}{2}\cE\big[\zeta^T D^2f(x)\zeta+\alpha|\theta^T\zeta|^2\big] 
    -\theta^T\mu-\frac{1}{2}\cE[\zeta^T D^2f(x)\zeta]\right| \\
  &\leq\sup_{|\theta|\leq c_\alpha}\Big(\frac{1}{2}\cE[\alpha|\theta^T\zeta|^2]+|\theta||\mu|\Big)
    \leq c_\alpha\Big(\frac{|\mu|}{\delta}\cE[|\zeta|^2]+|\mu|\Big).
 \end{align*}
 It follows from $c_\alpha\to 0$ as $\alpha\to\infty$ that inequality~\eqref{eq:alpha} is valid.
 Since Corollary~\ref{cor:Rd} yields 
 \[ (A_\alpha f)(x)=\inf_{\theta\in\Rd}G_{\alpha,\theta}(D^2f(x),0)-\inf_{\theta\in\Rd}G_{\alpha,\theta}(0,0) \]
 and inequality~\eqref{eq:alpha} can also be applied on the constant function $f\equiv 0$, we obtain
 \[ \lim_{\alpha\to\infty}\sup_{x\in\Rd}\left|\frac{1}{2}\cE[\zeta^T D^2f(x)\zeta]-(A_\alpha f)(x)\right|=0
    \quad\mbox{for all } f\in\Cb^2. \]
 This shows that condition~(v) is satisfied. It remains to verify condition~(ii).\ For every risk aversion 
 parameter $\alpha>0$ and volume constraint $R\geq 0$, we denote by $(S_{\alpha,R}(t))_{t\geq 0}$
 the semigroup from Theorem~\ref{thm:main} corresponding to $\Theta:=\Rd$ and by $A_{\alpha,R}f$
 its generator.\ We show that there exists $R_0\geq 0$ with 
 \begin{equation} \label{eq:alpha2}
  \|A_{\alpha,R}f\|_\infty\leq\frac{1}{2}\cE[|\zeta|^2]\cdot\|D^2f\|_\infty
 \end{equation}
 for all $\alpha\geq 1$, $R\geq R_0$ and $f\in\Cb^2$ with $\|Df\|_\infty+\|D^2f\|_\infty\leq 1$.\
 It follows from the proof of Theorem~\ref{thm:unbounded} that there exist $\delta>0$ and $c_1,c_2\geq 0$ with
 \[ G_\theta(D^2f(x),Df(x))\geq\alpha\big(\delta |\theta|^2-c_1|\theta|-c_2\big) \]
 for all $\alpha\geq 1$, $f\in\Cb^2$ with $\|Df\|_\infty+\|D^2f\|_\infty\leq 1$ and $\theta,x\in\Rd$.\
 Hence, there exists $R_0\geq 0$ with $A_{\alpha,R}f=A_\alpha f$ for all $\alpha\geq 1$, $R\geq R_0$ 
 and $f\in\Cb^2$ with $\|Df\|_\infty+\|D^2f\|_\infty\leq 1$. Corollary~\ref{cor:Rd} and Lemma~\ref{lem:E}(viii) 
 now imply that inequality~\eqref{eq:alpha2} is valid. Furthermore, we obtain from~\cite[Corollary~2.16]{BKN23}
 that, for every $T\geq 0$ and $(f_n)_{n\in\N}\subset\Cb$ with $f_n\downarrow 0$, 
 \[ \sup_{\alpha\geq 1}\sup_{R\geq R_0}\sup_{t\in [0,T]}S_{\alpha,R}(t)f_n\downarrow 0. \]
 The uniform continuity from above w.r.t. $\alpha\geq 1$ is preserved in the limit $R\to\infty$, i.e., 
 condition~(ii) is satisfied\footnote{The result in~\cite{BKN23} is only stated for sequences of semigroups 
 but it is actually valid for families $(S_i)_{i\in I}$ of semigroups $(S_i(t))_{t\geq 0}$ which are 
 parameterized by an arbitrary index set. Here, we choose $i:=(\alpha,R)$ and $I:=[1,\infty)\times [R_0,\infty)$.}.
 The first part of the claim follows from~\cite[Theorem~2.5]{BKN23} since Theorem~\ref{thm:unique}
 guarantees that the limit does not depend on the choice of the sequence $\alpha_n\to\infty$. Furthermore,
 as discussed in Subsection~\ref{sec:prel.chernoff}, the family $(T(t))_{t\geq 0}$ is a strongly continuous 
 convex monotone semigroup on $\Cb$ with generator
 \[ (Bf)(x)=\frac{1}{2}\cE[\zeta^T D^2f(x)\zeta] 
    \quad\mbox{for all } f\in\Cb^2 \mbox{ and } x\in\Rd. \]
 Hence, the second part of the claim also follows from Theorem~\ref{thm:unique}.
\end{proof}

\appendix

\section{Basic convexity estimates} 
\label{app:convex}

\begin{lemma} \label{lem:lambda}
 Let $X$ be a vector space and $\Phi\colon X\to\R$ be a convex functional. Then,
 \[ \Phi(x)-\Phi(y)\leq\lambda\left(\Phi\left(\frac{x-y}{\lambda}+y\right)-\Phi(y)\right)
 	\quad\mbox{for all } x,y\in X \mbox{ and } \lambda\in (0,1]. \]
\end{lemma}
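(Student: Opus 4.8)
The plan is to exhibit $x$ as an explicit convex combination of $y$ and the auxiliary point
\[ z:=\frac{x-y}{\lambda}+y, \]
which makes sense precisely because $X$ is a vector space and $\lambda>0$. The first step is the algebraic identity
\[ \lambda z+(1-\lambda)y=\lambda\Big(\frac{x-y}{\lambda}+y\Big)+(1-\lambda)y=(x-y)+\lambda y+(1-\lambda)y=x, \]
so that $x$ is a $(\lambda,1-\lambda)$-weighted combination of $z$ and $y$. Since $\lambda\in(0,1]$, both weights $\lambda$ and $1-\lambda$ are nonnegative and sum to one, so convexity of $\Phi$ applies directly and gives $\Phi(x)\leq\lambda\Phi(z)+(1-\lambda)\Phi(y)$.

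The second and final step is simply to subtract $\Phi(y)$ from both sides: $\Phi(x)-\Phi(y)\leq\lambda\Phi(z)+(1-\lambda)\Phi(y)-\Phi(y)=\lambda\big(\Phi(z)-\Phi(y)\big)$, which is the asserted inequality after substituting back $z=\frac{x-y}{\lambda}+y$. There is essentially no obstacle; the only points worth a remark are that $\lambda\leq 1$ is exactly what guarantees $1-\lambda\geq 0$ (so the convexity inequality is available in the right direction), and that for $\lambda=1$ the statement reduces to the trivial equality $\Phi(x)-\Phi(y)=\Phi(x)-\Phi(y)$.
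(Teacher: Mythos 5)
Your proof is correct and follows exactly the same route as the paper: write $x=\lambda\big(\frac{x-y}{\lambda}+y\big)+(1-\lambda)y$, apply convexity, and subtract $\Phi(y)$. Nothing to add.
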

\begin{proof}
 For every $x,y\in X$ and $\lambda\in (0,1]$, 
 \begin{align*}
  \Phi(x)-\Phi(y)
  &=\Phi\left(\lambda\left(\frac{x-y}{\lambda}+y\right)+(1-\lambda)y\right)-\Phi(y) \\
  &\leq\lambda\Phi\left(\frac{x-y}{\lambda}+y\right)+(1-\lambda)\Phi(y)-\Phi(y) \\
  &=\lambda\left(\Phi\left(\frac{x-y}{\lambda}+y\right)-\Phi(y)\right). \qedhere
 \end{align*}
\end{proof}

The next lemma states some basic properties of convex and sublinear expectations. 
Convex expectations generalize sublinear expectations by relaxing the conditions~(iii)
and~(iv) in the definition of a sublinear expectation to
\[ \cE[\lambda X+(1-\lambda)Y]\leq\lambda\cE[X]+(1-\lambda)\cE[Y]
    \quad\mbox{for all } \lambda\in [0,1] \mbox{ and } X,Y\in\cH. \]
Furthermore, convex expectations correspond to convex risk measures which are defined 
on losses rather than positions.

\begin{lemma} \label{lem:E}
 For a convex expectation space $(\Omega,\cH,\cE)$ the following statements are valid:
 \begin{enumerate}
  \item $\cE[X+c]=\cE[X]+c$ for all $X\in\cH$ and $c\in\R$.
  \item $|\cE[X]-\cE[Y]|\leq\|X-Y\|_\infty$ for all bounded $X,Y\in\cH$. 
  \item $\cE[\lambda X]\leq\lambda\cE[X]$ for all $\lambda\in [0,1]$ and $X\in\cH$.
  \item $-\cE[-X]\leq\cE[X]$ for all $X\in\cH$. 
  \item $|\cE[X]|\leq\cE[|X|]$ for all $X\in\cH$.
  \item Let $X\in\cH$ with $\cE[aX]=0$ for all $a\in\R$. Then, it holds 
   \[ \cE[X+Y]=\cE[Y] \quad\mbox{for all } Y\in\cH. \]
 \end{enumerate}
 Moreover, if $\cE$ is sublinear, then the following statements are valid:
 \begin{enumerate}[resume]
     \item $\cE[X] - \cE[Y] \le \cE[X - Y]$ for all $X,Y \in \cH$.
     \item $|\cE[X]-\cE[Y]|\leq \cE[|X - Y|]$ for all $X,Y \in \cH$.
     \item $\cE[X + Y] \ge \cE[X] - \cE[-Y]$ for all $X,Y \in \cH$.
 \end{enumerate}
\end{lemma}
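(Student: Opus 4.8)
The plan is to read off each assertion directly from the defining properties of a convex expectation --- normalisation $\cE[c\one_\Omega]=c$, monotonicity, and convexity --- supplemented by subadditivity and positive homogeneity for the sublinear statements (vii)--(ix). All of (i)--(v) and (vii)--(ix) are one- or two-line manipulations, so the only item that requires a genuine idea is (vi), and that is where I would spend the effort.

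For the convex part I would argue as follows. Statement (i) is the cash-invariance of $\cE$: in the sublinear case it follows from normalisation and subadditivity, as already recalled in the text, and in general one notes that $c\mapsto\cE[X+c]-c$ is convex (the composition of $\cE$ with an affine map) and, by monotonicity, bounded between $\inf X$ and $\sup X$, hence a bounded convex function on $\R$ and therefore constant, equal to its value $\cE[X]$ at $c=0$. Statement (ii) follows from $X\le Y+\|X-Y\|_\infty\one_\Omega$, monotonicity and (i), together with the symmetric inequality. Statement (iii) is convexity applied to $\lambda X=\lambda X+(1-\lambda)\cdot 0$, combined with $\cE[0]=0$. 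Statement (iv) is convexity applied to $0=\tfrac12 X+\tfrac12(-X)$, giving $0\le\tfrac12\cE[X]+\tfrac12\cE[-X]$. Statement (v) combines monotonicity ($X\le|X|$ and $-X\le|X|$) with (iv): $\cE[X]\le\cE[|X|]$ and $-\cE[X]\le\cE[-X]\le\cE[|X|]$.

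The heart of the lemma is (vi), and here is the approach I would take. Fix $X\in\cH$ with $\cE[aX]=0$ for all $a\in\R$ and let $Y\in\cH$. First, convexity with equal weights and the hypothesis give the bound
\[ \cE[Y+nX]\le\tfrac12\cE[2Y]+\tfrac12\cE[2nX]=\tfrac12\cE[2Y]\qquad\text{for every }n\in\N, \]
which is crucially uniform in $n$. Next, writing $Y+X$ as the convex combination $\tfrac1n(Y+nX)+\tfrac{n-1}{n}Y$ and using convexity once more,
\[ \cE[Y+X]\le\tfrac1n\cE[Y+nX]+\tfrac{n-1}{n}\cE[Y]\le\tfrac{1}{2n}\cE[2Y]+\tfrac{n-1}{n}\cE[Y]\longrightarrow\cE[Y] \]
as $n\to\infty$, so $\cE[Y+X]\le\cE[Y]$. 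Applying this with $-X$ (which also satisfies the hypothesis) and with $Y+X$ in place of $Y$ yields $\cE[Y]=\cE[(Y+X)+(-X)]\le\cE[Y+X]$, hence equality. Finally, for the sublinear statements: (vii) is $\cE[X]=\cE[(X-Y)+Y]\le\cE[X-Y]+\cE[Y]$; (viii) follows from (vii), monotonicity ($X-Y\le|X-Y|$), and symmetry; and (ix) is $\cE[X]=\cE[(X+Y)+(-Y)]\le\cE[X+Y]+\cE[-Y]$. I expect the only real obstacle to be getting the uniform-in-$n$ estimate in (vi) together with the choice of convex combination that makes the limit work; everything else is routine bookkeeping with the axioms.
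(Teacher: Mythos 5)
Your treatment of (ii)--(ix) is correct. Items (vii)--(ix) coincide with the paper's own three-line proof (subadditivity applied to $X=(X-Y)+Y$, symmetry plus monotonicity, and $X=(X+Y)+(-Y)$). The paper outsources (i)--(vi) entirely to a citation of an external lemma, so your self-contained arguments are a genuine addition; in particular your proof of (vi) -- the uniform-in-$n$ bound $\cE[Y+nX]\le\tfrac12\cE[2Y]$ followed by the convex combination $Y+X=\tfrac1n(Y+nX)+\tfrac{n-1}{n}Y$ and the symmetrization with $-X$ -- is correct and is essentially a discretized version of the standard perspective-function argument.

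There is, however, one genuine gap: your proof of (i) only covers \emph{bounded} $X$. You argue that $c\mapsto\cE[X+c]-c$ is convex and ``bounded between $\inf X$ and $\sup X$'', hence constant; the upper bound $\cE[X+c]-c\le\sup X$ (which is what the ``convex and bounded above implies constant'' step actually needs) is simply unavailable when $X$ is unbounded above. This matters here: the statement quantifies over all $X\in\cH$, the space $\cH$ of Assumption~\ref{ass:main} contains random variables dominated only by $ae^{b|\zeta|}$, and cash invariance is invoked in the proof of Theorem~\ref{thm:main} precisely to pull the constant $1$ out of expressions such as $\cE\bigl[1+\alpha(Df(x)-\theta)^T\mu t+Y_t^{\theta,x}\bigr]$, where $Y_t^{\theta,x}$ is unbounded. (Note also that the authors deliberately restrict (ii), but not (i), to bounded arguments.) The fix is to replace the boundedness argument by the perspective function: for $\lambda\in(0,1)$, convexity and normalisation give $\cE[X+c]=\cE\bigl[\lambda\tfrac{X}{\lambda}+(1-\lambda)\tfrac{c}{1-\lambda}\bigr]\le\lambda\,\cE[X/\lambda]+c$; since $\lambda\mapsto\lambda\,\cE[X/\lambda]$ is a finite convex function on $(0,\infty)$, it is continuous at $\lambda=1$, whence $\cE[X+c]\le\cE[X]+c$ for every $X\in\cH$, and the reverse inequality follows by applying this to $X+c$ and $-c$. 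With that repair the whole lemma is established.
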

\begin{proof}
 The properties (i)-(vi) coincide with~\cite[Lemma B.2]{BK22}.\ Using the sublinearity of~$\cE$, 
 property~(vii) is obtained by rearranging the inequality
 \[ \cE[X]=\cE[X-Y+Y]\leq\cE[X-Y]+\cE[Y]. \]
 Property~(viii) follows from property~(vii) by changing the roles of $X$ and $Y$.\ 
 Finally, the sublinearity of $\cE$ implies $\cE[X]\leq\cE[X+Y]+\cE[-Y]$ showing 
 that property~(ix) is valid.
\end{proof}

\section{Exponential moment estimates}
\label{app:mgf}

\begin{lemma}
 Let $\zeta$ be bounded and write $\cE[\,\cdot\,]=\sup_{\Q\in\cQ}\E_\Q[\,\cdot\,]$.
 Assume that 
 \[ \E_\Q\big[(\theta^T\zeta)^{2k-1}\big]=0 \quad\mbox{for all } \theta\in\Rd \mbox{ and } k\in\N \] 
 and that there exists $\delta>0$ with $\cE[|\theta^T\zeta|^2]\geq\delta|\theta|^2$
 for all $\theta\in\Rd$.\ Then, the conditions~\eqref{eq:mgf1} and~\eqref{eq:mgf2} are satisfied.
\end{lemma}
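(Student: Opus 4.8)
The plan is to establish the two conditions separately: \eqref{eq:mgf1} is immediate from the boundedness of $\zeta$, while \eqref{eq:mgf2} is the substantive part. Throughout, fix $K\geq 0$ with $|\zeta|\leq K$ and write $\cE[\,\cdot\,]=\sup_{\Q\in\cQ}\E_\Q[\,\cdot\,]$. For \eqref{eq:mgf1}, note that $e^{t|\zeta|^2}\leq e^{tK^2}$ pointwise, so monotonicity of $\cE$ gives $\log\big(\cE[e^{t|\zeta|^2}]\big)\leq tK^2$, and \eqref{eq:mgf1} holds with $M:=K^2$ and any $t_1>0$ (neither the odd-moment nor the non-degeneracy hypothesis is needed for this part).

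For \eqref{eq:mgf2}, I would first strip off the quadratic term: since $e^{-t|\zeta|^2}\geq e^{-tK^2}$, the monotonicity and positive homogeneity of $\cE$ yield
\[ \cE\big[e^{\sqrt{t}\,\theta^T\zeta-t|\zeta|^2}\big]\geq e^{-tK^2}\,\cE\big[e^{\sqrt{t}\,\theta^T\zeta}\big]. \]
The key step is then the lower bound $\cE[e^{\sqrt{t}\,\theta^T\zeta}]\geq\cosh\big(\sqrt{\delta t}\,|\theta|\big)$. To obtain it, fix $\Q\in\cQ$, put $Y:=\theta^T\zeta$ (which is bounded), and use that the odd moments of $Y$ under $\Q$ vanish to expand the dominated series $\E_\Q[e^{\sqrt{t}\,Y}]=\sum_{k\geq 0}\tfrac{t^k}{(2k)!}\,\E_\Q[Y^{2k}]$. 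Jensen's inequality applied to $x\mapsto x^k$ gives $\E_\Q[Y^{2k}]=\E_\Q[(Y^2)^k]\geq(\E_\Q[Y^2])^k$, and comparing the series term by term with that of $\cosh$ shows $\E_\Q[e^{\sqrt{t}\,Y}]\geq\cosh\big(\sqrt{t}\,\sqrt{\E_\Q[Y^2]}\big)$. Taking the supremum over $\Q\in\cQ$, using that $r\mapsto\cosh(\sqrt{r})$ is increasing and that $\cE[|\theta^T\zeta|^2]\geq\delta|\theta|^2$, yields the claimed bound, hence
\[ \log\big(\cE\big[e^{\sqrt{t}\,\theta^T\zeta-t|\zeta|^2}\big]\big)\geq\log\cosh\big(\sqrt{\delta t}\,|\theta|\big)-tK^2. \]
Writing $x:=\sqrt{\delta t}\,|\theta|$ and $\varepsilon:=C\sqrt{t}/\sqrt{\delta}$, so that $Ct|\theta|=\varepsilon x$, it then suffices to find $R\geq 0$ and $t_2>0$ with $\log\cosh(x)-\varepsilon x\geq tK^2$ for all $t\in(0,t_2]$ and $x\geq R\sqrt{\delta t}$ (the case $t=0$ being trivial since $\cE[1]=1$).

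This last reduction is an elementary one-variable estimate carried out by splitting into two regimes. For $x\geq L:=2(K^2+\log 2)$, the bound $\log\cosh(x)\geq x-\log 2$ combined with $\varepsilon\leq\tfrac12$ (which holds once $t_2\leq\delta/(4C^2)$) gives $\log\cosh(x)-\varepsilon x\geq x/2-\log 2\geq K^2\geq tK^2$ whenever $t\leq 1$. For $x\in[R\sqrt{\delta t},L]$, I would use that $x\mapsto x^{-2}\log\cosh(x)$ is decreasing, so $\log\cosh(x)\geq c\,x^2$ on $[0,L]$ with $c:=L^{-2}\log\cosh(L)>0$; then $\log\cosh(x)-\varepsilon x\geq x(cx-\varepsilon)$, and taking $R\geq 2C/(c\delta)$ forces $cx\geq cR\sqrt{\delta t}\geq 2\varepsilon$ on this range, whence $\log\cosh(x)-\varepsilon x\geq\tfrac12 c x^2\geq\tfrac12 c\delta R^2 t\geq tK^2$ once additionally $R\geq 2K/\sqrt{c\delta}$. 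Thus $R:=\max\big(2C/(c\delta),\,2K/\sqrt{c\delta}\big)$ and $t_2:=\min\big(1,\delta/(4C^2)\big)$ work.

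The main obstacle is the uniformity in $t$ required by \eqref{eq:mgf2}: keeping only the second-order term gives $\cE[e^{\sqrt{t}\,\theta^T\zeta}]\geq 1+\tfrac12\delta t|\theta|^2$, which is merely polynomial in $|\theta|$ and fails to produce a fixed $R$ as $t\downarrow 0$. One genuinely needs the exponential lower bound $\cosh(\sqrt{\delta t}\,|\theta|)$ — which is where the vanishing of the odd moments enters, through the identity $\E_\Q[e^{\sqrt{t}\,Y}]=\E_\Q[\cosh(\sqrt{t}\,Y)]$ — and then the two-regime analysis of $\log\cosh$ to extract $R$ and $t_2$ that are simultaneously valid for every $t\in[0,t_2]$.
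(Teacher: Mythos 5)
Your proof is correct and follows essentially the same route as the paper's: bound \eqref{eq:mgf1} by $\sup|\zeta|^2$, obtain $\cE[e^{\sqrt{t}\theta^T\zeta}]\geq\cosh(\sqrt{\delta t}\,|\theta|)$ from the vanishing odd moments together with Jensen's inequality applied termwise to the exponential series, strip off the quadratic term by boundedness, and finish with an elementary lower bound on $\log\cosh$. The only (immaterial) difference is in that last step, where the paper argues via the derivative of $\psi_t(x)=\log\cosh(\sqrt{t}x)-Ctx$ and Taylor expansions of $\tanh$ and $\log\cosh$ near zero to take $R=3C$, while you use the explicit two-regime bounds $\log\cosh(x)\geq x-\log 2$ and $\log\cosh(x)\geq cx^2$ with explicit constants; both are valid.
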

\begin{proof}
 Let $M:=\sup_{\omega\in\Omega}|\zeta(\omega)|^2$. Regarding condition~\eqref{eq:mgf1}, 
 we observe that 
 \[ \log(\cE[e^{t|\zeta|^2}])\leq Mt \quad\mbox{for all } t\geq 0. \]
 In order to verify condition~\eqref{eq:mgf2}, let $t\geq 0$ and $\theta\in\Rd$.\
 For every $\Q\in\cQ$, the dominated convergence theorem and Jensen's inequality imply
 \begin{align*}
  \cE\big[e^{\sqrt{t}\theta^T\zeta}\big]
  &\geq\E_\Q\big[e^{\sqrt{t}\theta^T\zeta}\big]
  =\sum_{k=0}^\infty\frac{\E_\Q[(\sqrt{t}\theta^T\zeta)^k]}{k!}
  =1+\sum_{k=1}^\infty\frac{\E_\Q[(\sqrt{t}\theta^T\zeta)^{2k}]}{(2k)!} \\
  &\geq 1+\sum_{k=1}^\infty\frac{\E_\Q[(\sqrt{t}\theta^T\zeta)^2]^k}{(2k)!} 
  =\cosh\big(\E_\Q[(\sqrt{t}\theta^T\zeta)^2]^{1/2}\big).
 \end{align*}
 Since hyperbolic cosine is a continuous function, we can choose a sequence $(\Q_n)_{n\in\N}\subset\cQ$ 
 with $\E_{\Q_n}[(\sqrt{t}\theta^T\zeta)^2]\to\cE[(\sqrt{t}\theta^T\zeta)^2]$ to obtain 
 \begin{equation} \label{eq:cosh1}
  \cE\big[e^{\sqrt{t}\theta^T\zeta}\big]
  \geq\cosh\big(\cE[(\sqrt{t}(\theta^T\zeta)^2]^{1/2}\big)
  \geq\cosh(\sqrt{t}\sqrt{\delta}|\theta|).
 \end{equation}

 It remains to show that, for every $C\geq 0$, there exist $t_0>0$ and $R\geq 0$ with
 \begin{equation} \label{eq:cosh2}
  \log(\cosh(\sqrt{t}x))\geq Ctx \quad\mbox{for all } t\in [0,t_0] \mbox{ and } x\geq R.
 \end{equation}
 For every $t\geq 0$, we consider the function $\psi_t\colon\R\to\R,\; x\mapsto\log(\cosh(\sqrt{t}x))-Ctx$.
 It holds
 \[ \psi_t'(x)=\sqrt{t}\tanh(\sqrt{t}x)-Ct\geq\sqrt{t}\tanh(\sqrt{t}2C)-Ct 
    \quad\mbox{for all } x\geq 2C. \]
 Since $\tanh(z)=z+o(z)$ for $z\to 0$, there exists $t_1>0$ with 
 \[ \psi_t'(\sqrt{t}x)\geq\frac{\sqrt{t}\sqrt{t}2C}{2}-Ct=0 
    \quad\mbox{for all } t\in [0,t_1] \mbox{ and } x\geq 2C. \]
 In addition, since $\log(\cosh(z))=\frac{z^2}{2}+o(z^2)$, there exists $t_2>0$ with
 \[ \psi_t(3C)\geq\frac{(3C\sqrt{t})^2}{3}-3C^2t=0 \quad\mbox{for all } t\in [0,t_2]. \]
 This shows that inequality~\eqref{eq:cosh2} is valid with $t_0:=t_1\wedge t_2$ and $R:=3C$.
 Hence, for every $C\geq 0$, it follows from inequality~\eqref{eq:cosh1} and~\eqref{eq:cosh2} 
 that there exists $t_0>0$ and $R\geq 1$ with 
 \[ \log\big(\cE\big[e^{\sqrt{t}\theta^T\zeta-t|\zeta|^2}\big]\big)
    \geq\log\big(e^{-Mt}\cosh(\sqrt{t}\sqrt{\delta}|\theta|)\big)
    \geq C|\theta|t \]
 for all $t\in [0,t_0]$ and $|\theta|\geq R$. This shows that condition~\eqref{eq:mgf2} is valid. 
\end{proof}

\begin{lemma}
 Let $\Lambda$ be a bounded set of positive semi-definite symmetric $d\times d$ matrices and
 define
 \[ \cE[f(\zeta)]:=\sup_{\Sigma\in\Lambda}\int_{\Rd}f(y)\,\cN(0,\Sigma)(\d y) \quad\mbox{for all } f\in\Cb, \]
 where $\cN(0,\Sigma)$ denotes the normal distribution with mean zero and covariance matrix $\Sigma$.
 Then, condition~\eqref{eq:mgf1} is satisfied. Furthermore, if there exists $\delta>0$ with
 \[ \sup_{\Sigma\in\Lambda}\theta^T\Sigma\theta\geq\delta |\theta|^2 \quad\mbox{for all } \theta\in\Rd, \]
 then condition~\eqref{eq:mgf2} is satisfied as well.
\end{lemma}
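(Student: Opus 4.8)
The plan is to reduce both conditions to explicit formulas for the moment generating function of a (possibly degenerate) Gaussian and then take the supremum over $\Lambda$. Set $L:=\sup_{\Sigma\in\Lambda}\|\Sigma\|<\infty$, an upper bound for every eigenvalue of every $\Sigma\in\Lambda$; we may assume $L>0$ since otherwise $\cE[f(\zeta)]=f(0)$ and both conditions are trivial (the hypothesis of~\eqref{eq:mgf2} being vacuous for $\theta\neq 0$). Put $t_1:=\tfrac{1}{4L}$, so that $2t\lambda\leq\tfrac12$ for all $t\in[0,t_1]$ and all eigenvalues $\lambda\in[0,L]$. For~\eqref{eq:mgf1} I would use the classical identity
\[ \int_{\Rd}e^{t|y|^2}\,\cN(0,\Sigma)(\d y)=\det(I-2t\Sigma)^{-1/2}=\prod_{i=1}^d(1-2t\lambda_i)^{-1/2}, \]
valid for $2t\lambda_i<1$, where $\lambda_1,\dots,\lambda_d\geq 0$ are the eigenvalues of $\Sigma$ and $(1-0)^{-1/2}=1$ renders degeneracy harmless; taking logarithms and using $-\log(1-x)\leq 2x$ on $[0,\tfrac12]$ gives $\log\int_{\Rd}e^{t|y|^2}\,\cN(0,\Sigma)(\d y)\leq 2t\,\Tr(\Sigma)\leq 2dL\,t$ for $t\in[0,t_1]$, and the supremum over $\Sigma\in\Lambda$ yields~\eqref{eq:mgf1} with $M:=2dL$.

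For~\eqref{eq:mgf2} the core computation is the analogous Gaussian integral carrying a linear term. Diagonalising $\Sigma=U\operatorname{diag}(\lambda_1,\dots,\lambda_d)U^T$ and writing $\tilde\theta:=U^T\theta$, completing the square in each coordinate gives
\[ \int_{\Rd}e^{\sqrt t\,\theta^Ty-t|y|^2}\,\cN(0,\Sigma)(\d y)=\prod_{i=1}^d\frac{1}{\sqrt{1+2t\lambda_i}}\exp\!\Big(\frac{t\,\tilde\theta_i^2\lambda_i}{2(1+2t\lambda_i)}\Big), \]
with factors equal to $1$ when $\lambda_i=0$. For $t\in[0,t_1]$ one has $1+2t\lambda_i\leq\tfrac32$, hence $\tfrac{\lambda_i}{1+2t\lambda_i}\geq\tfrac23\lambda_i$; together with $\log(1+x)\leq x$ this yields
\[ \log\int_{\Rd}e^{\sqrt t\,\theta^Ty-t|y|^2}\,\cN(0,\Sigma)(\d y)\geq\frac{t}{3}\sum_{i=1}^d\tilde\theta_i^2\lambda_i-t\,\Tr(\Sigma)\geq\frac{t}{3}\,\theta^T\Sigma\theta-dL\,t. \]
Since $\log$ is increasing and $-dL\,t$ is independent of $\Sigma$, taking the supremum over $\Sigma\in\Lambda$ and using $\sup_{\Sigma\in\Lambda}\theta^T\Sigma\theta\geq\delta|\theta|^2$ gives $\log\cE[e^{\sqrt t\,\theta^T\zeta-t|\zeta|^2}]\geq\tfrac{\delta t}{3}|\theta|^2-dL\,t$ for all $t\in[0,t_1]$. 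Given $C\geq 0$, the quadratic inequality $\tfrac\delta3 r^2-Cr-dL\geq 0$ holds for all $r\geq R$ with $R$ its larger root, so the last bound is $\geq C|\theta|\,t$ whenever $|\theta|\geq R$; this is~\eqref{eq:mgf2} with $t_2:=t_1$.

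The only points needing care are bookkeeping ones: that both Gaussian identities remain valid for singular $\Sigma$ (handled by the zero-eigenvalue convention, or by realising $\cN(0,\Sigma)$ as the law of $\Sigma^{1/2}Z$ with $Z$ standard Gaussian and expanding coordinatewise), and that $\sup_{\Sigma\in\Lambda}$ commutes with $\log$ and may be pulled past the $\Sigma$-independent constant --- both immediate from monotonicity of the logarithm. I do not expect any genuine analytic obstacle; the whole argument rests on the explicit Gaussian integrals plus the elementary estimates for $\log(1\pm x)$.
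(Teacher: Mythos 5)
Your proposal is correct and follows essentially the same route as the paper: explicit evaluation of the two Gaussian integrals (you diagonalise $\Sigma$ where the paper works with $\det(\one\pm 2t\Sigma)$ and a symmetric square root, which is the same computation), a uniform eigenvalue bound over $\Lambda$ to control the determinant factors, and a quadratic-in-$|\theta|$ inequality to extract~\eqref{eq:mgf2}. Your handling of singular $\Sigma$ and the elementary bound $-\log(1-x)\leq 2x$ are slightly more careful in the bookkeeping, but the argument is the same.
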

\begin{proof}
 Denote by $M:=\sup_{\Sigma\in\Lambda}\sup_{|x|=1}x^T\Sigma x$ the largest eigenvalue 
 among all covariance matrices and choose $t_0\in (0,1/(2M))$.\ In particular, for every 
 $t\in [0,t_0]$ and $\Sigma\in\Lambda$, the matrix $\one-2t\Sigma$ is invertible and 
 satisfies $\det(\one-2t\Sigma)\geq (1-2Mt)^d$. We obtain 
 \begin{align*}
  \int_{\Rd}e^{t|x|^2}\,\cN(0,\Sigma)(\d x)
  &=\frac{1}{(2\pi)^{d/2}}\int_{\Rd}e^{t|\sigma x|^2}e^{-\frac{|x|^2}{2}}\,\d x \\
  &=\frac{\det(\tilde{\Sigma})^{1/2}}{(2\pi)^{d/2}\det(\tilde{\Sigma})^{1/2}}
    \int_{\Rd}e^{-\frac{x^T(\tilde{\Sigma}^{-1})x}{2}}\,\d x \\
  &=\det(\tilde{\Sigma})^{1/2}\leq\frac{1}{(1-2Mt)^{d/2}}\leq e^{Mdt}
 \end{align*}
 for all $t\in [0,t_0]$ and $\Sigma\in\Lambda$, where $\tilde{\Sigma}:=(\one-2t\Sigma)^{-1}$
 and $\sigma\in\R^{d\times d}$ is a symmetric matrix with $\sigma^2=\Sigma$. This shows 
 that condition~\eqref{eq:mgf1} is valid. 
 
 In order to verify condition~\eqref{eq:mgf2}, let $\theta\in\Rd$ and $\Sigma\in\Lambda$.
 For every $t\in [0,t_0]$, we use that $\one+2t\Sigma$ is invertible and 
 $\det(\one+2t\Sigma)\leq (1+2Mt)^d$ to obtain 
 \begin{align*}
  \cE\big[e^{\sqrt{t}\theta^T\zeta-t|\zeta|^2}\big] 
  &\geq\frac{1}{(2\pi)^{d/2}}\int_{\Rd}e^{\sqrt{t}\theta^T\sigma x-t|\sigma x|^2}e^{-\frac{|x|^2}{2}}\,\d x \\
  &=\frac{\det(\tilde{\Sigma})^{1/2}}{(2\pi)^{d/2}\det(\tilde{\Sigma}))^{1/2}}
    \int_{\Rd}e^{\sqrt{t}\theta^T\sigma x} e^{-\frac{1}{2}x^T\tilde{\Sigma}^{-1} x}\,\d x \\
  &=\det(\tilde{\Sigma}^{-1})^{-1/2}e^{\frac{1}{2}t (\sigma\theta)^T\tilde{\Sigma}\sigma\theta} \\
  &\geq (1+Mt)^{-d/2}e^{\frac{1}{2}t (\sigma\theta)^T\tilde{\Sigma}\sigma\theta}
    \geq e^{-Mdt}e^{\frac{1}{2}t (\sigma\theta)^T\tilde{\Sigma}\sigma\theta},
 \end{align*}
 where $\tilde{\Sigma}:=(\one+2t\Sigma)^{-1}$ and $\sigma\in\R^{d\times d}$ is a symmetric matrix with
 $\sigma^2=\Sigma$. Moreover,
 \[ \sup_{\Sigma\in\Lambda}(\sigma\theta)^T\tilde{\Sigma}\sigma\theta
    \geq\sup_{\Sigma\in\Lambda}\frac{|\sigma^T \theta|^2}{1+2Mt}
    =\sup_{\Sigma\in\Lambda}\frac{\theta^T\Sigma\theta}{1+2Mt}\geq\frac{\delta |\theta|^2}{1+2Mt_0}
    \quad\mbox{for all } t\in [0,t_0]. \]
 Hence, for every $C\geq 0$, there exists $R\geq 0$ with
 \[ \log\big(\cE\big[e^{\sqrt{t}\theta^T\zeta-t|\zeta|^2}\big]\big)
    \geq\frac{\delta |\theta|^2 t}{2+4Mt_0}-Mdt\geq C|\theta|t \]
 for all $t\in [0,t_0]$ and $|\theta|\geq R$. This shows that condition~\eqref{eq:mgf2} is satisfied. 
\end{proof}

\bibliographystyle{abbrv}

\end{document}